\documentclass[11pt,aps,pra,onecolumn,superscriptaddress,floatfix,nofootinbib,showpacs,longbibliography]{revtex4-2}
\usepackage[utf8]{inputenc}  
\usepackage[T1]{fontenc}     
\usepackage[british]{babel}  
\usepackage[sc,osf]{mathpazo}
\usepackage{libertineRoman}  
\usepackage[colorlinks=true, citecolor=blue, urlcolor=blue]{hyperref}  
\usepackage{graphicx}
\usepackage{diagbox}
 \usepackage{enumitem}
 \usepackage[babel]{microtype}  
\usepackage{amsmath,amssymb,amsthm,bm,amsfonts,mathrsfs,bbm} 

\usepackage{xspace}  
\usepackage{pgf,tikz}
\usepackage{xcolor}
\usepackage{multirow}
\usepackage{array}
\usepackage{bigstrut}
\usepackage{braket}
\usepackage{color}
\usepackage{natbib}
\usepackage{multirow}
\usepackage{mathtools}
\usepackage{float}
\usepackage{xcolor,colortbl}
\usepackage{physics}
\usepackage{amsmath}
\usepackage{color}
\usepackage[justification=justified, format=plain]{subcaption}
\usepackage[justification=raggedright]{caption}

\newcommand{\be}{\begin{equation}}
\newcommand{\ee}{\end{equation}}
\newcommand{\ba}{\begin{eqnarray}}
\newcommand{\ea}{\end{eqnarray}}

\newtheorem{theorem}{Theorem}
\newtheorem{corollary}{Corollary}[theorem]
\newtheorem{definition}{Definition}
\newtheorem{proposition}{Proposition}

\newtheorem{remark}{Remark}
\newtheorem{lemma}{Lemma}

\newtheorem{conjecture}{Conjecture}
\DeclareMathOperator{\id}{id}

\newcommand{\biid}[1]{\operatorname{BIID}_{#1}}

\def\>{\rangle}
\def\<{\langle}

\begin{document}

\title{
Quantum hypothesis testing of non-mixed-unitarity: \\A multifaceted hierarchy of quantum channel discrimination}

\author{Pratik Ghosal}
\affiliation{Harish-Chandra Research Institute, Chhatnag Road, Jhunsi, Prayagraj - 211019, India}
\affiliation{Homi Bhabha National Institute, Training School Complex, Anushakti Nagar, Mumbai 400 094, India}

\author{Pritam Halder}
\affiliation{Harish-Chandra Research Institute, Chhatnag Road, Jhunsi, Prayagraj - 211019, India}
\affiliation{Homi Bhabha National Institute, Training School Complex, Anushakti Nagar, Mumbai 400 094, India}
\affiliation{Networked Quantum Devices Unit, Okinawa Institute of Science and Technology Graduate University, Okinawa, Japan}

\author{Ayan Patra}
\affiliation{Harish-Chandra Research Institute, Chhatnag Road, Jhunsi, Prayagraj - 211019, India}
\affiliation{Homi Bhabha National Institute, Training School Complex, Anushakti Nagar, Mumbai 400 094, India}
\affiliation{IT:U Interdisciplinary Transformation University, Freistädter Strasse 400, 4040 Linz, Austria}

\author{Aditi Sen(De)}
\affiliation{Harish-Chandra Research Institute, Chhatnag Road, Jhunsi, Prayagraj - 211019, India}
\affiliation{Homi Bhabha National Institute, Training School Complex, Anushakti Nagar, Mumbai 400 094, India}


\begin{abstract}
Given multiple uses of an unknown quantum channel, we determine whether it is a specified non-mixed-unitary channel or belongs to the set of mixed-unitary channels. Formulating this as a composite channel hypothesis testing problem, we characterize the Stein exponents achievable by parallel strategies under progressively weaker restrictions on probe states and auxiliary memory. In particular, we consider block-i.i.d. probes, which allow arbitrary correlations within blocks of fixed size while remaining i.i.d. across blocks, and derive finite-letter expressions of the corresponding Stein exponents. Varying the block size, which interpolates between fully i.i.d. and arbitrary probes, together with further restrictions on intra-block correlations and access to auxiliary memory, results in a multifaceted hierarchy. We establish several strict separations within this hierarchy. Without auxiliary memory, we prove that fully i.i.d. probes yield a vanishing Stein exponent for every unital non-mixed-unitary channel, while blocks of three product probes suffice to obtain a strictly positive exponent for the qutrit Werner-Holevo channel. For block size two, although product probes seem to be insufficient for $O(3)$-covariant channels, an entangled probe achieves a strictly positive exponent for the qutrit Werner-Holevo channel, even though every maximally entangled probe fails. In contrast, auxiliary memory makes fully i.i.d. probes sufficient by yielding a strictly positive exponent for every non-mixed-unitary channel, even without requiring input–reference entanglement, and this exponent admits a lower bound determined by the diamond-norm distance from the mixed-unitary set. Moreover, for odd-dimensional Werner-Holevo channels, we find that this exponent is infinite, with every full-Schmidt-rank pure state being optimal.

\end{abstract}

\maketitle

\section{Introduction}
\label{sec:intro}

Determining whether a given quantum state, channel, or, more generally, a quantum process possesses a particular quantum property is a central problem in quantum information theory \cite{eisert2020quantum, Friis2019, Berta2024, PhysRevLett.110.060405, PhysRevLett.121.180503, nlz1-h6qr, PhysRevLett.107.050502, PhysRevX.8.021033, bwd5-wx7j, xry9-j4g6, Krawiec2021, grinko2026sample}.
Quantum channels provide the most general discrete-time evolution of quantum systems, including the effects of interactions with an environment, and therefore play a fundamental role in characterizing quantum devices and noise processes relevant to quantum information-processing tasks.
Mathematically, they are represented by linear completely positive and trace-preserving (CPTP) maps between the sets of linear operators acting on the complex Hilbert spaces of the initial and final systems \cite{watrous2018theory}. In the special case of an isolated quantum system,  the system's evolution is reversible and can be described by a \emph{unitary quantum channel} ($\mathcal{U}:\rho \mapsto \mathcal{U}(\rho)=U\rho U^{\dagger}$, where $U$ is a unitary operator).
A natural generalization of unitary channels is provided by \emph{mixed-unitary channels}.
A quantum channel $\Phi$ is said to be mixed-unitary if it can be expressed as a convex combination of unitary channels, i.e., there exists a probability vector $(p_i)$ and an ensemble of unitary channels $\{\mathcal{U}_i\}$ such that $\Phi=\sum_i p_i\,\mathcal{U}_i$. Unlike general quantum channels, mixed-unitary channels admit a simple physical interpretation: they can be understood as arising from classical uncertainty regarding an underlying reversible evolution. Furthermore, they serve as a natural example of channels that preserve the maximally mixed state, known as {\it unital channels}. Such channels can thus be viewed as the quantum counterparts of doubly stochastic maps in classical probability theory.
In classical theory, Birkhoff's theorem 
guarantees that every doubly stochastic map can be expressed as a probabilistic mixture of reversible maps \cite{Birkhoff1946, Neumann1953}. The analogous statement for quantum channels, however, holds only in two-dimension while in dimensions three or higher, mixed-unitary channels constitute a strict subset of unital channels \cite{Landau1993, Mendl2009, Haagerup2011, Haagerup2015} (see also \cite{Chiribella2017, chakraborty2024asymptotic}).


The distinction between mixed-unitary and non-mixed-unitary dynamics has important operational consequences, reflecting the fundamentally different origins of their irreversibility. For example, mixed-unitary dynamics allow for perfect recovery of quantum information leaked from a system to its environment using correction schemes based solely on classical outcomes of measurements performed on the environment, while  such recovery is not generally possible for non-mixed-unitary channels \cite{gregoratti2003quantum, gregoratti2004quantum}. This distinction also has important implications for the study of memory effects in non-Markovian quantum dynamics \cite{PhysRevLett.132.060402}. In particular, it has recently been shown that a two-step dynamics $\mathcal{D}=(\Phi,\mathrm{id})$ is realizable with a classical memory if and only if $\Phi$ is mixed-unitary \cite{dn6t-y9ky}. Consequently, whenever $\Phi$ is non-mixed-unitary, the dynamics necessarily require genuinely quantum memory. Furthermore, non-mixed-unitarity of the reduced system dynamics has recently been identified  as a witness of system-environment entanglement in pure-dephasing dynamics \cite{lin2025resource}. These results establish that non-mixed-unitarity is intimately connected to genuinely quantum features of open-system dynamics, motivating the problem of operationally discriminating non-mixed-unitary channels from their mixed-unitary counterparts, which is the focus of this work. Beyond theoretical framework, quantum channel discrimination has been demonstrated on experimental platforms like photonic setups, where both unitary and nonunitary processes have been discriminated~\cite{PhysRevLett.102.160502, agresti2019experimental}, and on trapped ions~\cite{PhysRevLett.131.170602}.


Similar to the entanglement versus separability problem~\cite{RevModPhys.81.865}, despite the simple definition of mixed-unitary channels determining whether a given quantum channel is mixed-unitary or not is far from straightforward. Since the set of mixed-unitary channels $\mathsf{MU}$ is convex and compact, the Hahn-Banach separation theorem guarantees that every non-mixed-unitary channel has a witness that distinguishes it from $\mathsf{MU}$ \cite{Mendl2009, PhysRevA.93.032132}. 
However, constructing such a witness for a given channel is itself a nontrivial problem, analogous to constructing an entanglement witness for a given quantum state \cite{RevModPhys.81.865, bruss2002reflections, guhne2009entanglement}. A complete characterization of the mixed-unitary region is known only for certain specific parametric families of channels \cite{Mendl2009}, while the computational complexity of the general mixed-unitarity membership problem is known to be NP-hard \cite{Lee2020detectingmixed}.
More recently, semidefinite programming-based witnesses derived from the theory of quantum memory in non-Markovian dynamics have been proposed \cite{dn6t-y9ky}.

In this work, we take a different approach by formulating the testing of non-mixed-unitarity as a \emph{quantum hypothesis testing problem}, and we characterize the fundamental limits of discriminating a given non-mixed-unitary channel from the set of mixed-unitary channels under various operational restrictions.


\subsection{Problem Statement}

Let $\mathcal{N}$ be a fixed non-mixed unitary channel. Given $n$ independent and identical uses of an unknown quantum channel $\mathcal{E}$, we consider the problem of determining whether $\mathcal{E}$ is the specified channel $\mathcal{N}$ or instead belongs to the set of mixed-unitary channels $\mathsf{MU}$, i.e., 
the task is to distinguish between the following hypotheses:
\begin{align*}
&\text{Null hypothesis}~(H_0):\quad
\mathcal{E}^{\otimes n}\in\mathsf{MU}^{\mathrm{iid}}_n:=\left\{\Phi^{\otimes n}:\Phi\in\mathsf{MU}\right\},\\
\text{and}~~&\text{Alternative hypothesis}~(H_1):\quad\mathcal{E}^{\otimes n}=\mathcal{N}^{\otimes n}.
\end{align*}
This constitutes a binary composite quantum hypothesis testing problem in the independent and identically distributed (i.i.d.) setting \cite{Berta2021on,bergh2023composite}. The i.i.d. structure reflects the assumption that the same unknown channel is used independently in each instance. We are ultimately interested in the asymptotic limit $n\to\infty$.\\

\noindent\textbf{Setting.--} The above discrimination task can be approached using different strategies \cite{PhysRevLett.127.200504, fang2025towards}. One may employ a \emph{parallel strategy}, in which the $n$ channel uses are applied simultaneously to an $n$-partite probe state. Further, an auxiliary perfect side channel (or memory) of suitable dimension can be used, which acts on a reference system that may be entangled with the probe state. The resulting output state is then subjected to a two-outcome quantum measurement, whose outcome determines the decision between the hypotheses $H_0$ and $H_1$. Alternatively, one may employ an \emph{adaptive strategy}, in which the channel uses are arranged sequentially\footnote{Adaptive strategy: An input probe state, potentially entangled with a reference system, is supplied to the first use of the channel, while the reference system is transmitted through the auxiliary side channel. The resulting output is then processed by an intermediate update (or control) channel before being fed into the next use of the channel. This procedure is repeated between successive channel uses, and the output obtained after the final channel use is subjected to a two-outcome quantum measurement.}, and hence
are more general, containing parallel strategies as a special case. However,  the parallel ones are considerably simpler to analyze. In particular, determining an optimal parallel strategy requires optimization only over the input probe state and the final measurement, whereas adaptive ones involve additional optimization over intermediate update channels. In this paper, we restrict our attention to parallel discrimination strategies, both with and without auxiliary memory.

Even within the class of parallel strategies, there may be additional restrictions on the admissible probe states. One natural restriction concerns the correlation structure across different channel uses.  The most general parallel strategy allows arbitrary probe states $\rho_{(n)}$ (potentially correlated with a reference system $R'$) as input across $n$ channel uses (see Fig.~\ref{fig:parallel}). However, in the asymptotic limit, these general probe states generally lead to a regularized expression for the Stein exponent (discussed subsequently) \cite{bergh2023composite}. We therefore consider probe states possessing a block-i.i.d. structure,
\begin{align}
    \rho_{(n)}=\sigma_{(k)}^{\otimes m},\qquad n=mk,\label{block}
\end{align}
where $\sigma_{(k)}$ is the probe state (potentially correlated with a reference system $R$ in each block) used as input across a block of $k$ channel uses. Such states permit arbitrary correlations (both classical and quantum) within each block of size $k$, while requiring the different blocks to be independent and identically distributed. This setting naturally captures scenarios in which the use of multipartite entanglement is restricted to finite-size blocks of channel uses, with different blocks remaining separable. The special case $k=1$ corresponds to fully i.i.d. probe states of the form $\rho^{\otimes n}$, whereas the case $k=n$, and hence in the asymptotic limit $k\to\infty$, recovers the most general arbitrary probe setting. One may further impose restrictions on the entanglement structure of the block state $\sigma_{(k)}$, like separability across suitable partitions.\\

\noindent\textbf{Figure of merit.--} In order to measure the success of any hypothesis testing task one typically compute two types of errors. A Type I error occurs when the null hypothesis $H_0$ is rejected even though it is true, whereas in the case of Type II error, the null hypothesis is accepted even though the alternative hypothesis $H_1$ is true \cite{HiaiPetz1991, OgawaNagaoka2000}. In the present setting, a Type I (false positive) error corresponds to concluding that the given channel is non-mixed-unitary when in fact it is mixed-unitary, while a Type II (false negative) error corresponds to concluding that the given channel is mixed-unitary when in fact it is non-mixed-unitary. Our aim is, therefore, to keep the false positive error probability below a prescribed threshold $\epsilon$, ideally close to zero, thereby avoiding incorrect detection of a mixed-unitary channel as non-mixed-unitary. At the same time, we want the testing protocol to be as powerful as possible, by minimizing the false negative error probability so as not to overlook genuine non-mixed-unitary channels. This naturally leads to the asymmetric hypothesis testing setting. Our central figure of merit is the \emph{Stein exponent}, which quantifies the optimal asymptotic exponential decay rate of the Type II error probability while keeping the Type I error probability at most $\epsilon$ for all $n$ \cite{HiaiPetz1991, OgawaNagaoka2000}.

\subsection{Summary of Results}

We first briefly discuss composite i.i.d. quantum channel hypothesis testing and the relevant probe settings within the parallel discrimination strategy (Sec~\ref{sec:compchannel}). Next, we develop a general framework for testing a specified channel against a closed convex set of channels using block-i.i.d. probes of the form given by Eq.~\eqref{block} (Sec.~\ref{sec:block-iid}). For every fixed block size $k$, we derive expressions for the memoryless and memory-assisted Stein exponents, denoted by $E^{k}$ and $\widetilde{E}^{k}$, respectively, in terms of the output quantum relative entropy. Unlike the case of arbitrary probes, the corresponding Stein exponents for block-i.i.d. probes admit finite-letter characterizations for every fixed $k$. This framework bridges fully i.i.d. strategies ($k=1$) and arbitrary probe strategies through regularization over the block size, and induces a hierarchy of probe strategies parametrized by the block size $k$. In particular, for every fixed $k>1$,
\(E^{1}\leq E^{k}\leq E\),
where $E$ denotes the Stein exponent with arbitrary probes, and the same relation holds in the memory-assisted setting. Moreover, for every fixed block size $k$, \(E^{k}\leq\widetilde{E}^{k}\). We then consider further restrictions on the correlations between channel inputs within each block. In particular, we compute the Stein exponents for the cases where the probes within each block are arbitrarily correlated with no restrictions (\(E^{k}\)), fully separable (\(E^k_{\text{Sep}}\)), and product with no classical correlations as well (\(E^k_{\text{Pro}}\)). These give rise to a further hierarchy,
\(E_{\mathrm{Pro}}^{k}\leq E_{\mathrm{Sep}}^{k}\leq E^{k}\),
with analogous relations for the corresponding memory-assisted exponents.

We specialize the framework to the case of a mixed-unitary null hypothesis (see Sec.~\ref{sec:MU-hypothesis}). Taking a resource-theoretic perspective in which the set of mixed-unitary channels is the free set, we define a simple class of free superchannels. We prove that all the exponents introduced above are nonincreasing under the action of this class of superchannels. Thus, these exponents serve as operational resource monotones for non-mixed-unitarity.

We examine the multifaceted hierarchy 
for the case of non-mixed-unitary channels (Sec.~\ref{sec:MU-hierarchy}). We present instances where these hierarchies are strict. In  the memoryless scenario, we prove that fully i.i.d. probes yield a vanishing Stein exponent for every non-mixed-unitary unital channel. Increasing the block size to two, we find that product probes seem to be insufficient to give a nonzero exponent for $O(3)$-covariant channels, which constitute a subset of qutrit unital channels. In particular, we analytically prove that for several broad classes of pair of qutrit states $\{\rho_1,\rho_2\}$, there exists a mixed-unitary channel $\Phi$ --- dependent on the pair --- satisfying \(\Phi(\rho_i)=\mathcal N(\rho_i),~i=1,2,\) for every $O(3)$-covariant channel $\mathcal{N}$. Numerical evidence further supports that this simultaneous simulation holds for arbitrary pairs of qutrit states. We therefore conjecture that the memoryless Stein exponent for product probes of block size two vanishes for $O(3)$-covariant channels. For the qutrit Werner-Holevo channe~\cite{Werner2002}, which is a particular $O(3)$-covariant channel, we prove that blocks containing three product probes yield a positive Stein exponent, thereby establishing a strict advantage of increasing the block size.

Entanglement between channel inputs (Sec. \ref{subsec:entanglement_advantage}) and/or access to auxiliary memory (Sec. \ref{subsec:memory_advantage}) can provide other possible routes to overcoming the apparent limitations of block-i.i.d. probes with blocks of size two. In the former case, we find that a Schmidt-rank-two entangled probe yields a positive Stein exponent
for the qutrit Werner--Holevo channel. Conditional on the conjecture of vanishing Stein exponent for product probes of block size two, this establishes the strict advantage of entanglement between channel inputs. 
However, we also show that every maximally entangled two-qutrit probe gives a vanishing exponent for the same Werner-Holevo channel, thereby indicating that the advantage of entanglement does not scale with the amount of entanglement. 
In the memory-assisted case,  we prove that, with access to auxiliary memory, fully i.i.d. memory-assisted probes achieve a strictly positive Stein exponent for every non-mixed-unitary channel with the universal lower bound given by the diamond norm distance between the channel and set of mixed-unitary channels.  
This establishes a strict advantage of auxiliary memory, which, interestingly, does not require entanglement between the input and reference systems: suitably chosen separable but correlated probes can also provide a strictly positive Stein exponent. Moreover, for odd-dimensional Werner-Holevo channels and their unitary equivalents, we determine the exact Stein exponent in the memory-assisted fully i.i.d. setting, 
which turns out to be infinite with every pure input--reference state of full Schmidt rank serving as an optimal probe.

\section{Preliminaries}
\label{sec:prelim}
We begin by fixing some notations in this section which are extensively used throughout the paper. We then discuss few properties of quantum state and channel divergences relevant to our analysis, followed by a brief overview of composite quantum state hypothesis testing.

\subsection{Notation}
We denote the complex Hilbert spaces corresponding to the physical systems as $A,B,$ etc. The set of all linear operators and density matrices acting on $A$ are denoted by $\mathfrak B(A)$ and $\mathfrak D(A)$, respectively. A quantum channel $\mathcal E$ is an element of the set of linear, completely positive and trace preserving (CPTP) maps from $\mathfrak B(A)$ to $\mathfrak B(B)$, simply denoted by $\mathsf{CPTP}(A\to B)$. 


\subsection{Quantum State and Channel Divergences} 
\begin{definition}
\label{def:sd}  
    Let $\rho\in\mathfrak D(A)$ be a quantum state and $\sigma\in\mathfrak{B}(A)$ be a positive semi-definite operator.
    \begin{itemize}
        \item The \textup{Umegaki relative entropy} (also known as \textup{quantum relative entropy}) \cite{Watrous2011} between these states is defined as
\begin{eqnarray}
     D(\rho\|\sigma) := \begin{cases}
\Tr(\rho(\log\rho-\log\sigma)), & \text{if } \operatorname{supp}(\rho)\subseteq \operatorname{supp}(\sigma), \\
\infty, & \text{otherwise},
\end{cases}
\end{eqnarray}
which satisfies the data processing inequality (DPI) \cite{Lindblad_1975}, i.e., $D(\mathcal E(\rho)\|\mathcal E(\sigma))\leq D(\rho\|\sigma)$ with $\mathcal E$ being any CPTP map.

        \item  While $\epsilon\in[0,1]$, the \textup{quantum hypothesis testing relative entropy} between $\rho$ and $\sigma$ satisfying DPI, is defined by \cite{Wang_2012} 
        \begin{eqnarray}
            D^\epsilon_{H}(\rho\|\sigma):=-\log\min_{\substack{0\leq \Pi\leq I\\\Tr[\Pi\rho]\geq1-\epsilon}}\Tr[\Pi\sigma],
        \end{eqnarray}
        where $\{\Pi,\mathbb I-\Pi\}$ is the binary outcome positive operator valued measurement (POVM).
    \end{itemize}  
\end{definition}

In the following we provide few important properties of $D(\rho\|\sigma)$ and $D^\epsilon_{H}(\rho\|\sigma)$ which will be needed further:
\begin{enumerate}
    \item Given $\rho=\sum_ip_i\rho_i$ and $\sigma=\sum_ip_i\sigma_i$, quantum relative entropy is jointly convex:
\begin{eqnarray}
D\!\left(\sum_i p_i\rho_i\,\middle\|\,\sum_i p_i\sigma_i\right)
\leq
\sum_i p_i D(\rho_i\|\sigma_i).
\end{eqnarray}

\item For product states of the form $\rho=\bigotimes_{i=1}^n\rho_i$ and $\sigma=\bigotimes_{i=1}^n\sigma_i$, we have the additivity property
\begin{equation}
D\!\left(
\bigotimes_{i=1}^n\rho_i
\,\middle\|\,
\bigotimes_{i=1}^n\sigma_i
\right)
=
\sum_{i=1}^nD(\rho_i\Vert\sigma_i).
\label{eq:add}
\end{equation}

\item For normalised states $\rho$ and $\sigma$, by the quantum Pinsker's inequality~\cite{OhyaPetz1993}
\begin{eqnarray}
    D(\rho\Vert\sigma)\geq \frac{1}{2\ln 2}\|\rho-\sigma\|_1^2=\frac{2}{\ln 2}T(\rho,\sigma)^2,
\end{eqnarray}
Umegaki divergence relates to the trace distance between the states $T(\rho,\sigma):=\frac12\|\rho-\sigma\|_1$ , known as normalized $1$-norm distance.

\item The upper bound on the hypothesis testing relative entropy $D^\epsilon_H(\rho\|\sigma)$ is given by 
    \begin{eqnarray}
        D^\epsilon_H(\rho\|\sigma)\leq\frac{1}{1-\epsilon}[D(\rho\|\sigma)+h(\epsilon)],
        \label{eq:hr_bound}
    \end{eqnarray}
    where $h(\epsilon)$ is the binary entropy. This relation  is very well-known and can be proved using DPI of relative entropy as shown in Ref.~\cite{Wang_2012}.
\end{enumerate}

Given any state divergence $\mathbf D$ in Def.~\ref{def:sd} and two channels $\mathcal N, \mathcal M:A\to B$, the corresponding unstabilized channel divergence can be defined as 
\begin{eqnarray}
    {\mathbf d}(\mathcal N\|\mathcal M):=\sup_{\rho\in\mathfrak D(A)} \mathbf D(\mathcal N(\rho)\|\mathcal M(\rho)),
\end{eqnarray}
where the supremum is taken over all density operators $\rho$ on the input space $A$ of the channel $\mathcal N$. The stabilized channel divergence between the channels $\mathcal N$ 
 and $\mathcal M$ is defined by 
\begin{eqnarray}
    \widetilde{\mathbf d}(\mathcal N\|\mathcal M):=\sup_{\rho\in\mathfrak D(R\otimes A)} \mathbf D((\operatorname{id}_R\otimes \mathcal N)\rho\|(\operatorname{id}_R\otimes\mathcal M)\rho),
    \label{eq:stablizedd}
\end{eqnarray}
where by the virtue of purification and DPI, the supremum can be restricted to pure states such that $R$ is isomorphic to $A$ \cite{leditzky2018approaches,bergh2023composite}. 

In the subsequent analysis, we will encounter quantities that depend simultaneously on $n$ number of channel uses and error tolerance $\epsilon$. However, our main point of interest will be for $\epsilon\to 0$ and asymptotic scenario with $n\to\infty$. In general, it may happen that a quantity of interest does not converge as $n\to\infty$ for any finite $\epsilon$, although, in many cases it is possible to show that the quantity converges after taking the additional limit $\epsilon\to 0$ outside. In accordance with Ref.~\cite{bergh2023composite}, we therefore adopt the following definition. 
\begin{definition}
    For a function $g: \mathbb N\times(0,1)\to\mathbb R$ and $a\in\overline{\mathbb R}$ where $\overline{\mathbb R}:=\mathbb R\cup\{-\infty,\infty\}$,
    we write
    \begin{eqnarray}
        \lim_{\epsilon\to0}\mathop{\overline{\underline{\lim}}}\limits_{n\to\infty}g(n,\epsilon)=a,
    \end{eqnarray}
    if
    \begin{eqnarray}
        \lim_{\epsilon\to0}\liminf_{n\to\infty}g(n,\epsilon)=a,\text{  and,  }\lim_{\epsilon\to0}\limsup_{n\to\infty}g(n,\epsilon)=a.
    \end{eqnarray}
\end{definition}

The following lemma will be useful in handling quantities involving optimization over input states adn channels.
\begin{lemma}
    Let $\mathtt Y$ be a nonempty set and a function $f:\mathtt Y\cross \mathbb N\to \mathbb R$. Then $f(y,n)$ follows the inequalities
    \begin{eqnarray}
        \liminf_{n\to\infty}\sup_{y\in \mathtt Y} f(y,n)&\geq&\sup_{y\in \mathtt Y}\liminf_{n\to\infty}f(y,n),\label{eq:infsup}\\
    \text{and,   } \limsup_{n\to\infty}\sup_{y\in \mathtt Y} f(y,n)&\geq&\sup_{y\in \mathtt Y}\limsup_{n\to\infty}f(y,n).
    \label{eq:supsup}
    \end{eqnarray}
\end{lemma}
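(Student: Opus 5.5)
The argument is a pointwise comparison followed by monotonicity of $\liminf$ and $\limsup$, and then a supremum over $y$; no step should cause difficulty.

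Fix an arbitrary $y_0\in\mathtt Y$. For every $n\in\mathbb N$ we have the trivial bound
\begin{equation*}
\sup_{y\in\mathtt Y} f(y,n)\;\geq\; f(y_0,n),
\end{equation*}
because the supremum dominates each of its members. Both $\liminf_{n\to\infty}$ and $\limsup_{n\to\infty}$ are monotone: if $a_n\geq b_n$ for all $n$, then $\liminf a_n\geq\liminf b_n$ and $\limsup a_n\geq\limsup b_n$, with values in $\overline{\mathbb R}$. For the $\liminf$ this follows from $\inf_{m\geq n}a_m\geq\inf_{m\geq n}b_m$ and letting $n\to\infty$; for the $\limsup$ the same reasoning applies with $\sup_{m\geq n}$. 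Applying this to $a_n=\sup_y f(y,n)$ and $b_n=f(y_0,n)$ gives
\begin{equation*}
\liminf_{n\to\infty}\sup_{y\in\mathtt Y} f(y,n)\geq \liminf_{n\to\infty} f(y_0,n),
\qquad
\limsup_{n\to\infty}\sup_{y\in\mathtt Y} f(y,n)\geq \limsup_{n\to\infty} f(y_0,n).
\end{equation*}

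The left-hand sides do not depend on $y_0$, and $y_0$ was arbitrary. Taking the supremum over $y_0\in\mathtt Y$ on the right-hand sides therefore yields Eqs.~\eqref{eq:infsup} and \eqref{eq:supsup}.

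The only point needing care is that quantities may take values $\pm\infty$. This is harmless, since all the inequalities above hold in the extended reals $\overline{\mathbb R}$ with their usual order. Nonemptiness of $\mathtt Y$ guarantees that the suprema are taken over a nonempty set, so none of them is the degenerate value $-\infty$ of an empty supremum.
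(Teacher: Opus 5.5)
Your proposal is correct and takes the same approach as the paper's proof. Both bound $\sup_{y}f(y,n)\geq f(y_0,n)$ pointwise in $n$, pass to $\liminf$ and $\limsup$ by monotonicity, and then take the supremum over the arbitrary $y_0$. Your remarks on extended-real values and on nonemptiness of $\mathtt Y$ make explicit what the paper leaves implicit.
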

\begin{proof}
    For the above function $f(y,n)$, we can write
    \begin{eqnarray}
        &&\nonumber\sup_{y\in\mathtt Y}f(y,n)\geq f(y,n)~\forall n\in \mathbb N,y\in\mathtt Y,\\\nonumber
        \implies&&\liminf_{n\to\infty}\sup_{y\in\mathtt Y}f(y,n)\geq \liminf_{n\to\infty}f(y_1,n),~~\text{for any $y_1\in\mathtt Y$,}\\\nonumber\implies&&\liminf_{n\to\infty}\sup_{y\in\mathtt Y}f(y,n)\geq \sup_{y_1\in\mathtt Y}\liminf_{n\to\infty}f(y_1,n),\\\implies&&\liminf_{n\to\infty}\sup_{y\in\mathtt Y}f(y,n)\geq \sup_{y\in\mathtt Y}\liminf_{n\to\infty}f(y,n).
    \end{eqnarray}
    In a similar fashion Eq.~\eqref{eq:supsup} can also be proved.
\end{proof}
\subsection{Composite quantum state hypothesis testing}
Hypothesis testing deals with finding the truth between multiple possibilities from a given dataset. In a simple independent and identically distributed (i.i.d.) quantum state discrimination setting, given $n$ copies of a quantum state, promised to be either $\rho$ or $\sigma$, the task is to choose the right one between these two options. In a more general composite scenario, instead of a single state, one has to discriminate between two sets of states. Let us frame this case mathematically.

Suppose that we are given an unknown quantum state of a quantum system composed of $n$ copies of an elementary system with Hilbert space $A$ belonging to either  $\mathfrak R_n:=\{\rho_{(n)}|\rho_{(n)}\in\mathfrak D(A^{\otimes n})\}$ or $\mathfrak S_n:=\{\sigma_{(n)}|\sigma_{(n)}\in\mathfrak D(A^{\otimes n})\}$. More precisely, the null hypothesis $H_0$ corresponds to the case where the given state belongs to the set $\mathfrak{R}_n$, whereas the alternative hypothesis $H_1$ corresponds to the case where the given state belongs to the set $\mathfrak{S}_n$. We collect these two families of states in the sequences ${\mathfrak R}=(\mathfrak R_n\subset\mathfrak D(A^{\otimes n}))_n$ and ${\mathfrak S}=(\mathfrak S_n\subset\mathfrak D(A^{\otimes n}))_n$ which, by a slight abuse of notation, will also be referred to as the hypotheses. The hypothesis ${\mathfrak R}$ is \textit{composite} if the sets $\mathfrak R_n$ are not singleton sets. However, given a set of states $\mathfrak R\subset \mathfrak D(A)$, if the state on $n$-partite system is in the form of tensor product across copies, i.e., the family $\mathfrak R_n:=\{\otimes_{i=1}^n\rho_{i}|\rho_{1},\ldots,\rho_{n}\in\mathfrak R_1\}$ is called \textit{composite arbitrarily varying hypothesis} while the \textit{composite i.i.d. hypothesis} refers to $\mathfrak R_n:=\{\rho^{\otimes n}|\rho\in\mathfrak R_1\}$.

To address the discrimination problem, we perform binary outcome POVM, which is fully defined by one of its elements $0\leq\Pi_n\leq \mathbb I$. We use the convention that the outcome $\Pi_n (\text{resp. } \mathbb I-\Pi_n)$ leads to the inference that the state belongs to the set $\mathfrak R_n(\text{resp. } \mathfrak S_n)$. In this scenario, we may encounter two distinct kinds of error, namely Type-I and Type-II, which are defined as
\begin{eqnarray}
    \nonumber \text{Type-I}\to\text{Probability}[\text{we infer the state is $\sigma_{ (n)}$}|\text{state is actually $\rho_{(n)}$}],\\\nonumber
    \text{Type-II}\to\text{Probability}[\text{we infer the state is $\rho_{ (n)}$}|\text{state is actually $\sigma_{(n)}$}].
\end{eqnarray}
Now, the error probabilities depend on the particular states we are getting from the families \(\mathfrak R_n\) and \(\mathfrak S_n\). This naturally leads to a worst-case formulation in which the performance of a test is evaluated against the most adverse choice of states within the respective hypothesis classes. Accordingly, for the POVM \(\Pi_n\), the Type-I and Type-II error probabilities are given by
\begin{eqnarray}
    \nonumber \alpha(\Pi_n,{\mathfrak R_n})&=&\sup_{\rho_n\in \mathfrak R_n}\Tr[(\mathbb I-\Pi_n)\rho_{(n)}],\\
    \text{and,}~~\beta(\Pi_n,{\mathfrak S_n})&=&\sup_{\sigma_n\in \mathfrak S_n}\Tr[\Pi_n\sigma_{(n)}],
\end{eqnarray}
respectively. For a given $n$, in the context of asymmetric hypothesis testing, the aim is to minimize the Type-II error over all possible POVMs by restricting the Type-I error below a certain threshold $\epsilon\in(0,1)$. We denote the resulting Type-II error by $\beta_\epsilon(\mathfrak R_n\|\mathfrak S_n)$ which can be written as 
\begin{eqnarray}
    \nonumber \beta_\epsilon(\mathfrak R_n\|\mathfrak S_n) &=& \inf_{0\leq \Pi_n\leq \mathbb I}\{\beta(\Pi_n,{\mathfrak S_n}):\sup_{\rho_{(n)}\in\mathfrak R_n}\Tr[(\mathbb I-\Pi_n)\rho_{(n)}]\leq \epsilon\} \\&=& \inf_{0\leq \Pi_n\leq \mathbb I}\{\sup_{\sigma_{(n)}\in\mathfrak S_n}\Tr[\Pi_n\sigma_{(n)}]:\sup_{\rho_{(n)}\in\mathfrak R_n}\Tr[(\mathbb I-\Pi_n)\rho_{(n)}]\leq \epsilon\}.
    \label{eq:betaep}
\end{eqnarray}
By Sion's minimax theorem \cite{Sion1958OnGM} and linearity of the error probabilities,  Eq.~\eqref{eq:betaep} can be re-expressed as (see Lemma~$31$ of Ref.~\cite{Fang2026gen}, and Ref.~\cite{Lami_2026_universalquantumresourcedistillation}) 
\begin{eqnarray}
    \nonumber \beta_\epsilon(\mathfrak R_n\|\mathfrak S_n) &:=& \sup_{\substack{\rho_{(n)}\in\operatorname{conv}(\mathfrak R_n)\\\sigma_{(n)}\in\operatorname{conv}(\mathfrak S_n)}}\inf_{0\leq \Pi_n\leq \mathbb I}\{\Tr[\Pi_n\sigma_{(n)}]:\Tr[(\mathbb I-\Pi_n)\rho_{(n)}]\leq \epsilon\}\\&=&\sup_{\substack{\rho_{(n)}\in\operatorname{conv}(\mathfrak R_n)\\\sigma_{(n)}\in\operatorname{conv}(\mathfrak S_n)}}\beta_\epsilon(\rho_{(n)}\|\sigma_{(n)}),
\end{eqnarray}
where $\beta_\epsilon(\rho_{(n)}\|\sigma_{(n)}):=\inf_{0\leq \Pi_n\leq \mathbb I}\{\Tr[\Pi_n\sigma_{(n)}]:\Tr[(\mathbb I-\Pi_n)\rho_{(n)}]\leq \epsilon\}$.
The quantum hypothesis testing the relative entropy for the two sets $\mathfrak R_n$ and $\mathfrak S_n$ can be written as
\begin{eqnarray}
    D^\epsilon_H(\operatorname{conv}(\mathfrak R_n)\|\operatorname{conv}(\mathfrak S_n)):=-\log\beta_\epsilon(\mathfrak R_n\|\mathfrak S_n)=\inf_{\substack{\rho_{(n)}\in\operatorname{conv}(\mathfrak R_n)\\\sigma_{(n)}\in\operatorname{conv}(\mathfrak S_n)}}D^\epsilon_H(\rho_{(n)}\|\sigma_{(n)}).
\end{eqnarray}
Now the optimal exponent at which $\beta_\epsilon(\mathfrak R_n\|\mathfrak S_n)$ decays asymptotically with $n$, is characterized by \textit{Stein's exponent} as
\begin{eqnarray}
    \operatorname{Stein}({\mathfrak{R}}\|{\mathfrak{S}}):= E({\mathfrak{R}}\|{\mathfrak{S}}) = \lim_{\epsilon\to0^+}\liminf_{n\to\infty}\frac1nD^\epsilon_H(\operatorname{conv}(\mathfrak R_n)\|\operatorname{conv}(\mathfrak S_n)).
\end{eqnarray}
When both $H_0$ and $H_1$ are simple i.i.d. hypothesis, which means $\mathfrak R_n=\rho^{\otimes n}$ and $\mathfrak S_n=\sigma^{\otimes n}$ such that $\rho\in \mathfrak R_1\subset \mathfrak D(A), \sigma\in\mathfrak S_1\subset\mathfrak D(A)$, we recover one of the cornerstone results of quantum information theory, namely quantum Stein's Lemma \cite{HiaiPetz1991,OgawaNagaoka2000}:
\begin{eqnarray}
    \lim_{n\to\infty}\frac{1}{n}D^\epsilon_H(\rho^{\otimes n}\|\sigma^{\otimes n})=D(\rho\|\sigma)~\forall\epsilon\in(0,1),
\end{eqnarray}
and the corresponding Stein exponent as $\operatorname{Stein}(\rho\|\sigma)=D(\rho\|\sigma)$. 

\section{Composite i.i.d. quantum channel discrimination}\label{sec:compchannel}
The task of composite i.i.d. quantum channel discrimination is an extension of the composite i.i.d. state discrimination, where instead of states the objects are replaced by channels \cite{bergh2023composite}. More precisely, let $\mathcal E:A\to B$ be an unknown channel which is promised to belong to the one of the two bounded sets of channels, $\mathsf R_n^\mathsf{iid}:=\{\mathcal X^{\otimes n}|\mathcal X\in\mathsf R\}$ or $\mathsf S_n^\mathsf{iid}:=\{\mathcal Y^{\otimes n}|\mathcal Y\in\mathsf S\}$ where $\mathsf{R,S}\subset \mathsf{CPTP}(A\to B)$. Given $n$ uses of $\mathcal E$, the objective is to determine to which of these two families it belongs to.

In contrast to quantum state discrimination, multiple uses of a quantum channel admit a more complex class of discrimination strategies. The channel uses can, for example, be queried in \textit{parallel} or sequentially in a \textit{adaptive} manner with possibility of having entangled states, auxiliary quantum memories, intermediate control operations. Even when we restrict our attention to the parallel strategies, the optimization over input states adds another nontrivial degree of freedom alongwith finding the best POVM. In the following, let us briefly illustrate three particular classes of parallel strategies according to the correlation structure imposed on the probe states, that are relevant for this work.


\begin{figure}[t!]
\centering
\includegraphics[scale=0.2]{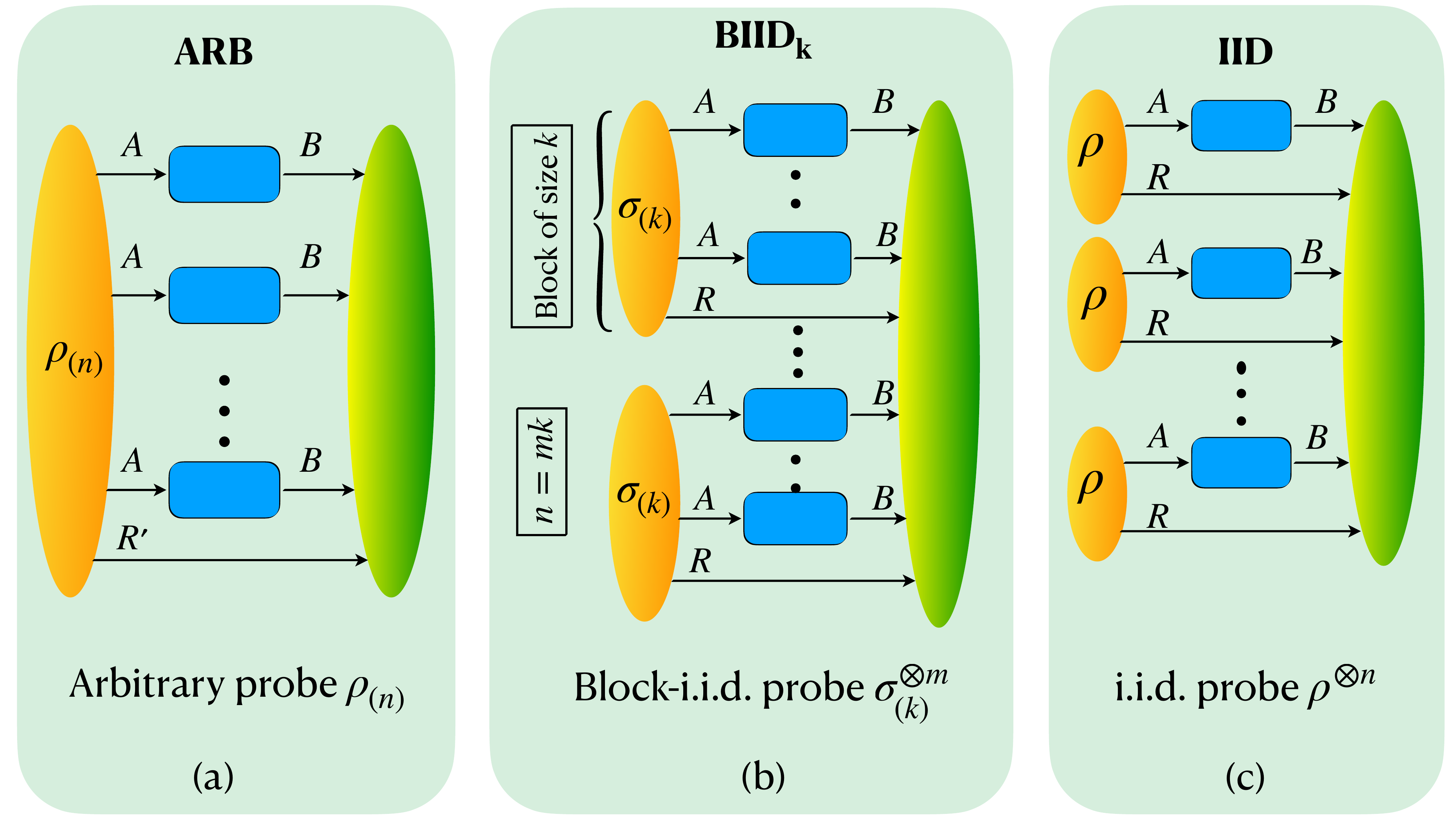}
\caption{Depiction of different strategies of channel discrimination under parallel usages. Each blue boxe denote the unknown quantum channel $\mathcal E $, which belongs to either $\mathsf{R}$ (null hypothesis) or $\mathsf{S}$ (alternate hypothesis). Decoding is optimized over global measurements. (a) ARB strategy -- input probe states $\rho_{(n)}\in\mathfrak D(R'\otimes A^{\otimes n})$ have arbitrary correlation structure. (b) $\biid{k}$ strategy -- $n$ channels are divided into $m$ blocks; each of them having $k$ number of channels. Input state in each block, $\sigma_{(k)}\in \mathfrak D(R\otimes A^{\otimes k})$, can have arbitrary correlation but no inter-block correlation is allowed. (c) IID strategy -- input probes of each of the $n$ channels are prepared independently and identically in $\rho\in \mathfrak D(R\otimes A)$ .}
\label{fig:parallel}
\vspace{-.5cm}
\end{figure}

\textbf{Arbitrary-probe strategy}. We first consider the most general parallel probe allowed for $n$ uses of the channel as depicted in Fig.~\ref{fig:parallel}(a). Let $\rho_{(n)}\in \mathfrak D(R'\otimes A^{\otimes n})$ be an arbitrary joint state on $n$ channel inputs and an auxiliary memory system $R'$ . There are no restrictions on the correlations, classical or quantum, among the subsystems of $\rho_{(n)}$. Consequently, the output state after the channel action becomes  $(\mathrm{id}_{R'}\otimes\mathcal E^{\otimes n})[\rho_{(n)}]\in\mathfrak D(R'\otimes B^{\otimes n})$ where the auxiliary system $R'$ remains undisturbed. We refer to this class of parallel discrimination protocols employing such unrestricted probe states as  ARB strategy.


\textbf{Block-i.i.d. strategy}. Next, we introduce an intermediate class of strategies that interpolate between ARB and fully i.i.d. probe states (discussed subsequently). Suppose $n=mk$, where $k\leq n$ denotes the block size and $m$ is the number of blocks (see Fig.~\ref{fig:parallel}(b)). Within each block, $k$ channel inputs may be prepared in an arbitrary correlated state $\sigma_{(k)}\in\mathfrak D(R\otimes A^{\otimes k})$, including potential correlations with a reference system $R$ withing each block. Now, $\sigma_{(k)}$ is prepared in an i.i.d. manner across the $m$ blocks of $k$ channels each, resulting a global input state $\sigma_{(k)}^{\otimes m}\in\mathfrak D({R}^{\otimes m}\otimes A^{\otimes n})$. Therefore, arbitrary correlations are allowed within each block, while correlations between different blocks are excluded. Hence, the corresponding output state is $\left((\id_{R}\otimes \mathcal E^{\otimes k})[\sigma_{(k)}]\right)^{\otimes m}$. We call this class of protocols having block size $k$ as the $k$-block-i.i.d. ($\biid{k}$) strategy. Importantly, block size $k$ controls the structure of correlation available between the probes. Increasing $k$ enlarges the the set of admissible space of input states. In particular, for $k=n$, the block encompasses all $n$ uses of the channel; the BIID strategy reduces to the unrestricted ARB strategy, i.e., $\biid{n}\equiv\text{ARB}$.

Note, however, that for fixed $n$ and $k$, it may happen that $n$ is not a multiple of $k$. In that case, we can partition the $n$ channel uses into $\left\lfloor\frac{n}{k}\right\rfloor$ blocks of size $k$ and leave the remaining $1\leq (n\bmod k)\leq k-1$ channels unused. Since $\left(\frac{n\bmod k}{n}\right)\to 0$ as $n\to \infty$, these discarded channel uses do not affect the performance in asymptotic case.


\textbf{Fully i.i.d. strategy}. At the opposite extreme, one may adopt a fully i.i.d. probe $\rho\in \mathfrak D(R\otimes A)$ for every channel use, i.e., the global input state is $\rho^{\otimes n}$ (Fig.~\ref{fig:parallel}(c)). Therefore, the output state is $\left((\operatorname{id}_{R} \otimes \mathcal{E})[\rho]\right)^{\otimes n}$. This defines the IID strategy. Notice that $\biid{1}\equiv\text{IID}$ which means IID is obtained from block-i.i.d. strategy by choosing minimal block length $k=1$.

These three probe classes therefore form a natural hierarchy,
\begin{eqnarray}
    \text{IID}\subset\biid{k}\subset\text{ARB},~\forall\,k\in[2,n-1]; \text{ with } \biid{n}\equiv\text{ARB}, \text{ and } \biid{1}\equiv\text{IID},
\end{eqnarray}
corresponding to the sets of admissible parallel probe states. The block-i.i.d. construction is particularly useful as it can encompass both the complex ARB strategy and the simpler IID setting. By varying the  block size $k$, one can analyze how the correlation across an increasing number of channels uses affect the channel discrimination performance. In what follows, $\biid{k}$ will serve as the central framework for our analysis of a systematic investigation of nom-mixed-untarity of a given channel within the framework of composite i.i.d. channel discrimination. Note that, in each of the strategies discussed above, setting the dimension of the auxiliary memory to one defines the corresponding \textit{memoryless} strategy. Other strategies, including adaptive ones are out of the scope of this paper and will not be discussed further. 


\section{Hypothesis Testing of quantum channels using block-iid input probes: General framework}
\label{sec:block-iid}
In this section, we consider the general problem of composite i.i.d. hypothesis testing of quantum channels under parallel strategies, with the admissible probe strategy being $\biid{k}$. 
As input probes, we consider states of the form
\begin{align}
    \rho_{(n)}=\sigma_{(k)}^{\otimes m},\qquad n=mk,    
\end{align}
where 
\begin{itemize}
    \item for memory-assisted strategies, $\rho_{(n)}\in\mathfrak{D}\left(R'\otimes A^{\otimes n}\right)$ and $\sigma_{(k)}\in\mathfrak{D}\left(R\otimes A^{\otimes k}\right)$ with $R'={R}^{\otimes m}$, and
    \item for memoryless strategies, $\rho_{(n)}\in\mathfrak{D}\left(A^{\otimes n}\right)$ and $\sigma_{(k)}\in\mathfrak{D}\left(A^{\otimes k}\right)$.
\end{itemize}


Aligning with our primary objective of testing whether a given channel is a specified non-mixed-unitary channel or belongs to the set of mixed-unitary channels, we consider the more general problem of distinguishing a specified channel lying outside a closed convex set of channels from that set. Accordingly, our objective is to distinguish between the following two hypotheses regarding an unknown channel \(\mathcal{E}\):
    \begin{align}
        H_0:\quad &\mathcal{E}\in\mathsf{R},\nonumber\\
        H_1:\quad &\mathcal{E}\in\mathsf{S}=\{\mathcal{N}\},\nonumber
    \end{align}
where $\mathsf{R}$ is an arbitrary closed convex set of quantum channels and $\mathsf S$ is a singleton set of a fixed channel $\mathcal N\notin \mathsf R$. With $n$ channel uses, these induce the following composite and simple i.i.d. hypotheses
    \begin{align}
        \mathsf{R}^{\mathrm{iid}}_n:=\{\Phi^{\otimes n}:\Phi\in\mathsf{R}\},~~\text{and, }~~
        \mathsf{S}^{\mathrm{iid}}_n:=\{\mathcal{N}^{\otimes n}\}\equiv\mathcal{N}^{\mathrm{iid}}_n,
    \end{align}
    respectively. Similarly to the  composite quantum state hypothesis testing, here we collect the sequences of sets of channels as $\mathsf R^\mathrm{iid}=(\mathsf R^\mathrm{iid}_n)_n$ and $\mathcal N^\mathrm{iid}=( \mathcal N^\mathrm{iid}_n)_n$ and refer them as composite hypothesis as well.


\subsection{Stein Exponents}
We now turn to characterizing the quantum Stein's exponents for the aforementioned channel discrimination problem under different strategies. To this end, we refer to Stein's exponent in ARB strategy with and without memory as 
\begin{eqnarray}
    \nonumber \widetilde E(\mathsf R^{\mathrm{iid}}\|\mathcal N^\mathrm{iid})&=&\lim_{\epsilon\to0}\mathop{\overline{\underline{\lim}}}\limits_{n\to\infty}\frac{1}{n}\widetilde d^\epsilon_{H}(\{\Phi^{\otimes n}\}\|\mathcal N^{\otimes n}),\\\text{and,}~~
     E(\mathsf R^{\mathrm{iid}}\|\mathcal N^\mathrm{iid})&=&\lim_{\epsilon\to0}\mathop{\overline{\underline{\lim}}}\limits_{n\to\infty}\frac{1}{n} d^\epsilon_{H}(\{\Phi^{\otimes n}\}\|\mathcal N^{\otimes n}),
\end{eqnarray}
respectively, with $\Phi\in\mathsf{R}$; whereas in $\biid{k}$ strategy, the Stein's exponents with and without memory are denoted by $\widetilde{E}^k\left(\mathsf{R}^{\mathrm{iid}}\middle\|\mathcal{N}^{\mathrm{iid}}\right)$ and ${E}^k\left(\mathsf{R}^{\mathrm{iid}}\middle\|\mathcal{N}^{\mathrm{iid}}\right)$, respectively. Mathematically, we can write
\begin{eqnarray}
    \nonumber \widetilde E^k(\mathsf R^{\mathrm{iid}}\|\mathcal N^\mathrm{iid})=\lim_{\epsilon\to0}\mathop{\overline{\underline{\lim}}}\limits_{m\to\infty}\frac{1}{mk}\sup_{\sigma_{(k)}\in\mathfrak D(R\otimes A^{\otimes k})}D^{\epsilon}_{H}\big(\{(\mathrm{id}_R\otimes\Phi^{\otimes k}[\sigma_{(k)}])^{\otimes m}\}\|[(\mathrm{id}_R\otimes\mathcal N^{\otimes k}[\sigma_{(k)}])^{\otimes m}\big),\\
\end{eqnarray}
where $\sigma_{(k)}\in\mathfrak D(R\otimes A^{\otimes k})$ with $R$ being isomorphic to $A^{\otimes k}$ according to the definition of stabilized channel divergence in Eq.~\eqref{eq:stablizedd}, i.e., $R\cong A^{\otimes k}$. Moreover, let us define the output states as $\sigma_{{(k)},\Phi}:=(\id_R\otimes\Phi^{\otimes k})[\sigma_{(k)}]\in\mathfrak D(R\otimes B^{\otimes k})$ and $\sigma_{{(k)},\mathcal N}:=(\id_R\otimes\mathcal N^{\otimes k})[\sigma_{(k)}]\in\mathfrak D(R\otimes B^{\otimes k})$. From now on, we will write $\mathfrak D(C\otimes D)$ alternatively as $\mathfrak D(CD)$ by omitting the tensor-product sign for brevity. 
The Stein exponent without memory for $\biid{k}$ strategy can be expressed as,
\begin{eqnarray}
    \nonumber  E^k(\mathsf R^{\mathrm{iid}}\|\mathcal N^\mathrm{iid})=\lim_{\epsilon\to0}\mathop{\overline{\underline{\lim}}}\limits_{m\to\infty}\frac{1}{mk}\sup_{\sigma_{(k)}\in\mathfrak D(A^{\otimes k})}D^{\epsilon}_{H}\big(\{(\Phi^{\otimes k}[\sigma_{(k)}])^{\otimes m}\}\|(\mathcal N^{\otimes k}[\sigma_{(k)}])^{\otimes m}\big),
\end{eqnarray}
by choosing the dimension of $R$ as one. The usual IID strategy is recovered by choosing $k=1$ where the Stein exponents are $\widetilde E^1(\mathsf R^{\mathrm{iid}}\|\mathcal N^\mathrm{iid})$ and $ E^1(\mathsf R^{\mathrm{iid}}\|\mathcal N^\mathrm{iid})$.

Let us now present one of the main results characterizing the quantum Stein exponents for the channel discrimination problem under the \(\biid{k}\) strategy.

\begin{theorem}[Stein exponents for $\biid{k}$ strategy]\label{theo:stein_gen}
    Let the sequence $\mathsf{R}^{\mathrm{iid}}:=\left(\mathsf{R}^{\mathrm{iid}}_n\right)_n$ be a composite i.i.d. null channel hypothesis and the sequence $\mathcal{N}^{\mathrm{iid}}:=\left(\mathcal{N}^{\otimes n}\right)_n$ be a simple i.i.d. alternative channel hypothesis, with $\mathsf R$ being a closed-convex set of quantum channels and $\mathcal{N}\notin\mathsf{R}$. Under parallel channel usages with the $\biid{k}$ strategy, for a fixed block size $k$, the quantum Stein exponents with and without auxiliary memory are given respectively by
    \begin{align}
        &\widetilde{E}^k\left(\mathsf{R}^{\mathrm{iid}}\middle\|\mathcal{N}^{\mathrm{iid}}\right)=\sup_{\sigma_k\in\mathfrak{D}(R\otimes A^{\otimes k})}\min_{\Phi\in\mathsf{R}} \frac{1}{k}\,D\left((\mathrm{id}_R\otimes \Phi^{\otimes k})[\sigma_{(k)}]\middle\|(\mathrm{id}_R\otimes\mathcal{N}^{\otimes k})[\sigma_{(k)}]\right),\label{gen_stein_mem}\\
        &E^k\left(\mathsf{R}^{\mathrm{iid}}\middle\|\mathcal{N}^{\mathrm{iid}}\right)=\sup_{\sigma_k\in\mathfrak{D}(A^{\otimes k})}\min_{\Phi\in\mathsf{R}} \frac{1}{k}\,D\left(\Phi^{\otimes k}[\sigma_{(k)}]\middle\|\mathcal{N}^{\otimes k}[\sigma_{(k)}]\right)\label{gen_stein},
    \end{align}
    where it suffices to choose $R\cong A^{\otimes k}$ for all $k$.
\end{theorem}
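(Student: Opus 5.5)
The plan is to reduce everything to composite i.i.d. \emph{state} hypothesis testing. The block size $k$ and the block probe $\sigma_{(k)}$ stay fixed, and the number of blocks $m$ plays the role of the number of copies. For a fixed $\sigma_{(k)}$ I write $\omega_\Phi:=\sigma_{(k),\Phi}$ and $\omega_{\mathcal N}:=\sigma_{(k),\mathcal N}$. The null hypothesis then becomes the composite i.i.d. family $\{\omega_\Phi^{\otimes m}:\Phi\in\mathsf R\}$ and the alternative becomes the simple i.i.d. family $\omega_{\mathcal N}^{\otimes m}$.

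The set $\mathfrak W_\sigma:=\{\omega_\Phi:\Phi\in\mathsf R\}$ need not be convex, because $\Phi\mapsto\Phi^{\otimes k}$ is not linear. It is, however, compact: $\mathsf R$ is closed and bounded, hence compact, and $\Phi\mapsto\omega_\Phi$ is continuous. Lower semicontinuity of $D(\cdot\|\omega_{\mathcal N})$ then shows that $\min_{\Phi\in\mathsf R}D(\omega_\Phi\|\omega_{\mathcal N})$ is attained. The memoryless statement \eqref{gen_stein} is the special case $\dim R=1$, so I treat both exponents together.

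\textbf{Converse (upper bound).} This step is uniform in $\sigma_{(k)}$ and therefore needs no exchange of limits. Since $\omega_\Phi^{\otimes m}\in\operatorname{conv}\{\omega_{\Phi'}^{\otimes m}\}$, the infimum defining $D^\epsilon_H$ of the sets is at most its value at any single pair, so
\[
D^\epsilon_H\big(\operatorname{conv}\{\omega_{\Phi'}^{\otimes m}\}\big\|\omega_{\mathcal N}^{\otimes m}\big)\le D^\epsilon_H(\omega_\Phi^{\otimes m}\|\omega_{\mathcal N}^{\otimes m})\le \frac{m\,D(\omega_\Phi\|\omega_{\mathcal N})+h(\epsilon)}{1-\epsilon},
\]
using Eq.~\eqref{eq:hr_bound} and the additivity \eqref{eq:add}. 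I minimize over $\Phi$, take the supremum over $\sigma_{(k)}$, divide by $mk$, and apply $\limsup_{m\to\infty}$ followed by $\epsilon\to0$. This gives the upper bound by the right-hand side of \eqref{gen_stein_mem}.

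\textbf{Achievability (lower bound).} For each fixed $\sigma_{(k)}$ I will invoke the known Stein lemma for a composite i.i.d. null hypothesis against a simple i.i.d. alternative (as in Refs.~\cite{bergh2023composite,Berta2021on}):
\[
\lim_{\epsilon\to0}\liminf_{m\to\infty}\frac1m D^\epsilon_H\big(\operatorname{conv}\{\omega_\Phi^{\otimes m}\}\big\|\omega_{\mathcal N}^{\otimes m}\big)\ge\min_{\Phi\in\mathsf R}D(\omega_\Phi\|\omega_{\mathcal N}).
\]
I expect this to be the main obstacle. It requires a universal test that works simultaneously for all $\omega_\Phi$ in a possibly nonconvex compact set. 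My first attempt is the standard construction: pinch $\omega_{\mathcal N}^{\otimes m}$ onto its eigenspaces, which costs only polynomially many (in $m$) projections, and combine this with a universal symmetric-state (de Finetti/post-selection) bound. This reduces the problem to a classical composite test whose exponent is $\min_\Phi D(\omega_\Phi\|\omega_{\mathcal N})$ up to $o(m)$ terms. Compactness of $\mathfrak W_\sigma$ should control the case where $D=\infty$ (support mismatch), which can be handled by a support-projection test.

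With the per-$\sigma$ lower bound in hand, I pass it through the supremum over $\sigma_{(k)}$ using inequality \eqref{eq:infsup} of the earlier lemma, with $\mathtt Y=\mathfrak D(RA^{\otimes k})$. This shows that $\liminf_m\sup_\sigma$ dominates $\sup_\sigma\liminf_m$, and the limit $\epsilon\to0$ is monotone. The two bounds coincide, so the $\mathop{\overline{\underline{\lim}}}$ exists and equals \eqref{gen_stein_mem}.

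\textbf{Reduction to $R\cong A^{\otimes k}$.} Finally I justify that $R\cong A^{\otimes k}$ suffices. Any $\sigma_{(k)}$ on a larger reference system is obtained by a channel on $R$ from a purification of its marginal on $A^{\otimes k}$, and that purification can be taken with $R\cong A^{\otimes k}$. The data processing inequality applied to $D(\omega_\Phi\|\omega_{\mathcal N})$, simultaneously for all $\Phi$, shows that the min-expression can only increase under this replacement.
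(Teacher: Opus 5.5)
Your proposal is correct and follows essentially the same route as the paper. The converse bounds $D^\epsilon_H$ of the composite family by that of a single pair and then uses Eq.~\eqref{eq:hr_bound} with additivity. Achievability invokes the composite-i.i.d.-null versus simple-i.i.d.-alternative Stein lemma for each fixed block probe $\sigma_{(k)}$ and exchanges $\liminf_m$ with $\sup_{\sigma_{(k)}}$ via Eq.~\eqref{eq:infsup}. Attainment of the minimum comes from compactness of $\mathsf R$ and lower semicontinuity of $D$. Two of your points make explicit what the paper leaves implicit: $\{\omega_\Phi\}$ is compact but generally nonconvex for $k\ge 2$, so the cited lemma must be the Sanov-type version valid for nonconvex compact sets; and $R\cong A^{\otimes k}$ is justified by your purification-plus-data-processing argument.
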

\begin{proof}
    Given total $n$ number of channels, in the block-i.i.d. scenario, we first characterize on the Stein eponent with auxiliary memory.
    
\noindent\textbf{Achievability.}  For the achievable part, we have 
\begin{eqnarray}
    \nonumber\widetilde E^k(\mathsf R^{\mathrm{iid}}\|\mathcal N^\mathrm{iid})&\geq& \lim_{\epsilon\to0}\liminf_{m\to\infty}\sup_{\sigma_{(k)}\in\mathfrak D(R A^{\otimes k})}\frac{1}{mk}D^{\epsilon}_{H}\big(\{\sigma_{{(k)},\Phi}^{\otimes m}\}\|\sigma_{{(k)},\mathcal N}^{\otimes m}\big),\\\nonumber
    &\overset{\eqref{eq:infsup}}{\geq}& \lim_{\epsilon\to0}\sup_{\sigma_{(k)}\in\mathfrak D(R A^{\otimes k})}\frac{1}{k}\left[\liminf_{m\to\infty}\frac{1}{m}D^{\epsilon}_{H}\big(\{\sigma_{{(k)},\Phi}^{\otimes m}\}\|\sigma_{{(k)},\mathcal N}^{\otimes m}\big)\right],\\\nonumber&=&\lim_{\epsilon\to0}\sup_{\sigma_{(k)}\in\mathfrak D(R A^{\otimes k})}\inf_{\sigma_{{(k)},\Phi}}\frac 1kD(\sigma_{{(k)},\Phi}\|\sigma_{{(k)},\mathcal N}).    \\&=&\sup_{\sigma_{(k)}\in\mathfrak D(R A^{\otimes k})}\inf_{\Phi\in\mathsf R}\frac 1kD\left((\id_R\otimes\Phi^{\otimes k})[\sigma_{(k)}]\|(\id_R\otimes\mathcal N^{\otimes k})[\sigma_{(k)}]\right),
\end{eqnarray}
where in the third equality, following Eq.~$(7)$ of Ref.~\cite{Berta2021on}, we have used the fact that for a composite i.i.d. null state hypothesis $\{\rho^{\otimes m}:\rho\in \mathfrak R_1\subset \mathfrak D(A)\}$ and a simple i.i.d. alternate state hypothesis $\eta^{\otimes m}$ such that $\eta\notin \mathfrak R_1$ it can be shown that
\begin{eqnarray}
    \lim_{m\to\infty}\frac 1m D^\epsilon_H(\{\rho^{\otimes m}\}\|\eta^{\otimes m}) = \inf_{\rho\in \mathfrak R_1}\lim_{m\to\infty}\frac{1}{m}D^\epsilon_H(\rho^{\otimes m}\|\eta^{\otimes m})=\inf_{\rho\in \mathfrak R_1} D(\rho\|\eta).
\end{eqnarray}\\

\noindent\textbf{Converse.} To derive the converse part, we start with
\begin{eqnarray}
    \nonumber\widetilde E^k(\mathsf R^{\mathrm{iid}}\|\mathcal N^\mathrm{iid})&\leq& \lim_{\epsilon\to0}\limsup_{m\to\infty}\sup_{\sigma_{(k)}\in\mathfrak D(R A^{\otimes k})}\frac{1}{mk}D^{\epsilon}_{H}\big(\{\sigma_{{(k)},\Phi}^{\otimes m}\}\|\sigma_{{(k)},\mathcal N}^{\otimes m}\big),\\\nonumber&\leq&\lim_{\epsilon\to0}\limsup_{m\to\infty}\sup_{\sigma_{(k)}\in\mathfrak D(R A^{\otimes k})}\frac{1}{mk}\inf_{\Phi\in\mathsf R}D^{\epsilon}_{H}\big(\sigma_{{(k)},\Phi}^{\otimes m}\|\sigma_{{(k)},\mathcal N}^{\otimes m}\big),\\\nonumber&\overset{\eqref{eq:hr_bound}}\leq&\lim_{\epsilon\to0}\limsup_{m\to\infty}\sup_{\sigma_{(k)}\in\mathfrak D(R A^{\otimes k})}\frac{1}{mk}\inf_{\Phi\in\mathsf R}
    \frac{1}{1-\epsilon}\left[D\big(\sigma_{{(k)},\Phi}^{\otimes m}\|\sigma_{{(k)},\mathcal N}^{\otimes m}\big)+h(\epsilon)\right],\\\nonumber&=&\lim_{\epsilon\to0}\limsup_{m\to\infty}\sup_{\sigma_{(k)}\in\mathfrak D(R A^{\otimes k})}\inf_{\Phi\in\mathsf R}\frac{1}{k}
    \frac{1}{1-\epsilon}\left[D\big(\sigma_{{(k)},\Phi}\|\sigma_{{(k)},\mathcal N}\big)+\frac{h(\epsilon)}{m}\right],\\
    &=&\sup_{\sigma_{(k)}\in\mathfrak D(R A^{\otimes k})}\inf_{\Phi\in\mathsf R}\frac 1kD\left((\id_R\otimes\Phi^{\otimes k})[\sigma_{(k)}]\|(\id_R\otimes\mathcal N^{\otimes k})[\sigma_{(k)}]\right),
\end{eqnarray}
where the second inequality follows from the logic that discriminating a state $\eta^{\otimes m}$ from a set of $m$-copies of states $\{\rho^{\otimes m}\}$ can be done no better than discriminating $\eta^{\otimes m}$ from each individual state $\rho^{\otimes m}$ in the set, including the worst-case scenario, i.e., $D^\epsilon_H(\{\rho^{\otimes m}\}\|\sigma^{\otimes m})\leq\min_{\rho}D^\epsilon_H(\rho^{\otimes m}\|\sigma^{\otimes m})$ $\forall m$. The fourth equality follows from the additivity property (Eq.~\eqref{eq:add}) of Umegaki relative entropy under tensor product.

Furthermore, $\mathsf R$ is closed and bounded; hence a compact set. Since, Umegaki relative entropy is a lower semi-continuous function (see Exercise $7.22$ of Ref.~\cite{Holevo2012} and Lemma~$24$ of Ref.~\cite{bergh2023composite}) over a compact set of quantum channels for every input state $\sigma_{(k)}$, inner infimum is always achieved. Hence, we have 
\begin{eqnarray}
    \widetilde E^k(\mathsf R^{\mathrm{iid}}\|\mathcal N^\mathrm{iid}) = \sup_{\sigma_{(k)}\in\mathfrak D(R A^{\otimes k})}\min_{\Phi\in\mathsf R}\frac 1kD\left((\id_R\otimes\Phi^{\otimes k})[\sigma_{(k)}]\|(\id_R\otimes\mathcal N^{\otimes k})[\sigma_{(k)}]\right),
\end{eqnarray}
whereas by choosing $\dim(R)=1$, we can easily arrive at 
\begin{eqnarray}
    E^k(\mathsf R^{\mathrm{iid}}\|\mathcal N^\mathrm{iid}) = \sup_{\sigma_{(k)}\in\mathfrak D(A^{\otimes k})}\min_{\Phi\in\mathsf R}\frac 1kD\left(\Phi^{\otimes k}[\sigma_{(k)}]\|\mathcal N^{\otimes k}[\sigma_{(k)}]\right).
\end{eqnarray}
This completes the proof.
\end{proof}

The setting of block-i.i.d. probe states naturally defines a hierarchy of discrimination settings parameterized by the block size $k$, bridging two extremes. At one extreme, setting $k=1$ corresponds to fully i.i.d. probe states, i.e., states of the form $\rho^{\otimes n}$. At the other extreme, taking $k=n$ and the limit $k\to\infty$ removes the block-i.i.d. restriction altogether, thereby recovering the setting of arbitrary probe states. Consequently, Theorem~\ref{theo:stein_gen} immediately yields the Stein exponents for these two extreme cases.

\begin{corollary}[Stein exponents for fully i.i.d. probe states]
    The quantum Stein exponents of the above composite channel hypothesis testing problem under fully i.i.d. probe states are given by
    \begin{align}
        &\widetilde{E}^1\left(\mathsf{R}^{\mathrm{iid}}\middle\|\mathcal{N}^{\mathrm{iid}}\right)=\sup_{\rho\in\mathfrak{D}(R\otimes A)}\min_{\Phi\in\mathsf{R}} \,D\left((\mathrm{id}_R\otimes \Phi)[\rho]\middle\|(\mathrm{id}_R\otimes\mathcal{N})[\rho]\right),\label{eq:stein_mem_fully_iid}\\
        &E^1\left(\mathsf{R}^{\mathrm{iid}}\middle\|\mathcal{N}^{\mathrm{iid}}\right)=\sup_{\rho\in\mathfrak{D}(A)}\min_{\Phi\in\mathsf{R}} \,D\left(\Phi[\rho]\middle\|\mathcal{N}[\rho]\right),
    \end{align}
    where it suffices to choose $R\cong A$.
\end{corollary}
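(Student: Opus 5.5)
The corollary is the specialization $k=1$ of Theorem~\ref{theo:stein_gen}, so the plan is to substitute $k=1$ into Eqs.~\eqref{gen_stein_mem} and \eqref{gen_stein} and check that nothing in the theorem's proof needs $k\geq 2$. With $k=1$ we have $n=m$, and the block probe $\sigma_{(1)}$ is one state $\rho\in\mathfrak D(R\otimes A)$ (respectively $\rho\in\mathfrak D(A)$ in the memoryless case). The global input $\sigma_{(1)}^{\otimes m}=\rho^{\otimes n}$ is then exactly the IID strategy, since $\biid{1}\equiv\text{IID}$. The channel tensor powers $\Phi^{\otimes k}$ and $\mathcal N^{\otimes k}$ become $\Phi$ and $\mathcal N$, and the prefactor $\frac1k$ equals $1$. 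The right-hand sides therefore become $\sup_{\rho}\min_{\Phi\in\mathsf R}D\big((\id_R\otimes\Phi)[\rho]\,\|\,(\id_R\otimes\mathcal N)[\rho]\big)$ and its memoryless analogue, which are the claimed formulas.

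I would then re-check the two halves of the theorem's proof at $k=1$.

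\emph{Achievability.} It uses the exchange of $\liminf$ and $\sup$ from Eq.~\eqref{eq:infsup}, together with the composite i.i.d. state Stein lemma of Ref.~\cite{Berta2021on}, applied to the family $\{(\id_R\otimes\Phi)[\rho]\}_{\Phi\in\mathsf R}$ against $(\id_R\otimes\mathcal N)[\rho]$. Neither step depends on $k$.

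\emph{Converse.} It uses three facts, none of which depends on $k$: the bound of the composite quantity by the worst single $\Phi$, the bound in Eq.~\eqref{eq:hr_bound}, and additivity, Eq.~\eqref{eq:add}.

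\emph{Attainment of the minimum.} The inner infimum is a minimum for the same reason as in the theorem: $\mathsf R$ is compact and $\Phi\mapsto D\big((\id_R\otimes\Phi)[\rho]\,\|\,(\id_R\otimes\mathcal N)[\rho]\big)$ is lower semicontinuous.

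The remaining claim is that $R\cong A$ suffices, which is the theorem's condition $R\cong A^{\otimes k}$ at $k=1$. This is the only step that needs a little care: I would justify it directly rather than just quote it. For fixed $\rho_{RA}$, purify it to $\psi_{R'RA}$. By the data processing inequality under the partial trace over $R'$, the objective $\min_\Phi D(\cdot\|\cdot)$ can only increase, because the partial trace acts identically on both outputs for every $\Phi$. So pure inputs suffice. By the Schmidt decomposition, any pure $\psi_{RA}$ has support on at most $\dim A$ dimensions of $R$. Restricting $R$ to that support (an isometry applied on the untouched reference, under which $D$ is invariant) shows that $R\cong A$ loses nothing.

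I expect no real obstacle here, since everything is inherited from Theorem~\ref{theo:stein_gen}. The main points to verify are that the $\min_\Phi$ survives the purification step, which holds because DPI gives the inequality for each $\Phi$ separately and hence for the minimum, and that the supremum over $\rho$ remains well-defined. For the memoryless formula, setting $\dim R=1$ gives it directly.
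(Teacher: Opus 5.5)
Your proposal is correct and follows the paper's approach: the paper obtains this corollary directly by setting $k=1$ in Theorem~\ref{theo:stein_gen}, exactly as you do. Your purification-plus-Schmidt-decomposition argument for $R\cong A$, including the observation that the $\min_\Phi$ survives because DPI holds for each $\Phi$ separately, is the same standard justification the paper invokes when it defines the stabilized channel divergence, so it makes that step explicit rather than giving a different route.
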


\begin{corollary}[Stein exponents for arbitrary probe states]\label{cor:arbitrary_probe}
    The quantum Stein exponents of the above composite channel hypothesis testing problem under arbitrary probe states are given by
    \begin{align}
        \label{eq:stein_arb1}&\widetilde{E}\left(\mathsf{R}^{\mathrm{iid}}\middle\|\mathcal{N}^{\mathrm{iid}}\right)=\lim_{k\to\infty}\,\sup_{\sigma_{(k)}\in\mathfrak{D}(R\otimes A^{\otimes k})}\min_{\Phi\in\mathsf{R}} \frac{1}{k}\,D\left((\mathrm{id}_R\otimes \Phi^{\otimes k})[\sigma_{(k)}]\middle\|(\mathrm{id}_R\otimes\mathcal{N}^{\otimes k})[\sigma_{(k)}]\right),\\
        \label{eq:stein_arb2}&E\left(\mathsf{R}^{\mathrm{iid}}\middle\|\mathcal{N}^{\mathrm{iid}}\right)=\lim_{k\to\infty}\,\sup_{\sigma_{(k)}\in\mathfrak{D}(A^{\otimes k})}\min_{\Phi\in\mathsf{R}} \frac{1}{k}\,D\left(\Phi^{\otimes k}[\sigma_{(k)}]\middle\|\mathcal{N}^{\otimes k}[\sigma_{(k)}]\right).    \end{align}
\end{corollary}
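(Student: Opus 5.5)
We need to prove Corollary~\ref{cor:arbitrary_probe}: the arbitrary-probe exponents equal the $k\to\infty$ limits of the block-i.i.d. exponents. Write $a_k:=k\,\widetilde E^k(\mathsf R^{\mathrm{iid}}\|\mathcal N^{\mathrm{iid}})$, which by Theorem~\ref{theo:stein_gen} is the finite-letter quantity $\sup_{\sigma_{(k)}}\min_{\Phi\in\mathsf R}D\big((\mathrm{id}_R\otimes\Phi^{\otimes k})[\sigma_{(k)}]\|(\mathrm{id}_R\otimes\mathcal N^{\otimes k})[\sigma_{(k)}]\big)$. The plan has three steps: show the limit exists, give a lower bound on $\widetilde E$, and give a matching upper bound. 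The memoryless case is identical with $\dim R=1$.

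\textbf{Existence of the limit.} We show that $(a_k)$ is superadditive, $a_{k+l}\ge a_k+a_l$. Take the product probe $\sigma_{(k)}\otimes\sigma_{(l)}$, with the reference $R_kR_l$ embedded in $R\cong A^{\otimes(k+l)}$. Additivity \eqref{eq:add} then gives, for every fixed $\Phi$,
\[
D\big((\mathrm{id}\otimes\Phi^{\otimes (k+l)})[\sigma_{(k)}\otimes\sigma_{(l)}]\big\|(\mathrm{id}\otimes\mathcal N^{\otimes (k+l)})[\sigma_{(k)}\otimes\sigma_{(l)}]\big)
\]
equal to the sum of the block divergences. The minimum over $\Phi$ of a sum is at least the sum of the minima, which yields superadditivity. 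Fekete's lemma then shows that $a_k/k$ converges to $\sup_k a_k/k$, possibly $+\infty$.

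\textbf{Achievability.} By construction $\biid{k}\subset$ ARB, so for each fixed $k$ we have $\widetilde E\ge\widetilde E^k$. The leftover $n\bmod k$ channel uses are discarded; since a worse Type-II error only scales by $2^{O(k)}$, this costs nothing after dividing by $n$. Taking the supremum over $k$ gives $\widetilde E\ge\lim_k\widetilde E^k$.

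\textbf{Converse.} This is the main obstacle. We fix $n$ and reuse the converse chain from Theorem~\ref{theo:stein_gen} with $m=1$ and $k=n$. For any probe $\rho_{(n)}$ and any $\Phi\in\mathsf R$, the hypothesis testing divergence of the composite set is at most that of the single i.i.d. state $\Phi^{\otimes n}$. Applying \eqref{eq:hr_bound} then gives
\[
\frac1n d^\epsilon_H\le \frac{1}{1-\epsilon}\Big[\frac{a_n}{n}+\frac{h(\epsilon)}{n}\Big].
\]
Taking $\limsup_n$ and then $\epsilon\to0$ yields $\widetilde E\le\lim_n a_n/n$. The bound must use the inner minimum over $\Phi$ inside the supremum over probes, but that order is exactly what the single-state bound provides, so no minimax exchange is needed.

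The steps needing most care are the following. First, the $\epsilon$-dependence: $\frac{1}{1-\epsilon}\to1$ handles it. Second, the case $a_n/n=\infty$, where both sides are infinite. Third, checking that the reference dimension $R\cong A^{\otimes n}$ is without loss of generality, which follows by purification and data processing as noted after \eqref{eq:stablizedd}. Combining the achievability and converse bounds, the liminf and limsup in the definition of $\widetilde E$ coincide with $\lim_k\widetilde E^k$, which gives \eqref{eq:stein_arb1} and \eqref{eq:stein_arb2}.
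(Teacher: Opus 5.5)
Your proof is correct and follows essentially the paper's route. The paper obtains the corollary directly from Theorem~\ref{theo:stein_gen} by letting the block size grow ($k=n$, $k\to\infty$), gets existence of the limit from super-additivity of $(k\widetilde{E}^k)_k$ via additivity of the relative entropy and Fekete's lemma, and also notes the result is a special case of Theorem~10 of Bergh et al.; your split into achievability (via $\biid{k}\subset\mathrm{ARB}$ at each fixed $k$, discarding the $n \bmod k$ leftover uses) and a one-shot weak converse at block length $n$ (the $m=1$ instance of the converse chain) is a more careful write-up of that same argument, and your handling of the $+\infty$ case and of the reference dimension is sound.
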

Note that the limits appearing in Eqs.~\eqref{eq:stein_arb1} and~\eqref{eq:stein_arb2} always exist. Specifically, the additivity of the quantum relative entropy under tensor products ensures that the sequences $(kE^k)_k$ and $(k\widetilde{E}^k)_k$ are super-additive in $k$. By Fekete's lemma~\cite{Fekete1923}, each limit exists and coincides with the supremum over all $k \ge 1$.
\begin{remark}
    Corollary~\ref{cor:arbitrary_probe} also follows as a special case of Theorem~10 in Ref.~\cite{bergh2023composite}. It is important to note that, without the block-i.i.d. restriction, the corresponding Stein exponents for arbitrary probe states are generally given by regularized expressions. In contrast, fully i.i.d. probe states lead to finite-letter characterizations of the Stein exponents, but impose a stringent restriction on the state space of admissible input probes. The block-i.i.d. setting provides a natural middle ground between these two extremes: for every finite block size $k$, Theorem~\ref{theo:stein_gen} gives a finite-letter characterization of the corresponding Stein exponents, while increasing $k$ progressively enlarges the state space of admissible input probes.
\end{remark}

\subsection{Probe states with correlation constraints}\label{probeconstraints}

A second type of hierarchy emerges by constraining the correlations among the $k$ channel inputs within each block. Namely, the most general form of quantum correlation that a probe state across the $k$ channel inputs (and the auxiliary system, in the case of memory-assisted strategies) may possess is genuine multipartite entanglement. On the other hand, one may exclude entanglement across the \(k\) channel uses by restricting the probe states to be fully separable, i.e., of the form
\begin{align}
    \sigma_{(k)}=\sum_j p_j\left(\bigotimes_{i=1}^{k}\tau_{i}^{j}\right),\label{FS}
\end{align}
where
\begin{itemize}
    \item for memory-assisted strategies, $\sigma_{(k)}\in\mathfrak{D}(R\otimes A^{\otimes k})$ and $\tau_{i}^j\in\mathfrak{D}(\bar R\otimes A)$ for every $i$ and each $j$, 
    with $R=\bar R^{\otimes k}$, 
    and
    \item for memoryless strategies, $\sigma_{(k)}\in\mathfrak{D}(A^{\otimes k})$ and $\tau_i^{j}\in\mathfrak{D}(A)$ for every $i$ and each $j$.
\end{itemize}
Note that in the memory-assisted setting, while entanglement across different channel uses is strictly restricted, entanglement between each channel input and its corresponding auxiliary reference $\bar R$ is not necessarily restricted. These constraints are particularly relevant in practical scenarios where the preparation of highly entangled probe states is experimentally or operationally limited. We denote the quantum Stein exponents corresponding to fully separable probe states within each block by $\widetilde{E}^k_{\mathrm{Sep}}$ and $E^k_{\mathrm{Sep}}$. They are given by Eqs.~\eqref{gen_stein_mem} and \eqref{gen_stein} with the supremum now being restricted to the set of separable states of the form in Eq.~\eqref{FS}.


One may further consider the case where there are no correlations at all, not even classical, among the channel inputs. In this case, the probe state is restricted to the product form
\begin{align}
    \sigma_{(k)}=\bigotimes_{i=1}^k\tau_{i}.
\end{align}
Using the additivity of the Umegaki relative entropy (Eq.~\eqref{eq:add}), Theorem~\ref{theo:stein_gen} immediately yields the following characterization of the quantum Stein exponents. 

\begin{corollary}[Stein exponents for block-i.i.d. product probe states]\label{cor:stein_pro}
    The quantum Stein exponents of the above composite channel hypothesis testing problem under block-i.i.d. product probe states are given by
    \begin{align}
        &\widetilde{E}^k_{\mathrm{Pro}}\left(\mathsf{R}^{\mathrm{iid}}\middle\|\mathcal{N}^{\mathrm{iid}}\right)=\sup_{\tau_1,\cdots,\tau_k\in \mathfrak{D}(R\otimes A)}\min_{\Phi\in\mathsf{R}} \frac{1}{k}\sum_{i=1}^k\,D\left((\mathrm{id}_R\otimes \Phi)[\tau_i]\middle\|(\mathrm{id}_R\otimes\mathcal{N})[\tau_i]\right),\\
        &E^k_{\mathrm{Pro}}\left(\mathsf{R}^{\mathrm{iid}}\middle\|\mathcal{N}^{\mathrm{iid}}\right)=\sup_{\tau_1,\cdots,\tau_k\in\mathfrak{D}( A)}\min_{\Phi\in\mathsf{R}} \frac{1}{k}\sum_{i=1}^k D\left(\Phi[\tau_i]\middle\|\mathcal{N}[\tau_i]\right),
    \end{align}
    where $R\cong A$.
\end{corollary}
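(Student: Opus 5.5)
The plan is to specialize Theorem~\ref{theo:stein_gen} to product inputs and then use additivity of the Umegaki relative entropy. The product-probe exponents are defined by restricting the supremum in Eqs.~\eqref{gen_stein_mem} and \eqref{gen_stein} to block states of the form $\sigma_{(k)}=\bigotimes_{i=1}^k\tau_i$. The first step is to check that this restriction does not interfere with the achievability and converse arguments of Theorem~\ref{theo:stein_gen}. It does not, because both arguments hold pointwise for each fixed $\sigma_{(k)}$; only the final supremum depends on the admissible set. Indeed, the achievability bound uses the composite Stein lemma of Ref.~\cite{Berta2021on} for each fixed input, followed by Eq.~\eqref{eq:infsup}. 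The converse bound uses Eq.~\eqref{eq:hr_bound} together with additivity over the $m$ blocks, again for each fixed input. Replacing $\mathfrak D(RA^{\otimes k})$ by the set of product states throughout therefore gives
\begin{align*}
\widetilde E^k_{\mathrm{Pro}}=\sup_{\tau_1,\dots,\tau_k}\min_{\Phi\in\mathsf R}\frac1k D\Big((\mathrm{id}_R\otimes\Phi^{\otimes k})\big[\textstyle\bigotimes_i\tau_i\big]\Big\|(\mathrm{id}_R\otimes\mathcal N^{\otimes k})\big[\textstyle\bigotimes_i\tau_i\big]\Big),
\end{align*}
where $R=\bar R^{\otimes k}$. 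The minimum is attained by the same compactness and lower semicontinuity argument used in Theorem~\ref{theo:stein_gen}.

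The second step is to factor the outputs. Since $\Phi^{\otimes k}$ acts on $\bigotimes_i\tau_i$ factor by factor, and after reordering the reference subsystems, we have $(\mathrm{id}_R\otimes\Phi^{\otimes k})[\bigotimes_i\tau_i]=\bigotimes_i(\mathrm{id}_{\bar R}\otimes\Phi)[\tau_i]$, and likewise for $\mathcal N$. Additivity, Eq.~\eqref{eq:add}, then turns the objective into $\frac1k\sum_i D\big((\mathrm{id}_{\bar R}\otimes\Phi)[\tau_i]\|(\mathrm{id}_{\bar R}\otimes\mathcal N)[\tau_i]\big)$, which is the claimed formula. The memoryless case follows by setting $\dim\bar R=1$.

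The main point needing care is the claim that $\bar R\cong A$ suffices. I would argue this per factor: each $\tau_i$ can be purified and the purifying system handed to the reference. By data processing for $\mathrm{id}_{\bar R}\otimes\Phi$ versus $\mathrm{id}_{\bar R}\otimes\mathcal N$, this enlargement cannot decrease any summand, and it does so for every $\Phi$ simultaneously. Hence the supremum over product probes may be restricted to pure $\tau_i$. By the Schmidt decomposition, a pure state on $\bar R\otimes A$ is supported on a reference subspace of dimension at most $\dim A$, so $\bar R\cong A$ suffices, as for the stabilized divergence in Eq.~\eqref{eq:stablizedd}.

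Note that the minimum over $\Phi$ is common to all summands and cannot be split into separate minima. The formula as stated keeps it outside the sum, so no minimax argument is required.
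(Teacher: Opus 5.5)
Your proposal is correct and takes the same route as the paper, which obtains this corollary in one line: restrict the supremum in Theorem~\ref{theo:stein_gen} to product block states, then apply the additivity in Eq.~\eqref{eq:add}. Your further points make explicit what the paper leaves implicit, namely that the achievability and converse bounds hold for each fixed $\sigma_{(k)}$, and that the purification and Schmidt-rank argument justifies a per-factor reference $\bar R\cong A$.
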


Similar correlation constraints can also be imposed on arbitrary probe states without the block-i.i.d. restriction. The corresponding Stein exponents can then be obtained by taking the appropriate regularization over the block size $k$.

The following lemma, which follows directly from Theorem~\ref{theo:stein_gen} and Corollary~\ref{cor:stein_pro}, characterizes when the corresponding Stein exponents vanish, and will be useful in establishing our later results.

\begin{lemma}\label{usefulcond}
For any $\mathcal{N}\notin\mathsf{R}$ and any fixed block size $k$, the quantum Stein exponents of the above channel hypothesis testing problem under block-i.i.d. probe states are zero if and only if 
\begin{itemize}
    \item in the memory-assisted strategy, for every
    $\sigma_{(k)}\in\mathfrak{D}(R\otimes A^{\otimes k})$ with
    $R\cong A^{\otimes k}$, there exists a channel
    $\Phi\in\mathsf{R}$ such that
    \begin{align}
        \left(\mathrm{id}_R\otimes\Phi^{\otimes k}\right)[\sigma_{(k)}] 
        =
        \left(\mathrm{id}_R\otimes\mathcal{N}^{\otimes k}\right)[\sigma_{(k)}] ;
    \end{align}

    \item in the memoryless strategy, for every
    $\sigma_{(k)} \in\mathfrak{D}(A^{\otimes k})$, there exists a
    channel $\Phi\in\mathsf{R}$ such that
    \begin{align}
        \Phi^{\otimes k}[\sigma_{(k)}]
        =
        \mathcal{N}^{\otimes k}[\sigma_{(k)}].
    \end{align}
\end{itemize}

In particular, for block-i.i.d. product probe states, the corresponding
quantum Stein exponents are zero if and only if
\begin{itemize}
    \item in the memory-assisted strategy, for every collection of
    $k$ input probes $\{\tau_1,\ldots,\tau_k\}\subset
    \mathfrak{D}(R\otimes A)$ with $R\cong A$, there
    exists a channel $\Phi\in\mathsf{R}$ such that
    \begin{align}
        \left(\mathrm{id}_R\otimes\Phi\right)[\tau_i]
        =
        \left(\mathrm{id}_R\otimes\mathcal{N}\right)[\tau_i],
        \qquad \forall i=1,\ldots,k;
    \end{align}

    \item in the memoryless strategy, for every collection of 
    $k$ input probes $\{\tau_1,\ldots,\tau_k\}\subset
    \mathfrak{D}(A)$, there exists a channel
    $\Phi\in\mathsf{R}$ such that
    \begin{align}
        \Phi[\tau_i]
        =
        \mathcal{N}[\tau_i],
        \qquad \forall i=1,\ldots,k.
    \end{align}
\end{itemize}
\end{lemma}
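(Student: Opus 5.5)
The plan is to read the lemma directly off the finite-letter formulas of Theorem~\ref{theo:stein_gen} and Corollary~\ref{cor:stein_pro}. Both directions use two facts. First, $D(\rho\|\sigma)\ge 0$ with equality iff $\rho=\sigma$; this is Klein's inequality, or Pinsker's inequality in the form $D\ge \frac{2}{\ln 2}T^2$. Second, the inner minimum over $\Phi\in\mathsf R$ is attained, since $\mathsf R$ is compact and the relative entropy is lower semicontinuous, as already argued in the proof of Theorem~\ref{theo:stein_gen}. I treat the memory-assisted, general-probe case in detail. The memoryless case is identical with $\dim R=1$.

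\emph{($\Leftarrow$)} Suppose that for every $\sigma_{(k)}$ some $\Phi_{\sigma}\in\mathsf R$ equalizes the two outputs. Then
\[
\min_{\Phi\in\mathsf R}\tfrac1k D\big((\mathrm{id}_R\otimes\Phi^{\otimes k})[\sigma_{(k)}]\,\big\|\,(\mathrm{id}_R\otimes\mathcal N^{\otimes k})[\sigma_{(k)}]\big)\le \tfrac1k D(\omega\|\omega)=0,
\]
where $\omega$ denotes the common output. Nonnegativity forces equality, so the minimum is $0$ for each $\sigma_{(k)}$, and hence the supremum is $0$, giving $\widetilde E^k=0$.

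\emph{($\Rightarrow$)} Suppose $\widetilde E^k=0$. Since every term is nonnegative, the supremum being $0$ means that for every $\sigma_{(k)}$ (with $R\cong A^{\otimes k}$) the minimum over $\Phi$ is $0$. Because the minimum is attained, there is a $\Phi^\star\in\mathsf R$ with $D\big((\mathrm{id}_R\otimes\Phi^{\star\otimes k})[\sigma_{(k)}]\,\|\,(\mathrm{id}_R\otimes\mathcal N^{\otimes k})[\sigma_{(k)}]\big)=0$. The equality condition of Klein's inequality then gives equality of the two output states.

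For product probes I apply the same argument to the formula of Corollary~\ref{cor:stein_pro}. The objective is $\frac1k\sum_i D\big((\mathrm{id}_R\otimes\Phi)[\tau_i]\,\|\,(\mathrm{id}_R\otimes\mathcal N)[\tau_i]\big)$. This is a sum of nonnegative terms and is lower semicontinuous in $\Phi$, so its minimum over the compact set $\mathsf R$ is attained. The sum vanishes iff each term vanishes, that is, iff a single $\Phi$ reproduces $\mathcal N$ on all $\tau_1,\dots,\tau_k$ simultaneously. This yields both bullet conditions in the product case.

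The only step that needs care is the attainment of the minimum in the ($\Rightarrow$) direction. Without attainment, an infimum of $0$ would give only a sequence $\Phi_j$ with $D\to0$. I would handle this directly as a backup: by Pinsker, $D\to 0$ implies the trace distance between the outputs tends to $0$. By compactness of $\mathsf R$, extract a convergent subsequence $\Phi_j\to\Phi^\star\in\mathsf R$. Continuity of $\Phi\mapsto(\mathrm{id}_R\otimes\Phi^{\otimes k})[\sigma_{(k)}]$ then gives exact equality of the outputs at $\Phi^\star$. This avoids relying on lower semicontinuity when the support condition of $\mathcal N$ fails.
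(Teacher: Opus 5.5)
Your proposal is correct and follows essentially the paper's own reasoning: the paper reads the lemma directly off the finite-letter formulas of Theorem~\ref{theo:stein_gen} and Corollary~\ref{cor:stein_pro}, using $D\geq 0$ with equality iff the two outputs coincide, together with attainment of the inner minimum (compactness of $\mathsf{R}$ plus lower semicontinuity). Your Pinsker-plus-compactness fallback is valid but not needed, since the relative entropy is lower semicontinuous as an extended-real-valued function regardless of whether the support condition holds.
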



\subsection{Ordering relations}\label{subsec:order}

The different hierarchies discussed above induce corresponding ordering relations among the Stein exponents of the respective discrimination settings.

\begin{itemize}
    \item \textbf{Hierarchy between memoryless and memory-assisted strategies:}
    Since every memoryless discrimination strategy can be regarded as a memory-assisted strategy with a trivial (one-dimensional) reference system, the class of memory-assisted strategies strictly contains the class of memoryless strategies. Consequently, for any block size $k$, the corresponding Stein exponents satisfy
    \begin{align}
        E^k\left(\mathsf{R}^{\mathrm{iid}}\middle\|\mathcal{N}^{\mathrm{iid}}\right)
        \leq \widetilde{E}^k\left(\mathsf{R}^{\mathrm{iid}}\middle\|\mathcal{N}^{\mathrm{iid}}\right).
    \end{align}
    Similar ordering relations also hold when the probe states within each block are subject to any particular correlation constraint discussed in Sec.~\ref{probeconstraints}.

    \item \textbf{Hierarchy between strategies with different block sizes:} Increasing the block size enlarges the set of admissible probe states within each block. Any block-i.i.d. strategy with block size $k$ naturally includes fully i.i.d. probe states as a special case, obtained by choosing
    \begin{align*}
        \sigma_{(k)}=\rho^{\otimes k},
    \end{align*}
    where $\sigma_{(k)}$ is the input state of each block and $\rho$ is the input state for each channel use in a fully i.i.d. probe strategy. Therefore, optimizing over block-i.i.d. inputs cannot give a smaller value than optimizing over fully i.i.d. inputs. On the other hand, the set of arbitrary input probes (without the block-i.i.d. restriction) in the asymptotic channel-use limit contains block-i.i.d. input probes of any block size $k$ as a special case. Therefore, optimization over arbitrary probe states cannot give a smaller value than optimization over block-i.i.d. inputs. Hence, for every fixed block size $k$, we have
    \begin{align}
        &\widetilde{E}^1\left(\mathsf{R}^{\mathrm{iid}}\middle\|\mathcal{N}^{\mathrm{iid}}\right)\leq\widetilde{E}^k\left(\mathsf{R}^{\mathrm{iid}}\middle\|\mathcal{N}^{\mathrm{iid}}\right)\leq\widetilde{E}\left(\mathsf{R}^{\mathrm{iid}}\middle\|\mathcal{N}^{\mathrm{iid}}\right),\\
        \text{and}\quad&E^1\left(\mathsf{R}^{\mathrm{iid}}\middle\|\mathcal{N}^{\mathrm{iid}}\right)\leq E^k\left(\mathsf{R}^{\mathrm{iid}}\middle\|\mathcal{N}^{\mathrm{iid}}\right)\leq E\left(\mathsf{R}^{\mathrm{iid}}\middle\|\mathcal{N}^{\mathrm{iid}}\right).
    \end{align}
    
    Furthermore, consider two strategies $\biid{k}$ and $\biid{k'}$ with different block sizes $k$ and $k'$, respectively. Whenever $k'=\ell k$ for some $\ell\in\mathbb{N}$, every block-i.i.d. probe state $\sigma_k$ of block size $k$ can be used to construct a block-i.i.d. probe state of block size $k'$ by choosing
    \begin{align*}
    \sigma_{(k')}=\sigma_{(k)}^{\otimes \ell}.
    \end{align*}
    Therefore, the set of input probes in a block-i.i.d. strategy with block size $k'$ contains the set of input probes in a block-i.i.d. strategy with block size $k$. Hence, for every block size $k$,
    \begin{align}
    \left.
    \begin{aligned}
        \widetilde{E}^k\left(\mathsf{R}^{\mathrm{iid}}\middle\|\mathcal{N}^{\mathrm{iid}}\right)&\leq \widetilde{E}^{\ell k}\left(\mathsf{R}^{\mathrm{iid}}\middle\|\mathcal{N}^{\mathrm{iid}}\right),\\
        E^k\left(\mathsf{R}^{\mathrm{iid}}\middle\|\mathcal{N}^{\mathrm{iid}}\right)&\leq E^{\ell k}\left(\mathsf{R}^{\mathrm{iid}}\middle\|\mathcal{N}^{\mathrm{iid}}\right)
        \end{aligned}
        \right\}\qquad\forall \ell\in\mathbb{N}.
    \end{align}

    \item{\textbf{Hierarchy between strategies with different correlation constraints:}} For any fixed block size $k$, the set of block-i.i.d. product probe states is a strict subset of the set of block-i.i.d. fully separable probe states, which, in turn, is a strict subset of the set of block-i.i.d. probe states without any restriction on the correlations. Consequently, the corresponding Stein exponents satisfy
    \begin{align}
        &\widetilde{E}^k_{\mathrm{Pro}}\left(\mathsf{R}^{\mathrm{iid}}\middle\|\mathcal{N}^{\mathrm{iid}}\right)\leq \widetilde{E}^k_{\mathrm{Sep}}\left(\mathsf{R}^{\mathrm{iid}}\middle\|\mathcal{N}^{\mathrm{iid}}\right)\leq \widetilde{E}^k\left(\mathsf{R}^{\mathrm{iid}}\middle\|\mathcal{N}^{\mathrm{iid}}\right),\\
        &E^k_{\mathrm{Pro}}\left(\mathsf{R}^{\mathrm{iid}}\middle\|\mathcal{N}^{\mathrm{iid}}\right)\leq E^k_{\mathrm{Sep}}\left(\mathsf{R}^{\mathrm{iid}}\middle\|\mathcal{N}^{\mathrm{iid}}\right)\leq E^k\left(\mathsf{R}^{\mathrm{iid}}\middle\|\mathcal{N}^{\mathrm{iid}}\right).
    \end{align}
        
\end{itemize}

\section{Hypothesis Testing against mixed-unitary channels: Stein exponents as resource monotones}
\label{sec:MU-hypothesis}
We now specialize our general framework to the case where the null hypothesis $H_0$ is given by the set of mixed-unitary channels, i.e., $\mathsf{R}=\mathsf{MU}$, and investigate the resulting hierarchy of quantum Stein exponents for distinguishing a fixed non-mixed-unitary channel from mixed-unitary channels.

From a resource-theoretic point of view, one can consider $\mathsf{MU}$ as the set of free channels and non-mixed-unitary channels as resources. Accordingly, one can then define a class of free superchannels that preserve the set $\mathsf{MU}$, i.e., map every mixed-unitary channel to another mixed-unitary channel. We consider a particularly simple class of such free superchannels and prove that all the quantum Stein exponents introduced in the previous section, when applied to this setting, are monotonic under the action of these superchannels.

More specifically, consider the following class of channel transformations, obtained by pre- and post-processing with mixed-unitary channels:
    \begin{align}
        \Theta_{\Phi_1,\Phi_2}(\mathcal{E}) =\Phi_2\circ\mathcal{E}\circ\Phi_1,
    \end{align}
where $\mathcal{E}$ is any quantum channel, and  $\Phi_1,\Phi_2\in\mathsf{MU}$. Now, since the composition of mixed-unitary channels is again mixed-unitary, it follows that if $\mathcal{E}\in\mathsf{MU}$ then $\Theta_{\Phi_1,\Phi_2}(\mathcal{E})\in\mathsf{MU}$. Therefore, these transformations preserve the set $\mathsf{MU}$ and can be regarded as free superchannels in the resource theory of non-mixed-unitary channels. This motivates the following notion of mixed-unitary-convertibility.

\begin{definition}[MU-convertible channels]
A channel $\mathcal{N}$ is said to be \emph{MU-convertible} to a channel $\mathcal{M}$ if there exist mixed-unitary channels $\Phi_1$ and $\Phi_2$ such that
\begin{align}
    \mathcal{M}=\Theta_{\Phi_1,\Phi_2}(\mathcal{N}).
\end{align}
If, in addition, $\Phi_1$ and $\Phi_2$ are unitary channels, then $\mathcal{N}$ and $\mathcal{M}$ are said to be \emph{unitarily equivalent}. 
\end{definition}

Our following theorem establishes the monotonicity of the quantum Stein exponents under the class of free superchannels considered here.

\begin{theorem}[Monotonicity of Stein exponents]\label{theo:monotonicity}
Let $\mathcal{N}$ and $\mathcal{M}$ be two channels such that $\mathcal{N}$ is MU-convertible to $\mathcal{M}$. Then, for any block size $k$, the corresponding memoryless and memory-assisted quantum Stein exponents satisfy
    \begin{align}
    E^{k}\left(\mathsf{MU}^{\mathrm{iid}}\middle\|\mathcal{M}^{\mathrm{iid}}\right)&\leq E^{k}\left(\mathsf{MU}^{\mathrm{iid}}\middle\|\mathcal{N}^{\mathrm{iid}}\right),\\
    \widetilde{E}^{k}\left(\mathsf{MU}^{\mathrm{iid}}\middle\|\mathcal{M}^{\mathrm{iid}}\right)&\leq\widetilde{E}^{k}\left(\mathsf{MU}^{\mathrm{iid}}\middle\|\mathcal{N}^{\mathrm{iid}}\right),
    \end{align}
    and the same ordering holds for the Stein exponents corresponding to block-i.i.d. probe states with any of the correlation constraints considered above in $\biid{k}$ strategy. Moreover, the inequality is saturated whenever $\mathcal{N}$ and $\mathcal{M}$ are unitarily equivalent.
\end{theorem}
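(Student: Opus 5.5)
The plan is to work directly with the finite-letter formulas of Theorem~\ref{theo:stein_gen} (and Corollary~\ref{cor:stein_pro} together with the separable-restricted versions). The argument reduces to a single-probe inequality: for each admissible probe used against $\mathcal{M}$, we exhibit an admissible probe against $\mathcal{N}$ whose worst-case relative entropy is at least as large. Write $\mathcal{M}=\Phi_2\circ\mathcal{N}\circ\Phi_1$ with $\Phi_1,\Phi_2\in\mathsf{MU}$.

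Fix a block size $k$ and a probe $\sigma_{(k)}\in\mathfrak{D}(R\otimes A^{\otimes k})$, with $\dim R=1$ in the memoryless case. Define the pre-processed probe
\begin{align}
\sigma'_{(k)}:=(\mathrm{id}_R\otimes\Phi_1^{\otimes k})[\sigma_{(k)}].
\end{align}
Take any $\Phi\in\mathsf{MU}$ and set $\Phi':=\Phi_2\circ\Phi\circ\Phi_1$. This channel lies in $\mathsf{MU}$ because compositions of mixed-unitary channels are mixed-unitary. Then
\begin{align}
(\mathrm{id}_R\otimes\Phi'^{\otimes k})[\sigma_{(k)}]&=(\mathrm{id}_R\otimes\Phi_2^{\otimes k})\big[(\mathrm{id}_R\otimes\Phi^{\otimes k})[\sigma'_{(k)}]\big],\\
(\mathrm{id}_R\otimes\mathcal{M}^{\otimes k})[\sigma_{(k)}]&=(\mathrm{id}_R\otimes\Phi_2^{\otimes k})\big[(\mathrm{id}_R\otimes\mathcal{N}^{\otimes k})[\sigma'_{(k)}]\big].
\end{align}
Applying the data processing inequality for $D$ under the channel $\mathrm{id}_R\otimes\Phi_2^{\otimes k}$ gives
\begin{align}
\min_{\Psi\in\mathsf{MU}}D\big((\mathrm{id}_R\otimes\Psi^{\otimes k})[\sigma_{(k)}]\big\|(\mathrm{id}_R\otimes\mathcal{M}^{\otimes k})[\sigma_{(k)}]\big)\le D\big((\mathrm{id}_R\otimes\Phi^{\otimes k})[\sigma'_{(k)}]\big\|(\mathrm{id}_R\otimes\mathcal{N}^{\otimes k})[\sigma'_{(k)}]\big).
\end{align}
This holds for every $\Phi\in\mathsf{MU}$, so we may minimise the right-hand side over $\Phi$; the minima exist by the compactness argument in the proof of Theorem~\ref{theo:stein_gen}. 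The result is bounded above by the supremum over probes in the $\mathcal{N}$-problem. Dividing by $k$ and then taking the supremum over $\sigma_{(k)}$ on the left yields both stated inequalities.

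The one point needing care is that $\sigma'_{(k)}$ must be admissible in the same class as $\sigma_{(k)}$. The map $\mathrm{id}_R\otimes\Phi_1^{\otimes k}$ acts locally on each channel input and leaves $R$ untouched, so it preserves the relevant structure in each case:
\begin{itemize}
\item product probes $\bigotimes_i\tau_i$ are sent to $\bigotimes_i(\mathrm{id}_{\bar R}\otimes\Phi_1)[\tau_i]$;
\item fully separable probes of the form~\eqref{FS} remain fully separable with the same weights $p_j$;
\item in the memoryless case $R$ stays trivial.
\end{itemize}
Hence the same argument applies verbatim to $E^k_{\mathrm{Pro}}$, $E^k_{\mathrm{Sep}}$ and their memory-assisted counterparts. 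I expect this bookkeeping to be the only real obstacle; the rest is data processing.

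For saturation, suppose $\mathcal{M}=\mathcal{U}_2\circ\mathcal{N}\circ\mathcal{U}_1$ with unitary channels $\mathcal{U}_1,\mathcal{U}_2$. Then $\mathcal{N}=\mathcal{U}_2^{\dagger}\circ\mathcal{M}\circ\mathcal{U}_1^{\dagger}$, and unitary channels are mixed-unitary. So $\mathcal{M}$ is also MU-convertible to $\mathcal{N}$, and applying the inequality in both directions gives equality for every exponent considered.
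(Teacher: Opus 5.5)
Your proposal is correct and follows essentially the same route as the paper: the paper also pre-processes the probe with $\mathrm{id}_R\otimes\Phi_1^{\otimes k}$, restricts the minimisation to channels of the form $\Phi_2\circ\Phi\circ\Phi_1$, and applies data processing under $\Phi_2^{\otimes k}$. It likewise observes that the local pre-processing keeps the probe in its correlation-constrained class, and obtains saturation by applying the inequality in both directions using the inverse unitaries.
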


\begin{proof}
    We first prove the result for the memory-assisted setting without any correlation constraint on the block-i.i.d. probe states. The memoryless case follows by taking the corresponding auxiliary reference $R$ to be trivial.

    Consider the hypothesis testing problem with the null hypothesis $\mathsf{MU}$ and alternative hypothesis $\mathcal{M}$, under block-i.i.d. probes of fixed block size $k$. Let $\sigma_{(k)}\in\mathfrak{D}(R\otimes A^{\otimes k})$ be an arbitrary joint probe state over the reference $R$ and the $k$ channel inputs.

    Since $\mathcal{N}$ is MU-convertible to $\mathcal{M}$, let $\Phi_1$ and $\Phi_2$ be two mixed unitary channels that convert $\mathcal{N}$ to $\mathcal{M}$, i.e., $\mathcal{M}=\Phi_2\circ\mathcal{N}\circ\Phi_1$. Defining 
    $\widetilde{\sigma}_{(k)}=\mathrm{id}_R\otimes\Phi^{\otimes k}_1[\sigma_{(k)}]$, we have
    \begin{align}
        \mathrm{id}_R\otimes \mathcal{M}^{\otimes k}[\sigma_{(k)}]=\mathrm{id}_R\otimes\Phi_2^{\otimes k}\left[\mathrm{id}_R\otimes\mathcal{N}^{\otimes k}[\widetilde{\sigma}_{(k)}]\right].
    \end{align}
    Therefore,
    \begin{align}
        \min_{\Phi\in\mathsf{MU}} D\left(\mathrm{id}_R\otimes \Phi^{\otimes k}[\sigma_{(k)}]\middle\|\mathrm{id}_R\otimes\mathcal{M}^{\otimes k}[\sigma_{(k)}]\right)=\min_{\Phi\in\mathsf{MU}} D\left(\mathrm{id}_R\otimes\Phi^{\otimes k}[\sigma_{(k)}]\middle\|\mathrm{id}_R\otimes\Phi_2^{\otimes k}\left[\mathrm{id}_R\otimes\mathcal{N}^{\otimes k}[\widetilde{\sigma}_{(k)}]\right]\right).\label{eq:mono_1}
    \end{align} 

    Now consider the subset $\mathsf{MU}'\subseteq\mathsf{MU}$ consisting of the mixed-unitary channels obtained by applying the particular superchannel $\Theta_{\Phi_1,\Phi_2}$ to each channel in $\mathsf{MU}$, i.e., $\mathsf{MU}':=\{\Lambda=\Phi_2\circ\Phi\circ\Phi_1:\Phi\in\mathsf{MU}\}$. Then
    \begin{align}\label{eq:mono_2}
        \min_{\Phi\in\mathsf{MU}} &D\left(\mathrm{id}_R\otimes\Phi^{\otimes k}[\sigma_{(k)}]\middle\|\mathrm{id}_R\otimes\Phi_2^{\otimes k}\left[\mathrm{id}_R\otimes\mathcal{N}^{\otimes k}[\widetilde{\sigma}_{(k)}]\right]\right)\nonumber\\
        &\leq \min_{\Lambda\in\mathsf{MU}'} D\left(\mathrm{id}_R\otimes\Lambda^{\otimes k}[\sigma_{(k)}]\middle\|\mathrm{id}_R\otimes\Phi_2^{\otimes k}\left[\mathrm{id}_R\otimes\mathcal{N}^{\otimes k}[\widetilde{\sigma}_{(k)}]\right]\right),\nonumber\\
        &=\min_{\Phi\in\mathsf{MU}} D\left(\mathrm{id}_R\otimes\Phi_2^{\otimes k}\left[\mathrm{id}_R\otimes\Phi^{\otimes k}[\widetilde{\sigma}_{(k)}]\right]\middle\|\mathrm{id}_R\otimes\Phi_2^{\otimes k}\left[\mathrm{id}_R\otimes\mathcal{N}^{\otimes k}[\widetilde{\sigma}_{(k)}]\right]\right),\nonumber\\
        &\leq \min_{\Phi\in\mathsf{MU}} D\left(\mathrm{id}_R\otimes\Phi^{\otimes k}[\widetilde{\sigma}_{(k)}]\middle\|\mathrm{id}_R\otimes\mathcal{N}^{\otimes k}[\widetilde{\sigma}_{(k)}]\right),
    \end{align}
    where in the last inequality is due to DPI. Putting Eq.~\eqref{eq:mono_2} in Eq.~\eqref{eq:mono_1}, we get
    \begin{align}
        \min_{\Phi\in\mathsf{MU}} D\left(\mathrm{id}_R\otimes \Phi^{\otimes k}[\sigma_{(k)}]\middle\|\mathrm{id}_R\otimes\mathcal{M}^{\otimes k}[\sigma_{(k)}]\right)\leq \min_{\Phi\in\mathsf{MU}} D\left(\mathrm{id}_R\otimes\Phi^{\otimes k}[\widetilde{\sigma}_{(k)}]\middle\|\mathrm{id}_R\otimes\mathcal{N}^{\otimes k}[\widetilde{\sigma}_{(k)}]\right)\forall \sigma_{(k)}.\label{eq:mono_3}
    \end{align}

    Let us denote the set of states obtained by applying the particular channel $\mathrm{id}_R\otimes\Phi_1^{\otimes k}$ to states $\sigma_{(k)}\in\mathfrak{D}(R\otimes A^{\otimes k})$ by $\mathfrak{P}\subseteq\mathfrak{D}(R\otimes A^{\otimes k})$, i.e., $\mathfrak{P}:=\left\{\widetilde{\sigma}_{(k)}=\mathrm{id}_R\otimes\Phi_1^{\otimes k}[\sigma_{(k)}]:\sigma_{(k)}\in\mathfrak{D}(R\otimes A^{\otimes k})\right\}$. Therefore, from Eq.~\eqref{eq:mono_3} we have,
    \begin{subequations}
    \begin{align}
        \sup_{\sigma_{(k)}\in\mathfrak{D}(R\otimes A^{\otimes k})}\min_{\Phi\in\mathsf{MU}} &D\left(\mathrm{id}_R\otimes \Phi^{\otimes k}[\sigma_{(k)}]\middle\|\mathrm{id}_R\otimes\mathcal{M}^{\otimes k}[\sigma_{(k)}]\right)\nonumber\\
        &\leq \sup_{\widetilde{\sigma}_{(k)}\in\mathfrak{P}}\min_{\Phi\in\mathsf{MU}} D\left(\mathrm{id}_R\otimes\Phi^{\otimes k}[\widetilde{\sigma}_{(k)}]\middle\|\mathrm{id}_R\otimes\mathcal{N}^{\otimes k}[\widetilde{\sigma}_{(k)}]\right)\label{eq:mono_4}\\
        &\leq \sup_{\sigma_{(k)}\in\mathfrak{D}(R\otimes A^{\otimes k})}\min_{\Phi\in\mathsf{MU}} D\left(\mathrm{id}_R\otimes\Phi^{\otimes k}[\sigma_{(k)}]\middle\|\mathrm{id}_R\otimes\mathcal{N}^{\otimes k}[\sigma_{(k)}]\right),
    \end{align}
    \end{subequations}
    where the last inequality is due to the fact that $\mathfrak{P}\subseteq\mathfrak{D}(R\otimes A^{\otimes k})$.
    
Hence, we prove 
    \begin{align}
        \widetilde{E}^{k}\left(\mathsf{MU}^{\mathrm{iid}}\middle\|\mathcal{M}\right)\leq \widetilde{E}^{k}\left(\mathsf{MU}^{\mathrm{iid}}\middle\|\mathcal{N}\right).
        \label{eq:mon}
    \end{align}

    Now suppose that $\mathcal{N}$ and $\mathcal{M}$ are unitarily equivalent. Then there exist unitary channels $\mathcal{U}$ and $\mathcal{V}$ such that $$\mathcal{M}=\mathcal{U}\circ\mathcal{N}\circ\mathcal{V}.$$
    Since unitary channels are invertible, with their inverse channels given by their Hilbert--Schmidt adjoints, which are also unitary, we likewise have $$\mathcal{N}=\mathcal{U}^{\dagger}\circ\mathcal{M}\circ\mathcal{V}^{\dagger}.$$
    Applying the previously established monotonicity (Eq.~\eqref{eq:mon}) in both directions yields
    \begin{align}
        \widetilde{E}^{k}\left(\mathsf{MU}^{\mathrm{iid}}\middle\|\mathcal{M}^\mathrm{iid}\right)= \widetilde{E}^{k}\left(\mathsf{MU}^{\mathrm{iid}}\middle\|\mathcal{N}^\mathrm{iid}\right).
    \end{align}

    For proving that the same monotonicity also holds for the Stein exponents corresponding to probe states with any of the correlation constraints within each block considered above, note that the pre-processing channel $\Phi_1^{\otimes k}$ acts locally on each of the $k$ channel inputs. Since such a local operation cannot generate correlations between different channel uses, the class of correlation-constrained admissible probe states is preserved under $\mathrm{id}_R\otimes\Phi_1^{\otimes k}$. Therefore, in the correlation-constrained case, the supremum on the LHS of Eq.~\eqref{eq:mono_4} is over the corresponding class of correlation-constrained states, while that on the RHS is over a subset of this class. The rest of the proof proceeds in exactly the same way.
\end{proof}

The fact that all the Stein exponents associated with a fixed channel $\mathcal{N}$ vanish whenever $\mathcal{N}\in\mathsf{MU}$, together with their monotonicity established in Theorem~\ref{theo:monotonicity}, implies that they form a family of resource monotones for non-mixed-unitarity under the class of free superchannels considered here. However, as we demonstrate subsequently, not all of these monotones are faithful: certain Stein exponents may vanish for some non-mixed-unitary channels as well.

\section{Hierarchy of Stein Exponents}
\label{sec:MU-hierarchy}
We now show that the general ordering relations among the Stein exponents corresponding to different probe-state restrictions, established in Sec.~\ref{sec:block-iid}, are indeed strict in various instances of hypothesis testing for non-mixed-unitarity. We demonstrate this through explicit examples of unital but non-mixed-unitary channels.

Before presenting our results, it is instructive to state the following lemma, which highlights an interesting relation between unital and mixed-unitary channels.

\begin{lemma}[Uhlmann's majorisation theorem~\cite{Watrous2011}]\label{lemma:Uhlmann}
    Let $\rho$ and $\sigma$ be two quantum states in $\mathfrak{D}(\mathbb{C}^d)$. The following statements are equivalent:
    \begin{enumerate}
        \item $\rho$ majorises $\sigma$ ($\rho\succ\sigma$).
        \item There exists a unital channel $\mathcal{N}$ such that $\mathcal{N}(\rho)=\sigma$.
        \item There exists a mixed-unitary channel $\Phi$ such that $\Phi(\rho)=\sigma$.
    \end{enumerate}
\end{lemma}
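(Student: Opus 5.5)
We will prove the cycle of implications $3\Rightarrow 2\Rightarrow 1\Rightarrow 3$. The step $3\Rightarrow 2$ is immediate: every mixed-unitary channel $\Phi=\sum_i p_i\,\mathcal{U}_i$ satisfies $\Phi(\mathbb{I})=\sum_i p_i U_iU_i^\dagger=\mathbb{I}$, so it is unital.

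For $2\Rightarrow 1$, let $\mathcal{N}$ be unital with $\mathcal{N}(\rho)=\sigma$. Write the spectral decompositions $\rho=\sum_j \lambda_j\ket{u_j}\!\bra{u_j}$ and $\sigma=\sum_i\mu_i\ket{v_i}\!\bra{v_i}$. Define $A_{ij}:=\bra{v_i}\mathcal{N}(\ket{u_j}\!\bra{u_j})\ket{v_i}$, so that $\mu_i=\sum_j A_{ij}\lambda_j$. The matrix $A$ has nonnegative entries. Its columns sum to one because $\mathcal{N}$ is trace preserving. Its rows sum to one because $\mathcal{N}$ is unital: $\sum_j\mathcal{N}(\ket{u_j}\!\bra{u_j})=\mathcal{N}(\mathbb{I})=\mathbb{I}$. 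Hence $A$ is doubly stochastic. The classical Hardy--Littlewood--P\'olya characterization (equivalently Birkhoff's theorem plus Schur convexity) then gives $\lambda(\rho)\succ\lambda(\sigma)$, i.e.\ $\rho\succ\sigma$.

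For $1\Rightarrow 3$, suppose $\lambda:=\lambda(\rho)\succ\mu:=\lambda(\sigma)$. By Hardy--Littlewood--P\'olya there is a doubly stochastic $A$ with $\mu=A\lambda$. By Birkhoff's theorem, $A=\sum_k q_k P_{\pi_k}$ is a convex combination of permutation matrices. Let $W$ be the unitary with $W\ket{u_j}=\ket{v_j}$, mapping the eigenbasis of $\rho$ to that of $\sigma$ with eigenvalues in matching order. Let $V_{\pi}$ be the permutation unitary $\ket{v_j}\mapsto\ket{v_{\pi(j)}}$. Define
\begin{align}
\Phi(X):=\sum_k q_k\,(V_{\pi_k}W)\,X\,(V_{\pi_k}W)^\dagger .
\end{align}
This channel is mixed-unitary by construction. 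Moreover $\Phi(\rho)=\sum_j\bigl(\sum_k q_k\lambda_{\pi_k^{-1}(j)}\bigr)\ket{v_j}\!\bra{v_j}=\sum_j (A\lambda)_j\ket{v_j}\!\bra{v_j}=\sigma$, where the permutation convention is chosen so that $P_{\pi}\lambda$ has entries $\lambda_{\pi^{-1}(j)}$.

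The only nontrivial ingredient is the classical equivalence $\lambda\succ\mu\iff\mu=A\lambda$ for some doubly stochastic $A$, together with Birkhoff's theorem; both are standard and will be cited rather than reproved. The main point requiring care is bookkeeping: keeping the eigenvalue orderings consistent with the permutation convention in the $1\Rightarrow3$ step.
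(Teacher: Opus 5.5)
Your proof is correct. The cycle $3\Rightarrow2\Rightarrow1\Rightarrow3$ is the standard textbook argument: the doubly stochastic matrix $A_{ij}=\bra{v_i}\mathcal{N}(\ketbra{u_j}{u_j})\ket{v_i}$ gives $2\Rightarrow1$, and a Birkhoff decomposition realized by permutation unitaries in the eigenbasis of $\sigma$ gives $1\Rightarrow3$. The paper does not prove this lemma itself; it only cites Watrous, whose treatment proceeds in essentially the same way.
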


Lemma~\ref{lemma:Uhlmann} implies that, although the set of mixed-unitary channels is a strict subset of the set of unital channels for $d\geq 3$, the state-conversion relations associated with the two classes are identical; both are characterized by majorisation.

Our following theorem establishes that, in the absence of auxiliary memory, fully i.i.d. probe states are insufficient for distinguishing any unital channel from the set of mixed-unitary channels, even with arbitrarily many uses of the channel.

\begin{theorem}\label{theo:unital_nogo}
    Let $\mathcal{N}$ be a unital but non-mixed-unitary channel on $\mathfrak{B}(\mathbb{C}^d)$, $d\geq 3$. Then
    \begin{align}
        E^{1}(\mathsf{MU}^{\mathrm{iid}}\|\mathcal{N}^\mathrm{iid})=0.
    \end{align}
\end{theorem}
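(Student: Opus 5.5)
The plan is to use the vanishing criterion of Lemma~\ref{usefulcond} with block size $k=1$ in the memoryless setting. By that lemma, $E^{1}(\mathsf{MU}^{\mathrm{iid}}\|\mathcal{N}^{\mathrm{iid}})=0$ holds if and only if for every single input state $\rho\in\mathfrak{D}(\mathbb{C}^d)$ there exists some $\Phi\in\mathsf{MU}$, which may depend on $\rho$, with $\Phi(\rho)=\mathcal{N}(\rho)$. The whole proof therefore reduces to showing that the action of a unital channel on any one state can be reproduced by a mixed-unitary channel.

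For this step I will invoke Lemma~\ref{lemma:Uhlmann}. Fix an arbitrary $\rho$ and set $\sigma:=\mathcal{N}(\rho)$. Because $\mathcal{N}$ is unital, statement 2 of the lemma holds for the pair $(\rho,\sigma)$ with $\mathcal{N}$ itself as the witnessing unital channel. The lemma then gives $\rho\succ\sigma$, and from statement 3 there is a mixed-unitary $\Phi_\rho$ with $\Phi_\rho(\rho)=\sigma=\mathcal{N}(\rho)$.

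To make the construction explicit, one can take it from the majorisation relation directly. Write $\rho=\sum_i\lambda_i\ket{e_i}\!\bra{e_i}$ and $\sigma=\sum_i\mu_i\ket{f_i}\!\bra{f_i}$, with eigenvalues in decreasing order. Then $\mu=D\lambda$ for some doubly stochastic $D$, and by Birkhoff's theorem $D=\sum_\pi q_\pi P_\pi$. The channel
\[
\Phi_\rho(X)=\sum_\pi q_\pi\, W_\pi X W_\pi^\dagger,
\qquad
W_\pi:=\sum_i \ket{f_{\pi(i)}}\!\bra{e_i},
\]
is mixed-unitary, and it maps $\rho$ to $\sigma$ once the permutation convention is fixed appropriately.

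Plugging this $\Phi_\rho$ into the minimisation in the corollary for fully i.i.d. probes gives $\min_{\Phi\in\mathsf{MU}}D(\Phi(\rho)\|\mathcal{N}(\rho))=0$ for every $\rho$. Taking the supremum over $\rho$ then yields $E^{1}=0$.

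I do not expect a real obstacle. The one subtle point is that $\Phi$ depends on $\rho$, and this is permitted because in the Stein formula the minimum over $\mathsf{MU}$ sits inside the supremum over $\rho$. This is also why the argument breaks down for $k\ge 2$ product probes, where a single $\Phi$ must match $\mathcal{N}$ on several states at once. The hypothesis $d\ge3$ only ensures that unital non-mixed-unitary channels exist; the argument itself does not use it.
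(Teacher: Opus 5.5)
Your proposal is correct and follows essentially the same route as the paper. For each fixed input $\rho$, unitality of $\mathcal{N}$ gives $\rho\succ\mathcal{N}(\rho)$, and Uhlmann's theorem (Lemma~\ref{lemma:Uhlmann}) then supplies a $\rho$-dependent mixed-unitary $\Phi_\rho$ with $\Phi_\rho(\rho)=\mathcal{N}(\rho)$, so the inner minimum vanishes for every $\rho$. Your explicit Birkhoff construction of $\Phi_\rho$ is valid with the convention $(P_\pi)_{ji}=\delta_{j,\pi(i)}$, but it is an optional unpacking of the lemma's implication from statement~2 to statement~3.
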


\begin{proof}
Let $\rho\in\mathfrak{D}(\mathbb{C}^d)$ be an arbitrary probe state that is used as the input for each channel use. Since $\mathcal{N}$ is unital, the output state $\mathcal{N}(\rho)$ satisfies $$\rho\succ\mathcal{N}(\rho).$$
Therefore, according to Lemma~\ref{lemma:Uhlmann}, there exists a mixed-unitary channel
$\Phi_{\rho}$ depending on $\rho$ satisfying $$\Phi_{\rho}(\rho)=\mathcal{N}(\rho).$$ Consequently,
\begin{align}
    \min_{\Phi\in\mathsf{MU}} D(\Phi(\rho)\|\mathcal{N}(\rho))=D(\Phi_{\rho}(\rho)\|\mathcal{N}(\rho))=0.
\end{align}
Since this holds for every input state $\rho$, the claimed result follows.
\end{proof}

This naturally motivates the question of what additional probe settings are required to obtain a nonzero Stein exponent in the non-mixed-unitarity testing of unital channels. In particular, within the memoryless setting, is increasing the block size while restricting the input probes within each block to product states sufficient, or is entanglement across different channel uses necessary? Furthermore, does access to auxiliary memory provide an advantage even when the input probes are fully i.i.d.? In the following, we address these questions for several unital but non-mixed-unitary channels. 

\subsection{Strict advantage of increasing block size with product inputs}\label{subsec:block_size_advantage}

Here we consider the setting of block-i.i.d. product probe states ($\biid{k}$) without access to auxiliary reference system, i.e., in the memoryless setting. The particular family of channels that we consider here is the family of $O(d)$-covariant channels, where $O(d)$ is the real orthogonal group in dimension $d$, defined as \[
O(d):=\left\{O\in\mathbb{R}^{d\times d}:O^{T}O=OO^{T}=\mathbb{I}_d\right\}.
\]
\paragraph*{\textbf{$O(d)$-covariant channels:}} A quantum channel $\mathcal{E}$ on $\mathfrak{B}(\mathbb{C}^d)$ is said to be $O(d)$-covariant if 
\begin{align}
    \mathcal{E}\left(OXO^{\intercal}\right)=O\mathcal{E}(X)O^{\intercal},\qquad\forall\,O\in O(d),\,X\in\mathfrak{B}(\mathbb{C}^d).
\end{align}
It follows that the family of $O(d)$-covariant channels forms a subset of the set of unital channels on $\mathfrak{B}(\mathbb{C}^d)$. Furthermore, in every dimension $d$, the set is convex and has three extreme points \cite{Keyl_2002,Mendl2009}:
\begin{itemize}
    \item the identity channel: \quad $\mathrm{id}(X)=X$,
    \item the Werner-Holevo channel: \quad $\mathcal{N}_{WH}(X)=\frac{1}{d-1}\left[\Tr(X)\,\mathbb{I}_d-X^{\intercal}\right]$,
    \item and the channel: \quad $\Psi(X)=\frac{1}{(d-1)(d+2)}\left[d\left(\Tr(X)\,\mathbb{I}_d+X^{\intercal}\right)-2X\right]$,
\end{itemize}
where $X^{\intercal}$ denotes the transpose of $X$ with respect to the computational basis. The channel $\Psi$ is mixed-unitary for every dimension $d$, whereas the Werner--Holevo channel $\mathcal{N}_{WH}$ is mixed-unitary if and only if $d$ is even. Therefore, every $O(d)$-covariant channel is mixed-unitary for even $d$, while for odd $d$ the family contains non-mixed-unitary channels \cite{Mendl2009}.

\subsubsection{Block-i.i.d. product states of block size $k=2$ may be insufficient for non-mixed-unitary testing of $O(3)$-covariant channels}
Here, we investigate whether block-i.i.d. product states of block size two can test non-mixed-unitarity of $O(3)$-covariant channels. Since the Werner–Holevo channel $\mathcal{N}_{WH}$ is the only extreme non-mixed-unitary channel in the set of $O(3)$-covariant channels and any $O(3)$-covariant channel can be expressed as a convex mixture of the extreme channels mentioned above, a direct approach is to ask whether a mixed-unitary channel can simulate the action of $\mathcal{N}_{WH}$ on an arbitrary pair of input states. In particular, let us examine whether an equal-weight, rank-two mixed-unitary channel suffices \cite{girard2022mixed}. Specifically, for unitaries $U, V \in U(3)$, where $U(n)$ denotes the set of all $n$-dimensional unitary matrices, let
\begin{equation}
\Phi_{U,V}(X) = \frac{1}{2}\left(UXU^\dagger + VXV^\dagger\right).
\label{eq:Phi_UV}
\end{equation}
We then ask whether for every pair of density operators $\{\rho_1, \rho_2\}$ acting on $\mathbb{C}^3$, there exist $U, V$, dependent on the pair, satisfying
\begin{equation}
\label{eq:WH_MU_equivalence}
\Phi_{U,V}(\rho_i) = \mathcal{N}_{WH}(\rho_i), \quad \text{for } i \in {1, 2}.
\end{equation}
An affirmative answer implies that block size two is insufficient to distinguish any $O(3)$-covariant non-mixed-unitary channel from the mixed-unitary set. 

Let $\mathcal{H}_0(3)$ denote the real vector space of traceless hermitian $3\times 3$ matrices. Expressing each state as $\rho_i = \frac{\mathbb I_3}{3} + M_i$ for $i=1,2$, where $M_1,M_2\in\mathcal{H}_0(3)$, one finds that this condition is equivalent to finding $U, V \in U(3)$ such that
\begin{equation}
\label{eq:transpose_hollowization}
M_i^T + UM_iU^\dagger + VM_iV^\dagger = 0, \quad \text{for } i \in {1, 2}.
\end{equation}
By combining $M_1$ and $M_2$ into a single general traceless matrix $Z = M_1 + \iota M_2$ with $\iota=\sqrt{-1}$, this requirement reduces compactly to
\begin{equation}
\label{eq:transpose_hollowization_complex}
Z^T + UZU^\dagger + VZV^\dagger = 0.
\end{equation}

By Fillmore’s theorem~\cite{Fillmore1969}, every complex traceless matrix is unitarily similar to a hollow matrix (a matrix whose diagonal entries all vanish). Hence the following lemma holds.

\begin{lemma}
[Hollowization of a traceless matrix]
\label{lemma:single_Hollowization}
If $X \in \mathcal M_n(\mathbb{C})$ with $\mathcal M_n(\mathbb{C})$ being the set of all $n$-dimensional complex matrices, and $\Tr(X) = 0$, then there exists a unitary $Q \in U(n)$ such that every diagonal entry of $Q X Q^\dagger$ equals to zero.
\end{lemma}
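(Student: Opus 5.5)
The plan is to prove Lemma~\ref{lemma:single_Hollowization} by induction on $n$, using the fact that the numerical range of any matrix is convex (Toeplitz--Hausdorff theorem). The key observation is that for traceless $X\in\mathcal M_n(\mathbb C)$, the point $0=\Tr(X)/n$ is a convex combination of the eigenvalues of $X$. Hence $0$ lies in the numerical range $W(X)=\{x^\dagger X x:\|x\|=1\}$, because $W(X)$ is convex and contains every eigenvalue. So we may pick a unit vector $q_1$ with $q_1^\dagger X q_1=0$. The base case $n=1$ is trivial: a traceless $1\times 1$ matrix is already $0$.

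For the inductive step, extend $q_1$ to an orthonormal basis $\{q_1,\dots,q_n\}$ and let $Q_1$ be the unitary whose rows are $q_1^\dagger,\dots,q_n^\dagger$. Then $(Q_1XQ_1^\dagger)_{11}=q_1^\dagger X q_1=0$, and the trailing $(n-1)\times(n-1)$ principal block $X'$ satisfies $\Tr X'=\Tr X-0=0$. By the induction hypothesis there is $Q'\in U(n-1)$ with $Q'X'Q'^\dagger$ hollow. Set $Q=(1\oplus Q')Q_1$. Conjugating by $1\oplus Q'$ leaves the $(1,1)$ entry unchanged, since that entry depends only on the first basis vector, which is fixed. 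It also transforms the lower block into $Q'X'Q'^\dagger$. Therefore every diagonal entry of $QXQ^\dagger$ vanishes.

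The only nontrivial ingredient is $0\in W(X)$, and I would simply invoke Toeplitz--Hausdorff. If a self-contained route is preferred, one can instead argue directly. Write $X=H+iK$ with $H,K$ Hermitian and traceless, and use the fact that the joint numerical range $\{(x^\dagger Hx,x^\dagger Kx)\}$ is convex for $n\ge 3$; the $2\times 2$ case follows from the elliptical range theorem. Toeplitz--Hausdorff already covers all $n$, though, so this alternative is unnecessary. With that ingredient in hand the induction is routine, and the lemma is precisely the statement of Fillmore's theorem~\cite{Fillmore1969} cited above.
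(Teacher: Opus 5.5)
Your proof is correct. The paper gives no argument of its own for this lemma; it states it as an immediate consequence of Fillmore's theorem. What you wrote is the standard argument behind that citation. Toeplitz--Hausdorff puts $0=\Tr(X)/n$ in the numerical range, which gives a unit vector $q_1$ with $q_1^\dagger X q_1=0$. The compression of $X$ to $q_1^\perp$ is again traceless, so induction finishes, and your bookkeeping with $Q=(1\oplus Q')Q_1$ is right.

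The citation buys brevity. Your version buys self-containment, and it makes transparent the step the paper uses right afterwards. For $Z=M_1+\iota M_2$, the diagonal entries $(QZQ^\dagger)_{kk}=(QM_1Q^\dagger)_{kk}+\iota(QM_2Q^\dagger)_{kk}$ have real summands. So hollowing $Z$ hollows both Hermitian parts at once. Equivalently, one needs an orthonormal basis of common zeros of the two quadratic forms $x^\dagger M_j x$.

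One small correction to your aside. The joint numerical range $\{(x^\dagger Hx,\,x^\dagger Kx)\}$ of two Hermitian matrices is just $W(X)$ viewed in $\mathbb{R}^2$. It is therefore convex in every dimension by Toeplitz--Hausdorff itself. The restriction $n\ge 3$ belongs to convexity of the joint numerical range of \emph{three} Hermitian matrices. So that route is not an independent or more self-contained alternative, though, as you say, it is not needed.
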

Any general traceless matrix $Z \in \mathcal M_3(\mathbb{C})$ admits a unique Cartesian decomposition $Z = M_1 + \iota M_2$, where $M_1, M_2 \in \mathcal{H}_0(3)$ are traceless Hermitian matrices. Applying Lemma~\ref{lemma:single_Hollowization} directly to $Z$ implies that a single unitary $Q \in U(3)$ simultaneously hollows out both Hermitian components: $(Q M_j Q^\dagger)_{kk} = 0~\forall ~k$ and $j \in \{1, 2\}$. This property yields the following identity for any pair of traceless Hermitian matrices.
\begin{lemma}
\label{lemma:simul_averaging}
For every pair $M_1, M_2 \in \mathcal{H}_0(3)$, there exists a unitary $W \in U(3)$ such that
\begin{equation}
\label{eq:simul_hollowization}
M_j + W M_j W^\dagger + W^\dagger M_j W = 0, \quad j \in {1, 2}.
\end{equation}
\end{lemma}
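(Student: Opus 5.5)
The plan is to reduce the statement to a single hollowization, exactly as the paragraph preceding Lemma~\ref{lemma:simul_averaging} suggests, and then use a cyclic permutation whose powers average any hollow $3\times 3$ matrix to zero. First I will set $Z=M_1+\iota M_2$, which is traceless because $M_1,M_2\in\mathcal{H}_0(3)$. By Lemma~\ref{lemma:single_Hollowization}, there is a $Q\in U(3)$ with $(QZQ^\dagger)_{kk}=0$ for $k=1,2,3$. Since $QZQ^\dagger=QM_1Q^\dagger+\iota\,QM_2Q^\dagger$ is the Cartesian decomposition into Hermitian parts, and the diagonal entries of a Hermitian matrix are real, taking real and imaginary parts of each diagonal entry shows that both $N_j:=QM_jQ^\dagger$ are hollow.

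Next I will take the cyclic shift $P$ defined by $P e_k=e_{k+1}$ (indices mod $3$), with $P^3=\mathbb I$ and $P^\dagger=P^2$. For any $X\in\mathcal M_3(\mathbb C)$, the average $\sum_{r=0}^{2}P^rXP^{-r}$ has $(a,b)$ entry $\sum_r X_{a-r,\,b-r}$. Off the diagonal, where $a\neq b$, this is a sum over the full orbit of the offset $b-a$, so it equals a constant along each off-diagonal "circulant" line. That is not zero in general.

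So the plain cyclic shift does not work, and I will switch to the clock/phase operator instead. Let $D=\mathrm{diag}(1,\omega,\omega^2)$ with $\omega=e^{2\pi\iota/3}$. Then $(D^rXD^{-r})_{ab}=\omega^{r(a-b)}X_{ab}$, so $\sum_{r=0}^{2}D^rXD^{-r}$ kills every off-diagonal entry, since $1+\omega^{s}+\omega^{2s}=0$ for $s\not\equiv 0$, and keeps three times the diagonal. Hence for a hollow $N$ we get $N+DND^\dagger+D^\dagger ND=0$, using $D^2=D^\dagger$. This is the key identity. Applied to $N_j$, it gives
\begin{equation*}
QM_jQ^\dagger+DQM_jQ^\dagger D^\dagger+D^\dagger QM_jQ^\dagger D=0 .
\end{equation*}
Conjugating by $Q^\dagger$ yields $M_j+WM_jW^\dagger+W^\dagger M_jW=0$ with $W:=Q^\dagger DQ\in U(3)$, using $W^\dagger=Q^\dagger D^\dagger Q$. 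This holds for both $j=1,2$ with the same $W$, which is exactly the claim.

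The only genuinely nontrivial ingredient is the simultaneous hollowization. It is supplied by Fillmore's theorem via Lemma~\ref{lemma:single_Hollowization} together with the Cartesian-decomposition trick, which turns a statement about one non-normal traceless matrix into one about a pair of Hermitian matrices. The remaining points to check carefully are that the diagonal entries of $QM_jQ^\dagger$ are indeed real, so that the real/imaginary split is valid, and that the $\omega$-phase identity uses $D^2=D^\dagger=D^{-1}$. That is why the statement features $W$ and $W^\dagger$ rather than $W$ and $W^2$ separately.
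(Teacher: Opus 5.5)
Your proof is correct and follows the paper's argument essentially verbatim: simultaneous hollowization of $M_1,M_2$ via Fillmore's theorem applied to $Z=M_1+\iota M_2$, then averaging over the clock operator $D=\mathrm{diag}(1,\omega,\omega^2)$ and setting $W=Q^\dagger D Q$. Your explicit check that the diagonal entries of $QM_jQ^\dagger$ are real, so both Hermitian parts are hollow, makes precise a step the paper leaves implicit, and the cyclic-shift detour can simply be deleted.
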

\begin{proof}
By simultaneous hollowization, choose $Q \in U(3)$ such that $X_j := Q M_j Q^\dagger$ is hollow for both $j = 1, 2$. Let $\omega = e^{2\pi i / 3}$ and set $D = \operatorname{diag}(1, \omega, \omega^2)$. For the off-diagonal entries ($r \neq s$), we have
$$\left(X_j + D X_j D^\dagger + D^2 X_j (D^2)^\dagger\right)_{rs} = (X_j)_{rs}\left(1 + \omega^{r-s} + \omega^{2(r-s)}\right) = 0.$$
Because each $X_j$ is hollow, the diagonal entries also vanish identically. Hence \[X_j+DX_jD^\dagger+D^2X_j(D^2)^\dagger=0.\] 
Since $D^2=D^\dagger$, conjugating by $Q^\dagger$ and taking $W=Q^\dagger D Q$ immediately yields Eq.~\eqref{eq:simul_hollowization}.
\end{proof}
Note that Eq.~\eqref{eq:simul_hollowization} immediately implies Eq.~\eqref{eq:transpose_hollowization} whenever the pair $(M_1, M_2)$ is jointly unitarily equivalent to its transpose (joint UET), i.e., whenever there exists a single unitary $S \in U(3)$ satisfying $S M_j S^\dagger = M_j^T$ for $j \in \{1, 2\}$ \cite{garcia2012unitary}. It is therefore crucial to identify concrete conditions under which joint UET holds. For $3 \times 3$ traceless Hermitian pairs this is, in fact, governed by one exact scalar criterion. 

Towards finding the joint UET criterion for the pair $(M_1, M_2)\in \mathcal{H}_0(3)^{\times2}$, let us define the simultaneous-similar $GL(3,\mathbb{C})$-orbit of a pair of Hermitian matrices $(M_1, M_2)$ as \[\mathcal{O}(M_1,M_2)=\{(PM_1P^{-1}, PM_2P^{-1}):P\in GL(3,\mathbb{C})\}.\]
Because tuples of Hermitian matrices are completely reducible, by Artin's theorem~\cite{Artin1969,Morrison1980}, their simultaneous-similar $GL(3,\mathbb{C})$-orbit is topologically closed. Invariant theory then guarantees that, for two distinct closed orbits $\mathcal{O}(M_1,M_2)$ and $\mathcal{O}(M_1',M_2')$, there exists an invariant polynomial $I$ such that $I(M_1,M_2)\neq I(M_1',M_2')$~\cite{Mumford2002}. Conversely, two closed orbits coincide if and only if all invariant polynomials take the same values on them. Hence, two Hermitian pairs $(M_1,M_2)$ and $(M_1',M_2')$ belong to the simultaneous-similar $GL(3,\mathbb C)$-orbit if and only if all polynomial invariants under simultaneous conjugation take the same values on the two pairs. By Procesi's theorem~\cite{Procesi1976}, the invariant algebra for simultaneous conjugation of matrices is generated by traces of words in the matrices\footnote{A word of matrices is a finite product of matrices selected from a given collection, in a specified order. For the pair $(M_1,M_2)$, examples of words are $M_1, M_2, M_1M_2, M_2M_1, M_1^2M_2,M_1M_2M_1$, and so on. Since matrices generally do not commute, $M_1M_2$ and $M_2M_1$ are different words. More generally, a $k$-length word has the form $\omega(M_1,M_2)=L_1L_2...L_k$, where each $L_i\in\{M_1,M_2\}$.}. In the particular case of two traceless $3\times3$ matrices, a minimal generating set may be chosen as the following nine polynomial invariants~\cite{Aslaksen2006}:
\begin{eqnarray}
\label{eq:inv_algebra}
&&I_1(M_1,M_2)=\Tr(M_1^2),\;
I_2(M_1,M_2)=\Tr(M_1M_2),\;
I_3(M_1,M_2)=\Tr(M_2^2),\nonumber\\
&&I_4(M_1,M_2)=\Tr(M_1^3),\;
I_5(M_1,M_2)=\Tr(M_1^2M_2),\;
I_6(M_1,M_2)=\Tr(M_1M_2^2),\nonumber\\
&&I_7(M_1,M_2)=\Tr(M_2^3),\;
I_8(M_1,M_2)=\Tr\left(M_1^2M_2^2-M_1M_2M_1M_2\right),\nonumber\\
&&I_9(M_1,M_2)=\Tr\left(M_1^2M_2^2M_1M_2-M_2^2M_1^2M_2M_1\right).
\end{eqnarray}
Consequently, for two traceless Hermitian pairs $(M_1,M_2)$ and $(M_1',M_2')$, whose simultaneous-similar $GL(3,\mathbb C)$-orbits are closed,
\[\mathcal O(M_1,M_2)=\mathcal O(M_1',M_2') \quad\Longleftrightarrow\quad I_k(M_1,M_2)=I_k(M_1',M_2'),\qquad k=1,\ldots,9.\]
Thus, the above nine invariants provide a finite necessary-and-sufficient criterion for distinguishing the closed simultaneous-similar $GL(3,\mathbb C)$-orbits of traceless Hermitian matrix pairs. The following theorem provides the necessary-and-sufficient condition for a pair of traceless $3\times3$ Hermitian matrices to be joint UET.
\begin{theorem}
    \label{th:joint-UET}
    For traceless $3\times3$ Hermitian matrices $M_1, M_2$, the following are equivalent:
    \begin{itemize}
        \item $(M_1,M_2)$ is joint UET;
        \item $\chi(M_1,M_2)=\Tr\left(M_1^2M_2^2M_1M_2-M_2^2M_1^2M_2M_1\right)=0.$
    \end{itemize}
    Consequently, every pair with $\chi(M_1,M_2)=0$ satisfies Eq.~\eqref{eq:transpose_hollowization}. 
\end{theorem}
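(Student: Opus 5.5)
Joint UET means there is $S\in U(3)$ with $SM_jS^\dagger=M_j^T=\overline{M_j}$ for $j=1,2$; I will prove the equivalence through the invariant characterization of closed orbits stated just above. The plan has three steps: compare the nine invariants of $(M_1,M_2)$ with those of $(M_1^T,M_2^T)$, pass from equality of invariants to simultaneous similarity, and then upgrade similarity to unitary similarity.

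\textbf{Step 1 (comparing invariants).} For Hermitian $M_j$ we have $M_j^T=\overline{M_j}$. Any word $w$ satisfies $\Tr(w(M_1^T,M_2^T))=\Tr(w^{\mathrm{rev}}(M_1,M_2)^T)=\Tr(w^{\mathrm{rev}}(M_1,M_2))$, where $w^{\mathrm{rev}}$ is the reversed word. Hence $I_k(M_1^T,M_2^T)$ equals $I_k$ of $(M_1,M_2)$ evaluated on reversed words.

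For $I_1,\dots,I_7$ the reversed word is a cyclic permutation of the original, so these seven agree. For $I_8$, the reversal of $M_1M_2M_1M_2$ is $M_2M_1M_2M_1$, which is cyclically equal to it, and $M_1^2M_2^2$ reverses to $M_2^2M_1^2$, which is cyclically equal to $M_1^2M_2^2$. So $I_8$ agrees too.

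For $I_9$, reversal sends $M_1^2M_2^2M_1M_2$ to $M_2M_1M_2^2M_1^2$, which is cyclically equal to $M_2^2M_1^2M_2M_1$. Thus $I_9(M^T)=-I_9(M)=-\chi$. The nine invariants of the two pairs therefore coincide if and only if $\chi=0$. I will also note that $\chi$ is purely imaginary: $\overline{\Tr(w)}=\Tr(w^\dagger)=\Tr(w^{\mathrm{rev}})$ for Hermitian letters, so $\chi(M)=-\overline{\chi(M)}$. This is a consistency check but is not needed for the argument.

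\textbf{Step 2 (necessity).} If joint UET holds, the two pairs are simultaneously unitarily similar, hence all trace invariants agree. In particular $I_9(M^T)=I_9(M)$, which combined with Step 1 gives $\chi=-\chi$, so $\chi=0$.

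\textbf{Step 3 (sufficiency).} Suppose $\chi=0$. By Step 1 all nine invariants agree, and both pairs are Hermitian, so their orbits are closed by Artin's theorem. The stated invariant criterion then puts them in the same orbit: there is $P\in GL(3,\mathbb C)$ with $PM_jP^{-1}=M_j^T$ for $j=1,2$.

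The main obstacle is upgrading $P$ to a unitary. Take the polar decomposition $P=S|P|$. From $PM_j=M_j^TP$ and the Hermiticity of $M_j$ and $M_j^T$, taking adjoints gives $M_jP^\dagger=P^\dagger M_j^T$. Combining the two relations yields
\begin{equation}
P^\dagger P M_j = P^\dagger M_j^T P = M_j P^\dagger P ,
\end{equation}
so $P^\dagger P$ commutes with each $M_j$. Consequently $|P|=(P^\dagger P)^{1/2}$ also commutes with each $M_j$, being a polynomial in $P^\dagger P$. Then
\begin{equation}
SM_jS^\dagger = P|P|^{-1}M_j|P|P^{-1} = PM_jP^{-1} = M_j^T ,
\end{equation}
so $S$ realizes joint UET. (Alternatively, one can cite Specht's theorem, or the standard fact that two $*$-closed families that are similar are unitarily similar.)

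\textbf{Final claim.} Given $S$, apply Lemma~\ref{lemma:simul_averaging} to obtain $W$ with $M_j+WM_jW^\dagger+W^\dagger M_jW=0$. Conjugating this identity by $S$ and using $SM_jS^\dagger=M_j^T$ gives
\begin{equation}
M_j^T+(SW)M_j(SW)^\dagger+(SW^\dagger)M_j(SW^\dagger)^\dagger=0 .
\end{equation}
Setting $U=SW$ and $V=SW^\dagger$ establishes Eq.~\eqref{eq:transpose_hollowization}.
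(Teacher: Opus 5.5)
Your proof is correct and follows the paper's argument essentially step for step. You show that transposition fixes the first eight generators and flips the sign of $\chi$, use separation of closed (Hermitian, hence completely reducible) orbits to obtain $P$ with $PM_jP^{-1}=M_j^T$, upgrade $P$ to its unitary polar factor because $P^\dagger P$ commutes with both $M_j$, and conclude with $U=SW$, $V=SW^\dagger$, exactly as the paper does, while usefully spelling out the cyclic-shift checks and why $|P|$ commutes with the $M_j$.
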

\begin{proof}
The simultaneous-similarity invariant algebra of two traceless $3 \times 3$ matrices is generated by the nine quantities given in Eq.~\eqref{eq:inv_algebra}. Transposition preserves the first eight generators. Indeed, transposing a word reverses its letters; for the displayed words, cyclic property of trace returns the original expression. On the other hand, reversal interchanges the two degree-six words in $\chi(M_1,M_2)$, and therefore $\chi(M_1^T, M_2^T) = -\chi(M_1, M_2)$. If $(M_1, M_2)$ is joint UET, simultaneous unitary similarity preserves every trace invariant. Therefore, $\chi(M_1, M_2) = -\chi(M_1, M_2)$, which implies $\chi(M_1, M_2) = 0$.

Conversely, suppose $\chi(M_1, M_2) = 0$. Then all nine generating invariants of $(M_1, M_2)$ and $(M_1^T, M_2^T)$ agree. Invariant polynomials separate closed simultaneous $GL(3,\mathbb{C})$-orbits. A Hermitian tuple generates a finite-dimensional $*$-algebra and is completely reducible, so its complex similarity orbit is closed. It follows that an invertible matrix $P$ satisfies
\begin{equation}
\label{eq:UET}
    PM_1P^{-1} = M_1^T, \qquad PM_2P^{-1} = M_2^T.
\end{equation}

It remains to upgrade $P$ to a unitary. Taking adjoints in Eq.~\eqref{eq:UET}, using Hermiticity, and comparing with Eq.~\eqref{eq:UET}, shows
\begin{equation*}
    (P^\dagger P)M_1 = M_1(P^\dagger P), \qquad (P^\dagger P)M_2 = M_2(P^\dagger P).
\end{equation*}
Let $T = (P^\dagger P)^{1/2}$. Then $T$ commutes with both $M_1$ and $M_2$. In the polar decomposition $P = ST$, the matrix $S = PT^{-1}$ is unitary and
\begin{equation}
\label{eq:UET_final}
    SM_1S^\dagger = PM_1P^{-1} = M_1^T, \qquad SM_2S^\dagger = PM_2P^{-1} = M_2^T.
\end{equation}
Thus $(M_1, M_2)$ is joint UET. Substituting $M_j = S^\dagger M_j^T S$ into Eq.~\eqref{eq:simul_hollowization} and identifying $U = SW$ and $V = SW^\dagger$ immediately yields Eq.~\eqref{eq:transpose_hollowization}.
\end{proof}
We now examine several notable configurations where the joint UET condition is satisfied.
\begin{enumerate}
    \item \textbf{Both states are real.} Suppose the density matrices are expressed in the basis in which the transposition map is defined. Because $\rho_i=\frac{\mathbb I_3}{3}+M_i$, real $\rho_i$ implies real $M_i$ for $i=1,2$. Hermiticity then ensures that $M_i=M_i^T$ for $i=1,2$. Hence, Eq.~\eqref{eq:UET_final} holds trivially with $S=\mathbb I_3$. 
    \item \textbf{The two states commute.} Notice that $[\rho_1,\rho_2]=0\iff [M_1,M_2]=0$. For any commuting pair $(M_1, M_2)\in \mathcal{H}_0(3)^{\times2}$, there exists a unitary $Q\in U(3)$ and real diagonal matrices $D_1,D_2$, such that $M_j=QD_jQ^\dagger,~j=1,2$. Setting $S = \bar{Q} Q^\dagger$, it follows that $SM_jS^\dagger=\bar{Q}D_jQ^T=M_j^T$ for both $j$. 
    \item \textbf{One density matrix has a repeated eigenvalue.} If either $\rho_1$ or $\rho_2$ has an eigenvalue of multiplicity at least two, then the pair is joint UET.
    \begin{proof}
     Let us first note that each eigenvalue of $\rho_i$ differs from the corresponding eigenvalue of $M_i$ by $1/3$, therefore assuming $\rho_1$ has repeated eigenvalue implies the same for $M_1$. Hence, after a suitable unitary conjugation $M_j'=QM_jQ^\dagger$, we can write 
    \begin{equation*}
       M_1'=\begin{pmatrix}
           \lambda & 0\\0 & \mu I_2
       \end{pmatrix};\quad
        M_2'=\begin{pmatrix}
           r & b^\dagger\\b & C
       \end{pmatrix}, 
    \end{equation*}
     where $r\in\mathbb{R}$, $b\in\mathbb{C}^2$, and $C=C^\dagger\in \mathcal M_2$. A direct computation then shows that $\chi(M_1',M_2')=\Tr(M_1'^2M_2'^2M_1'M_2'-M_2'^2M_1'^2M_2'M_1')=0$. Therefore, by Theorem~\ref{th:joint-UET}, the pair $(M_1',M_2')$ is joint UET, and consequently the same holds for $(M_1,M_2)$.
    \end{proof}
    \textit{An immediate consequence is that joint UET holds whenever either $\rho_1$ or $\rho_2$ is a pure state.}
    \item \textbf{The pair $(\rho_1,\rho_2)$ has a common eigenvector.} If $\rho_1$ and $\rho_2$ have a common eigenvector, then the pair is joint UET.
    \begin{proof}
        Since $M_i$ and $\rho_i$ have exactly the same set of eigenvectors, $M_1$ and $M_2$ also possess a common eigenvector. Because both matrices are Hermitian, the orthogonal complement of a common eigenvector is invariant for both. Hence the pair has the simultaneous block form \[ M_j=\alpha_j\oplus B_j, \qquad B_j=B_j^\dagger\in M_2.\]
        Write the \(2\times 2\) blocks in Bloch form \[B_j=\beta_j \mathbb I_2+x_j\sigma_x+y_j\sigma_y+z_j\sigma_z.\]

        Transposition reflects the Bloch vector \( r_j=(x_j,y_j,z_j) \) to \(r_j'=(x_j,-y_j,z_j).\) The two ordered pairs of vectors \((r_1,r_2)\) and \((r_1',r_2')\) have the same Gram matrix. Therefore there is an orthogonal transformation \(R\in SO(3)\) sending \(r_j\) to \(r_j'\) for \(j=1,2\): define the isometry on their span and choose its sign on the orthogonal complement so that the determinant is \(+1\). The adjoint representation \( SU(2)\longrightarrow SO(3)\) is onto, so some \(T\in SU(2)\) implements \(R\) and obeys \[ T B_j T^\dagger = B_j^T, \qquad j=1,2. \]
        The unitary \( S=1\oplus T \) therefore implements simultaneous transposition of the original pair.
    \end{proof}
\end{enumerate}
Joint UET thus rules out several natural candidate probe pairs for testing the non-mixed-unitarity of $O(3)$-covariant channels:
\begin{proposition}
\label{prop:useless_probes}
A block-i.i.d. product probe $\rho_1 \otimes \rho_2$ of block size two gives a vanishing Stein exponent for any $O(3)$-covariant non-mixed-unitary channel from the set of mixed-unitary channels if the pair $(\rho_1, \rho_2)$ satisfies any of the following conditions:
\begin{enumerate}
\item Both states are real in the reference basis defining the transposition map.
\item The two states commute, or more generally, share a common eigenvector.
\item At least one state has a degenerate eigenvalue (which includes the case where either probe state is pure).
\end{enumerate}
\end{proposition}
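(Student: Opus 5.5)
The plan is to reduce the statement to the joint UET results already established and then pass from the Werner--Holevo channel to a general $O(3)$-covariant channel by convexity. Fix a probe pair $(\rho_1,\rho_2)$ satisfying one of the three listed conditions, and write $\rho_i=\frac{\mathbb I_3}{3}+M_i$ with $M_i\in\mathcal H_0(3)$. The configurations worked out above (real states, commuting states or a common eigenvector, and a degenerate eigenvalue) show that in each case the pair $(M_1,M_2)$ is joint UET. Theorem~\ref{th:joint-UET} then gives $U,V\in U(3)$ satisfying Eq.~\eqref{eq:transpose_hollowization}. Equivalently, by the reformulation preceding that equation, Eq.~\eqref{eq:WH_MU_equivalence} holds: $\Phi_{U,V}(\rho_i)=\mathcal N_{WH}(\rho_i)$ for $i=1,2$, and $\Phi_{U,V}$ is mixed-unitary. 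I would briefly re-verify this equivalence. Since $\mathcal N_{WH}(\rho)=\frac12[\mathbb I-\rho^T]$ for $d=3$, we have $\mathcal N_{WH}(\rho_i)=\frac{\mathbb I}{3}-\frac12 M_i^T$, while $\Phi_{U,V}(\rho_i)=\frac{\mathbb I}{3}+\frac12(UM_iU^\dagger+VM_iV^\dagger)$, and equating the two gives exactly Eq.~\eqref{eq:transpose_hollowization}.

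Next, I extend the simulation to an arbitrary $O(3)$-covariant channel $\mathcal N$. Because the $O(3)$-covariant channels form a convex set with extreme points $\mathrm{id}$, $\mathcal N_{WH}$ and $\Psi$, we can write $\mathcal N=p\,\mathrm{id}+q\,\mathcal N_{WH}+r\,\Psi$. Define
\[
\Phi:=p\,\mathrm{id}+q\,\Phi_{U,V}+r\,\Psi .
\]
This $\Phi$ is mixed-unitary, since $\mathrm{id}$, $\Phi_{U,V}$ and $\Psi$ are all mixed-unitary and $\mathsf{MU}$ is convex. Moreover, $U,V$ depend only on the pair $(\rho_1,\rho_2)$, not on the weights $p,q,r$, so by linearity $\Phi(\rho_i)=\mathcal N(\rho_i)$ for $i=1,2$.

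Finally, I conclude with the product-probe formula of Corollary~\ref{cor:stein_pro}. For the fixed block probe $\rho_1\otimes\rho_2$, the quantity inside the supremum is $\min_{\Phi'\in\mathsf{MU}}\frac12\sum_i D(\Phi'(\rho_i)\|\mathcal N(\rho_i))$. Taking $\Phi'=\Phi$ makes this minimum $0$, so this probe contributes zero exponent, which is the claim. The same reasoning is the ``for every collection'' criterion of Lemma~\ref{usefulcond} restricted to the given pair.

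I do not expect a real obstacle, since the technical work (joint UET in each case) is already done. The only points needing care are bookkeeping ones. First, the pure-state case must be covered by condition~3: a pure qutrit state has eigenvalues $(1,0,0)$, which are degenerate. Second, the common-eigenvector case must subsume commuting states, which holds because commuting Hermitian matrices share an eigenbasis. Third, the transpose in $\mathcal N_{WH}$ must be taken in the same basis as in condition~1.
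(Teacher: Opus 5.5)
Your proposal is correct and follows the paper's own route. The three joint-UET configurations (real states, a common eigenvector, a degenerate eigenvalue), combined with the joint-UET theorem, give $U,V$ with $\Phi_{U,V}(\rho_i)=\mathcal{N}_{WH}(\rho_i)$. Mixing $\Phi_{U,V}$ with $\mathrm{id}$ and $\Psi$ using the same convex weights as $\mathcal{N}$ then extends the simulation to every $O(3)$-covariant channel, so the fixed-probe exponent of $\rho_1\otimes\rho_2$ vanishes. The paper's appendix only adds an independent Gram-matrix construction for the pure-state case.
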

An alternate proof of the proposition for pure product states is given in Appendix~\ref{appa}. Importantly, joint UET is only a sufficient condition for satisfying Eq.~\eqref{eq:transpose_hollowization}, not a necessary one. As an explicit counterexample, let $F$ denote the $3 \times 3$ discrete Fourier transform matrix with entries $F_{jk} = \frac{1}{\sqrt{3}}\omega^{jk}$ for $j, k \in \{0, 1, 2\}$, where $\omega = e^{2\pi \iota/3}$, and let $P = \text{diag}(1, 1, \omega)$. Define the unitaries $U = PF$ and $V = P^2 F P F$, and consider the traceless Hermitian matrices
\begin{equation}
K_1 = \begin{pmatrix}
0 & 0 & \frac{1}{2} + \frac{\sqrt{3}}{2}\iota \\
0 & 1 & 1 \\
\frac{1}{2} - \frac{\sqrt{3}}{2}\iota & 1 & -1
\end{pmatrix}, \qquad
K_2 = \begin{pmatrix}
0 & 0 & \frac{\sqrt{3}}{2} - \frac{1}{2}\iota \\
0 & -\sqrt{3} & \iota \\
\frac{\sqrt{3}}{2} + \frac{1}{2}\iota & -\iota & \sqrt{3}
\end{pmatrix}.
\end{equation}
A straightforward calculation confirms that the pair satisfies
\begin{equation}
K_j^T + U K_j U^\dagger + V K_j V^\dagger = 0, \quad j \in {1, 2}.
\end{equation}
However, evaluating the scalar invariant yields $\chi(K_1, K_2) = 24\iota \neq 0$. By Theorem~\ref{th:joint-UET}, the pair $(K_1, K_2)$ is not joint UET, establishing that joint UET is strictly stronger than the condition required for Eq.~\eqref{eq:transpose_hollowization} to hold.

In fact, numerical simulations strongly suggest an even broader conclusion. We have generated $5000$ random pairs $(\rho_1, \rho_2)$ sampled uniformly with respect to the volume measure induced by the Hilbert–Schmidt metric, then constructed $M_i = \rho_i - \mathbb{I}_3/3$, and set $Z = M_1 + \iota M_2$. For each sample, we have numerically minimized the Frobenius norm $\|Z^T + UZU^\dagger + VZV^\dagger\|_F$ over all pairs of unitaries $U,V \in U(3)$. In every instance, the objective function has reached zero within a numerical tolerance of the order of $10^{-10}$. A Mathematica notebook detailing these computations is available in Ref.~\cite{AP2026repo}. These findings provide compelling evidence that Eq.~\eqref{eq:transpose_hollowization} holds universally for all traceless Hermitian pairs $M_1, M_2 \in \mathcal{H}_0(3)$. We therefore formulate the following conjecture:
\begin{conjecture}
\label{conj:rank2_simulation}
For every pair of density operators $\{\rho_1, \rho_2\}$ acting on $\mathbb{C}^3$, there exist unitaries $U, V \in U(3)$ such that
\begin{equation*}
\Phi_{U,V}(\rho_i) = \mathcal{N}_{WH}(\rho_i), \quad i \in {1, 2},
\end{equation*}
where $\Phi_{U,V}(X) = \frac{1}{2}(UXU^\dagger + VXV^\dagger)$ is an equal-weight, rank-two mixed-unitary channel. Consequently, block-i.i.d. product states of block size $k=2$ are insufficient for testing non-mixed-unitarity of $O(3)$-covariant channels, i.e., $E_{\mathrm{Pro}}^2(\mathsf{MU}^\mathrm{iid}\|\mathcal N)=0$ where $\mathcal N=q_1 \mathcal{N}_{WH}+q_2 \Psi+q_3 \mathrm{id},$ where $\{q_i\}$ is a probability vector.
\end{conjecture}

Combining Conjecture~\ref{conj:rank2_simulation} and Theorem~\ref{theo:monotonicity}, we have the following corollary:
\begin{corollary}[Assuming Conjecture~\ref{conj:rank2_simulation}]
\label{cor:e2prozero}
    Let $\bar{\mathcal{N}}$ be a non-mixed-unitary channel that is MU-convertible from an $O(3)$-covariant channel.
    Then $E^{2}_{\mathrm{Pro}}\left(\mathsf{MU}_{\mathrm{iid}}\|\bar{\mathcal{N}}\right)=0$.
\end{corollary}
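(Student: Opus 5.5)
The plan is to combine the conjecture with the monotonicity theorem, as the statement suggests. Since $\bar{\mathcal N}$ is MU-convertible from an $O(3)$-covariant channel, there exist an $O(3)$-covariant channel $\mathcal N = q_1\mathcal N_{WH}+q_2\Psi+q_3\,\mathrm{id}$ and mixed-unitary channels $\Phi_1,\Phi_2$ with $\bar{\mathcal N}=\Phi_2\circ\mathcal N\circ\Phi_1$. By Theorem~\ref{theo:monotonicity}, applied in the memoryless setting with block size $k=2$ and the product correlation constraint (which that theorem explicitly covers), we get $E^2_{\mathrm{Pro}}(\mathsf{MU}^{\mathrm{iid}}\|\bar{\mathcal N}^{\mathrm{iid}})\le E^2_{\mathrm{Pro}}(\mathsf{MU}^{\mathrm{iid}}\|\mathcal N^{\mathrm{iid}})$. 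Stein exponents are nonnegative, so it suffices to show that the right-hand side vanishes.

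If $\mathcal N$ itself is mixed-unitary, then the right-hand side is zero: choose $\Phi=\mathcal N$ in the minimization of Corollary~\ref{cor:stein_pro}. We would not even need this, however. The vanishing follows uniformly from the conjecture's consequence, once we check that this consequence indeed follows from the displayed simulation statement. By Lemma~\ref{usefulcond} (product case, memoryless), it suffices that for every pair $\{\tau_1,\tau_2\}\subset\mathfrak D(\mathbb C^3)$ some $\Phi\in\mathsf{MU}$ satisfies $\Phi(\tau_i)=\mathcal N(\tau_i)$ for $i=1,2$.

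To construct $\Phi$, take $U,V$ from Conjecture~\ref{conj:rank2_simulation} for the pair $\{\tau_1,\tau_2\}$, and set
\begin{equation*}
\Phi = q_1\Phi_{U,V}+q_2\Psi+q_3\,\mathrm{id}.
\end{equation*}
This is a convex combination of mixed-unitary channels, since $\Psi$ is mixed-unitary for every $d$. Hence $\Phi\in\mathsf{MU}$, and by linearity $\Phi(\tau_i)=\mathcal N(\tau_i)$ for both $i$. Therefore the minimum in Corollary~\ref{cor:stein_pro} is zero for every pair, so $E^2_{\mathrm{Pro}}(\mathsf{MU}^{\mathrm{iid}}\|\mathcal N^{\mathrm{iid}})=0$, and monotonicity transfers this to $\bar{\mathcal N}$.

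The only potentially delicate step is the direction of the conversion in Theorem~\ref{theo:monotonicity}. It must be that the exponent of the converted channel $\bar{\mathcal N}$ is bounded by that of the source channel $\mathcal N$, and this is exactly the order stated in the theorem ($\mathcal N$ MU-convertible to $\mathcal M$ gives $E(\cdot\|\mathcal M)\le E(\cdot\|\mathcal N)$). A second point to check is that the correlation-constrained version applies to the product class. The preprocessing $\Phi_1^{\otimes 2}$ maps product states to product states, so this is covered by the final paragraph of that proof. The non-mixed-unitarity hypothesis on $\bar{\mathcal N}$ is needed only so that the hypothesis-testing problem is well posed ($\bar{\mathcal N}\notin\mathsf{MU}$).
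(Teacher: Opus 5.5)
Your proposal is correct and takes the same route as the paper, which obtains the corollary by combining Conjecture~\ref{conj:rank2_simulation}, giving $E^2_{\mathrm{Pro}}(\mathsf{MU}^{\mathrm{iid}}\|\mathcal N)=0$ for the $O(3)$-covariant source channel, with Theorem~\ref{theo:monotonicity} in its memoryless, product-constrained form to transfer the vanishing to $\bar{\mathcal N}$. Your explicit convex-combination step $q_1\Phi_{U,V}+q_2\Psi+q_3\,\mathrm{id}$, which justifies the ``consequently'' clause of the conjecture, is the argument the paper itself uses for pure states in Appendix~\ref{appa}.
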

\subsubsection{Sufficiency of block-i.i.d. product states of block size $3$ for non-mixed-unitary testing of the qutrit Werner--Holevo channel}

We now investigate whether increasing the block size from $k=2$ to $k=3$ can yield a strictly positive Stein exponent. The following theorem answers the question of sufficiency of $\biid{3}$ strategy affirmatively for the qutrit Werner--Holevo channel and, consequently, for all channels unitarily equivalent to it.

\begin{theorem}
\label{th:3copiesproduct}
    Let $\mathcal{N}$ be a quantum channel on $\mathfrak{B}(\mathbb{C}^3)$ that is unitarily equivalent to the Werner--Holevo channel $\mathcal{N}_{WH}$. Then
    \begin{align}
        E_{\mathrm{Pro}}^{3}(\mathsf{MU}^{\mathrm{iid}}\|\mathcal{N})>0.
    \end{align}
\end{theorem}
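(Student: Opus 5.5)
By Theorem~\ref{theo:monotonicity}, the exponent $E^3_{\mathrm{Pro}}$ is invariant under unitary equivalence, so it suffices to treat $\mathcal{N}=\mathcal{N}_{WH}$ on $\mathbb{C}^3$. Here $\mathcal{N}_{WH}(X)=\tfrac12[\Tr(X)\mathbb{I}_3-X^{\intercal}]$.

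By the product-probe part of Lemma~\ref{usefulcond}, $E^3_{\mathrm{Pro}}>0$ holds as soon as we exhibit three states $\tau_1,\tau_2,\tau_3$ for which no $\Phi\in\mathsf{MU}$ satisfies $\Phi(\tau_i)=\mathcal{N}_{WH}(\tau_i)$ for all $i$. Positivity then follows automatically: $\Phi\mapsto\sum_i D(\Phi(\tau_i)\|\mathcal{N}_{WH}(\tau_i))$ is lower semicontinuous on the compact set $\mathsf{MU}$, so its minimum is attained. That minimum is strictly positive, since it vanishes only at an exact simulator.

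I will use pure probes $\tau_i=\psi_i:=|\psi_i\rangle\langle\psi_i|$. For these, $\mathcal{N}_{WH}(\psi_i)=\tfrac12(\mathbb{I}_3-\bar\psi_i)$, where $\bar\psi_i$ is the complex-conjugate projector. Suppose $\Phi=\sum_k p_k\,U_k(\cdot)U_k^\dagger$ with all $p_k>0$ reproduces these outputs. Taking the expectation in $|\bar\psi_i\rangle$ gives $\sum_k p_k|\langle\bar\psi_i|U_k|\psi_i\rangle|^2=0$. This is the key rigidity, and it holds for every $k$:
\begin{equation*}
\langle\bar\psi_i|U_k|\psi_i\rangle=0,\qquad i=1,2,3 .
\end{equation*}
On top of this, the remaining output matrix elements must equal those of the maximally mixed state on $\bar\psi_i^{\perp}$. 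The strategy is therefore to choose the triple so that the variety of unitaries obeying the three orthogonality constraints is too small to average to $\tfrac12(\mathbb{I}-\bar\psi_i)$ for all three $i$.

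A first guess is the computational basis, but it fails. Each $U_k$ would have to be hollow, and the cyclic shift $X$ already gives $\tfrac12(X\cdot X^\dagger+X^2\cdot X^{2\dagger})(|j\rangle\langle j|)=\tfrac12(\mathbb{I}-|j\rangle\langle j|)$. By Proposition~\ref{prop:useless_probes}, real probes, and pairs sharing an eigenvector, are also useless.

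So I would take $\psi_1=|0\rangle$ and $\psi_2=|1\rangle$, which force $\langle0|U_k|0\rangle=\langle1|U_k|1\rangle=0$ for all $k$, together with a genuinely complex third state such as $|\psi_3\rangle=(|0\rangle+\iota|1\rangle+|2\rangle)/\sqrt3$. I would then do the following.
\begin{enumerate}
\item Parametrize the $3\times3$ unitaries with $U_{00}=U_{11}=0$ explicitly. Their first two columns are orthonormal vectors constrained to avoid the respective diagonal positions, which leaves a few phases and one modulus parameter.
\item Impose $\langle\bar\psi_3|U|\psi_3\rangle=0$ to cut this family down further.
\item Show that no convex combination of the surviving unitaries reproduces all three target outputs. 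For this I would compare specific matrix elements, for instance $\langle 2|\Phi(\psi_1)|2\rangle=\tfrac12$ and the off-diagonal entries of $\Phi(\psi_3)$. These are linear in the moment matrix $\sum_k p_k\, U_k\otimes\bar U_k$, so a linear witness, or a contradiction such as a weighted sum of nonnegative quantities being forced to be negative, would finish the argument.
\end{enumerate}

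The main obstacle is the last step: picking the third state so that the constraint set genuinely excludes every mixture, rather than only the equal-weight rank-two mixtures. If the hand calculation gets unwieldy, I would recast step 3 as an SDP feasibility question over Choi matrices of the form $\sum_k p_k\,|U_k\rangle\!\rangle\langle\!\langle U_k|$. I would then derive an explicit dual witness $W$ that is nonnegative on all admissible unitaries but has negative expectation against the required output constraints. If $|\psi_3\rangle$ as chosen turns out to be simulable, I would perturb its phases, for example replacing $\iota$ by $e^{\iota\theta}$, until the witness appears.
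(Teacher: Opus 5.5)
There is a genuine gap. Your reduction and the rigidity observation $\langle\bar\psi_i|U_k|\psi_i\rangle=0$ for every Kraus unitary are exactly the paper's starting point. The reduction here means unitary equivalence, Lemma~\ref{usefulcond}, and attainment of the minimum. But you never carry out step 3, and for your triple it cannot be carried out, because that triple is simulable.

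Let $\omega=e^{2\pi\iota/3}$ and take
$U_1=|1\rangle\langle 0|+\omega|2\rangle\langle 1|+\iota\omega^2|0\rangle\langle 2|$ and $U_2=|0\rangle\langle 1|+\omega^2|1\rangle\langle 2|+\iota\omega|2\rangle\langle 0|$. Set $\Phi=\frac12(U_1\cdot U_1^\dagger+U_2\cdot U_2^\dagger)\in\mathsf{MU}$. Then:
\begin{itemize}
\item $\Phi(|0\rangle\langle 0|)=\frac12(|1\rangle\langle 1|+|2\rangle\langle 2|)$.
\item $\Phi(|1\rangle\langle 1|)=\frac12(|0\rangle\langle 0|+|2\rangle\langle 2|)$.
\item $U_1|\psi_3\rangle=(\iota\omega^2,1,\iota\omega)^{\intercal}/\sqrt3$ and $U_2|\psi_3\rangle=(\iota,\omega^2,\iota\omega)^{\intercal}/\sqrt3$ are orthonormal and both orthogonal to $|\bar\psi_3\rangle$, so $\Phi(\psi_3)=\frac12(\mathbb{I}_3-\bar\psi_3)=\mathcal{N}_{WH}(\psi_3)$.
\end{itemize}
Replacing $\iota$ by $e^{\iota\theta}$ in the probe and in $U_1,U_2$ gives the same conclusion for every $\theta$. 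So your phase-perturbation fallback fails too, and no dual witness exists for these triples.

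The structural reason is as follows. Once $|0\rangle$ and $|1\rangle$ are probes, three conditions force every simulator to be an equal-weight mixture of phased cyclic shifts. The conditions are $U_{00}=U_{11}=0$, unitarity, and $\langle 2|\Phi(|j\rangle\langle j|)|2\rangle=\frac12$ for $j=0,1$. The third probe then only constrains the phases, and suitable phases always exist when its components have equal moduli, as yours do. Note also that Proposition~\ref{prop:useless_probes} is about pairs; it does not make real triples useless.

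The missing idea is to place all three probes in a common two-dimensional subspace $S$, pairwise non-parallel. The paper uses $|0\rangle$, $\frac12|0\rangle+\frac{\sqrt3}{2}|1\rangle$ and $\frac12|0\rangle+e^{\iota\pi/3}\frac{\sqrt3}{2}|1\rangle$; even $|0\rangle,|1\rangle,(|0\rangle+|1\rangle)/\sqrt2$ would do. The argument then runs:
\begin{itemize}
\item The constraints $\langle\bar\psi_i|U_k|\psi_i\rangle=0$ say that a quadratic form on $S$ vanishes at three distinct projective points. Hence the $S$-block of $U_k$ is antisymmetric, $A_k=\begin{pmatrix}0&a_k\\-a_k&0\end{pmatrix}$.
\item Unitarity then gives $B_kB_k^\dagger=(1-|a_k|^2)\mathbb{I}_2$. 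This has rank at most one, which forces $B_k=0$ and $|a_k|=1$, and hence $C_k=0$.
\item So every $U_k$ leaves $S$ invariant, giving $\langle 2|\Phi(\psi_i)|2\rangle=0$, whereas $\langle 2|\mathcal{N}_{WH}(\psi_i)|2\rangle=\frac12$.
\end{itemize}
This contradiction closes the proof without any SDP or witness construction.
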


\begin{proof}
We prove that $E^3_{\mathrm{Pro}}(\mathsf{MU}^{\mathrm{iid}}\|\mathcal{N}_{WH})>0$. The same conclusion then follows by Theorem~\ref{theo:monotonicity} for every channel that is unitarily equivalent to $\mathcal{N}_{WH}$. By Lemma~\ref{usefulcond}, it suffices to show that there exists a set of three states $\{\rho_1,\rho_2,\rho_3\}\subset\mathfrak{D}(\mathbb{C}^3)$ for which no mixed-unitary channel $\Phi$ satisfies
\begin{align}\label{eq:three_state_simulation}
    \Phi(\rho_i)=\mathcal{N}_{WH}(\rho_i)
    =\frac{1}{2}\left(\mathbb{I}_3-\rho_i^{\intercal}\right),
    \qquad i=1,2,3,
\end{align}
simultaneously.

The proof is constructive. Consider the following three pure states in $\mathbb{C}^3$:
\begin{align}
    \ket{\psi_1}&=\ket{0},\nonumber\\
    \ket{\psi_2}&=\frac{1}{2}\ket{0}
    +\frac{\sqrt{3}}{2}\ket{1},\nonumber\\
    \ket{\psi_3}&=\frac{1}{2}\ket{0}
    +e^{\iota\pi/3}\frac{\sqrt{3}}{2}\ket{1},
\end{align}
where $\iota=\sqrt{-1}$, and let $\rho_i=\ketbra{\psi_i}{\psi_i}$.

We prove this by contradiction. Assume that there exists a mixed-unitary channel
\begin{align}
    \Phi(X)=\sum_x p_x U_x XU_x^\dagger
\end{align}
satisfying Eq.~\eqref{eq:three_state_simulation}, where we retain only terms with $p_x>0$. Since
\begin{align}
    \mathcal{N}_{WH}(\rho_i)
    =\frac{1}{2}\left(\mathbb{I}_3
    -\ketbra{\psi_i^*}{\psi_i^*}\right),
\end{align}
we have
\begin{align}
    0
    &=\bra{\psi_i^*}\mathcal{N}_{WH}(\rho_i)\ket{\psi_i^*}
    \nonumber\\
    &=\sum_x p_x
    \left|\bra{\psi_i^*}U_x\ket{\psi_i}\right|^2,
    \qquad i=1,2,3.
\end{align}
Since, every term in the sum is nonnegative, we can write
\begin{align}\label{eq:three_support_constraints}
    \bra{\psi_i^*}U_x\ket{\psi_i}
    =\bra{\psi_i^{\intercal}}U_x\ket{\psi_i}=0,
    \qquad i=1,2,3,\quad\forall x.
\end{align}

Let $S=\mathrm{span}\{\ket{0},\ket{1}\}$. Relative to the decomposition
$\mathbb{C}^3=S\oplus\mathrm{span}\{\ket{2}\}$, we write
\begin{align}
    U_x=
    \begin{pmatrix}
        A_x & B_x\\
        C_x^\dagger & u_x
    \end{pmatrix},
\end{align}
where $A_x$ is a $2\times2$ matrix acting on the subspace \(S\), $B_x$ and $C_x$ are $2\times1$ matrices, and $u_x\in\mathbb{C}$.

The three input vectors are proportional to $(1,z_i,0)^{\intercal}$, where
\begin{align}
    z_1=0,\qquad z_2=\sqrt{3},\qquad
    z_3=\sqrt{3}e^{\iota\pi/3}.
\end{align}
Consequently, Eq.~\eqref{eq:three_support_constraints} implies
\begin{align}
    (A_x)_{00}
    +\left[(A_x)_{01}+(A_x)_{10}\right]z_i
    +(A_x)_{11}z_i^2=0,
    \qquad i=1,2,3.
\end{align}
Since a polynomial of degree at most two cannot have three distinct roots unless it vanishes identically, we get
\begin{align}
    (A_x)_{00}=(A_x)_{11}=0,
    \qquad (A_x)_{01}=-(A_x)_{10}.
\end{align}
Therefore,
\begin{align}
    A_x=
    \begin{pmatrix}
        0&a_x\\
        -a_x&0
    \end{pmatrix},
    \qquad a_x\in\mathbb{C}.
\end{align}

The unitarity of $U_x$ implies
\begin{align}
    A_xA_x^\dagger+B_xB_x^\dagger
    &=\mathbb{I}_2,\nonumber\\
    \implies B_xB_x^\dagger
    &=(1-|a_x|^2)\mathbb{I}_2.
\end{align}
Since $\mathrm{Rank}(B_xB_x^\dagger)\leq1$, the last equality requires
\begin{align}
    |a_x|=1,\qquad B_x=0.
\end{align}
Furthermore, using $U_x^\dagger U_x=\mathbb{I}_3$, we have
\begin{align}
    A_x^\dagger A_x+C_xC_x^\dagger
    &=\mathbb{I}_2,\nonumber\\
    \implies C_xC_x^\dagger&=0,\nonumber\\
    \implies C_x&=0.
\end{align}
Thus, every unitary $U_x$ preserves the subspace $S$. Since all three input states belong to $S$, it follows that
\begin{align}
    \bra{2}\Phi(\rho_i)\ket{2}
    =\sum_x p_x\left|\bra{2}U_x\ket{\psi_i}\right|^2
    =0,\qquad i=1,2,3.
\end{align}
On the other hand,
\begin{align}
    \bra{2}\mathcal{N}_{WH}(\rho_i)\ket{2}
    =\frac{1}{2}\bra{2}
    \left(\mathbb{I}_3-\ketbra{\psi_i^*}{\psi_i^*}\right)
    \ket{2}
    =\frac{1}{2},
    \qquad i=1,2,3,
\end{align}
which contradicts Eq.~\eqref{eq:three_state_simulation}.

Therefore, no mixed-unitary channel $\Phi$ reproduces the action of $\mathcal{N}_{WH}$ simultaneously on these three input states. By Lemma~\ref{usefulcond}, we then conclude that
\begin{align}
    E^3_{\mathrm{Pro}}
    \left(\mathsf{MU}^{\mathrm{iid}}\middle\|\mathcal{N}_{WH}\right)>0.
\end{align}
This completes the proof.
\end{proof}

The findings of this subsection can be summarised as follows. While Uhlmann's majorisation theorem (Lemma~\ref{lemma:Uhlmann}) guarantees that the action of any unital channel on a single state can always be reproduced by a mixed-unitary channel, our numerical evidence, together with analytical results for several special cases, suggests that the action of the qutrit Werner--Holevo channel on any two uncorrelated states can likewise be reproduced simultaneously by a single mixed-unitary channel, with the latter depending on the pair of input states. In contrast, such a simultaneous simulation is impossible, in general, for three input states. This suggests that three uncorrelated input states may be necessary to distinguish the Werner--Holevo channel from the set of mixed-unitary channels.

The resulting hierarchy of Stein exponents with respect to the block size for product probe states is given by:
\begin{align}
    &E_{\mathrm{Pro}}^{2}(\mathsf{MU}^{\mathrm{iid}}\|\mathcal{N}_{WH})\overset{?}{=}E^{1}(\mathsf{MU}^{\mathrm{iid}}\|\mathcal{N}_{WH})=0,\nonumber\\
    \text{whereas}\quad&E_{\mathrm{Pro}}^{3}(\mathsf{MU}^{\mathrm{iid}}\|\mathcal{N}_{WH})>0.
\end{align}
Thus, increasing the block size provides a strict advantage even when the probe states remain fully product across the channel uses within each block.

\subsection{Advantage of entangled probes}\label{subsec:entanglement_advantage}
For block size $k=2$, while block-i.i.d. product states may not be sufficient to distinguish an $O(3)$-covariant channel, and hence any channel that is MU-convertible from it, from the set of mixed-unitary channels, we show below that allowing entanglement between inputs of the two channel uses is, in general, advantageous. In particular, the Stein exponent without auxiliary memory corresponding to block-i.i.d. probes without correlation constraints is strictly positive for any channel that is unitarily equivalent to the Werner--Holevo channel.

\begin{theorem}\label{theo:E2ent}
    Let $\mathcal{N}$ be a quantum channel on $\mathfrak{B}(\mathbb{C}^3)$ that is unitarily equivalent to the Werner--Holevo channel $\mathcal{N}_{WH}$. Then
    \begin{align}
        E^2(\mathsf{MU}^{\mathrm{iid}}\|\mathcal{N})>0.
    \end{align}
\end{theorem}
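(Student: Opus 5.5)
By Theorem~\ref{theo:monotonicity}, unitarily equivalent channels have the same Stein exponent, so it suffices to treat $\mathcal{N}=\mathcal{N}_{WH}$ on $\mathbb{C}^3$. By Lemma~\ref{usefulcond} (memoryless, $k=2$), I need a single two-qutrit probe $\sigma_{(2)}$ such that no $\Phi=\sum_x p_x\,U_x(\cdot)U_x^\dagger\in\mathsf{MU}$ satisfies $\Phi^{\otimes 2}[\sigma_{(2)}]=\mathcal{N}_{WH}^{\otimes 2}[\sigma_{(2)}]$. I will argue by contradiction, as in the proof of Theorem~\ref{th:3copiesproduct}: kernel vectors of the target output force strong structural constraints on every pair $(U_x,U_y)$, and these constraints then contradict some other matrix element.

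\textbf{Probe and target output.} I take the Schmidt-rank-two state $\ket{\Omega}=\frac{1}{\sqrt2}(\ket{00}+\ket{11})$ and set $S=\mathrm{span}\{\ket0,\ket1\}$, with $P$ the projector onto $S$. For a pure bipartite input, expanding $\mathcal{N}_{WH}(X)=\frac12[\Tr(X)\mathbb{I}-X^{\intercal}]$ on each factor gives
\begin{align*}
\mathcal{N}_{WH}^{\otimes2}[\Omega]=\tfrac14\left[\mathbb{I}\otimes\mathbb{I}-\Omega_A^{\intercal}\otimes\mathbb{I}-\mathbb{I}\otimes\Omega_B^{\intercal}+\ketbra{\Omega^*}{\Omega^*}\right].
\end{align*}
Here $\Omega_A=\Omega_B=P/2$ and $\ket{\Omega^*}=\ket{\Omega}$. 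Restricted to $S\otimes S$, this is $\frac14\ketbra{\Omega}{\Omega}$. Hence the three-dimensional subspace $K=(S\otimes S)\ominus\ket{\Omega}$ lies in the kernel. The remaining diagonal elements are strictly positive: $\bra{22}\cdot\ket{22}=\frac14$, and $\bra{j2}\cdot\ket{j2}=\bra{2j}\cdot\ket{2j}=\frac18$ for $j\in\{0,1\}$.

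\textbf{Constraints from the kernel.} Suppose the simulation holds. Then $\sum_{x,y}p_xp_y\left|\bra{w}U_x\otimes U_y\ket{\Omega}\right|^2=0$ for every $w\in K$. So for \emph{every} pair $(x,y)$, including $x=y$, the $S\otimes S$-component of $(U_x\otimes U_y)\ket\Omega$ is proportional to $\ket\Omega$. I write $U_x=\begin{pmatrix}A_x&B_x\\ C_x^\dagger&u_x\end{pmatrix}$ as before. Since $(A\otimes A')\,\mathrm{vec}(\mathbb{I}_2)=\mathrm{vec}(AA'^{\intercal})$, the constraint reads
\begin{align*}
A_xA_y^{\intercal}=\lambda_{xy}\,\mathbb{I}_2\qquad\forall x,y.
\end{align*}
Next I combine this with unitarity, $A_xA_x^\dagger+B_xB_x^\dagger=\mathbb{I}_2$ with $\mathrm{rank}(B_xB_x^\dagger)\le1$, which forces $A_x$ to have singular value $1$ with multiplicity at least one. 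I then split into cases. If $\lambda_{xx}\neq0$, then $A_x^{\intercal}\propto A_x^{-1}$, so $A_x$ is proportional to an element of $O(2,\mathbb{C})$, and together with the singular-value constraint I expect $A_x$ to be a scalar times a unitary. If $\lambda_{xx}=0$, then $A_x$ is rank-deficient and must be rank one. The cross relations $A_xA_y^{\intercal}\propto\mathbb{I}$ should then make all the $A_x$ share a rigid common structure.

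\textbf{The contradiction, and the main obstacle.} The intended finish is to show that these constraints are incompatible with the positive diagonal entries. For instance, $\frac14=\frac12\sum_{x,y}p_xp_y\,\bigl|\sum_{j\in\{0,1\}}(C_x^{*})_j(C_y^{*})_j\bigr|^2$ must hold together with the $\frac18$ entries on $S\otimes\ket2$ and $\ket2\otimes S$. These entries are governed by $A_x$ and $C_y$, and also by trace preservation: the output weight on $S\otimes S$ must equal $\frac14$. I expect this final step to be the main obstacle. Unlike the three-state argument, the kernel does not immediately force $U_x$ to preserve $S$; a unitary $U_x$ with $A_x=\tfrac{1}{\sqrt2}$ times a real orthogonal matrix still satisfies $A_xA_x^{\intercal}\propto\mathbb{I}$ while leaking weight out of $S$.

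If the bookkeeping with $\ket\Omega$ does not close, I would first try to break the symmetry. One option is unequal Schmidt coefficients, $\sqrt{a}\ket{00}+\sqrt{1-a}\ket{11}$, which changes the constraint to $A_xDA_y^{\intercal}\propto D$. Another is a Schmidt basis with a complex phase, so that $\ket{\Omega^*}\neq\ket\Omega$; this enlarges the kernel and mimics the role of the phase $e^{\iota\pi/3}$ in Theorem~\ref{th:3copiesproduct}. The goal of either choice is to force $A_x$ to be unitary, hence $B_x=C_x=0$, which would contradict $\bra{22}\cdot\ket{22}=\frac14$. Any such probe still has Schmidt rank two, which is all the statement requires.
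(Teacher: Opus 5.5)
\textbf{There is a genuine gap: the proof stops exactly where the contradiction must be produced, and the route you sketch for it does not work as stated.} Your reductions (unitary equivalence via Theorem~\ref{theo:monotonicity}, then Lemma~\ref{usefulcond}) are correct. So is your computation of $\mathcal{N}_{WH}^{\otimes 2}[\Omega]$, including its $S\otimes S$ block $\frac14\ketbra{\Omega}{\Omega}$.

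The kernel alone gives $A_xA_y^{\intercal}=\lambda_{xy}\mathbb{I}_2$. Your expectation that $\lambda_{xx}\neq 0$ forces $A_x$ to be a multiple of a unitary is false. Take $t>0$ and $A_x=e^{-t}O$ with the complex orthogonal $O=\bigl(\begin{smallmatrix}\cosh t&\iota\sinh t\\-\iota\sinh t&\cosh t\end{smallmatrix}\bigr)$. This has singular values $1$ and $e^{-2t}$, satisfies $A_xA_x^{\intercal}=e^{-2t}\mathbb{I}_2$, and is the $S$-block of a genuine $3\times3$ unitary with $u_x\neq0$. Your own illustrative obstruction, $A_x=\tfrac{1}{\sqrt2}$ times a real orthogonal matrix, is in fact excluded by your constraint $\mathrm{rank}(B_xB_x^\dagger)\le1$. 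More importantly, the target in your last paragraph is the wrong one. Forcing $A_x$ to be unitary so that $B_x=C_x=0$ is not what happens. Changing the Schmidt coefficients or phases is also unnecessary.

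\textbf{The missing idea is to use the single-copy marginals, not only the kernel of the joint output.}
\begin{itemize}
\item Tracing out one factor of $\Phi^{\otimes2}[\Omega]=\mathcal{N}_{WH}^{\otimes2}[\Omega]$ gives $\Phi(P/2)=\mathcal{N}_{WH}(P/2)=\frac12(\mathbb{I}-P/2)$.
\item Since $\Phi$ is unital, $\Phi(\ketbra{2}{2})=\mathbb{I}-\Phi(P)=P/2$. Its $\ket{2}$-diagonal entry is zero, so $\sum_x p_x|\bra{2}U_x\ket{2}|^2=0$, i.e.\ $u_x=0$ for every $x$.
\item Orthonormality of the columns of $U_x$ then gives $B_x^\dagger B_x=1$ and $A_x^\dagger B_x=0$, so $\mathrm{rank}(A_x)\le1$.
\item Hence the $S\otimes S$ component of $(U_x\otimes U_y)\ket{\Omega}$, namely $(A_x\otimes A_y)\ket{\Omega}$, is a product vector for every $x,y$. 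In your language, a rank-one $A_xA_y^{\intercal}$ cannot equal $\lambda_{xy}\mathbb{I}_2$ unless $\lambda_{xy}=0$.
\item So the $S\otimes S$ block of $\Phi^{\otimes2}[\Omega]$ is separable (indeed zero). This contradicts $\frac14\ketbra{\Omega}{\Omega}$.
\end{itemize}
The marginal constraint forces $A_x$ to be rank-deficient rather than unitary. With this step your probe $\ket{\Omega}$ already works. This is exactly the paper's argument, which uses the singlet $\frac{1}{\sqrt2}(\ket{01}-\ket{10})$ on $S$ in place of your $\ket{\Omega}$.
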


\begin{proof}
    Due to unitary equivalence it suffices to prove that $E^{2}(\mathsf{MU}^{\mathrm{iid}}\|\mathcal{N}_{WH})>0$. By Lemma~\ref{usefulcond}, this is equivalent to showing that there exists an entangled state $\rho_{AB}\in \mathfrak{D}(\mathbb{C}^3\otimes\mathbb{C}^{3})$ for which no mixed-unitary channel $\Phi$ exists satisfying
    \begin{align}\label{eq:twosiment}
        \Phi\otimes\Phi(\rho_{AB})=\mathcal{N}_{WH}\otimes\mathcal{N}_{WH} (\rho_{AB}).
    \end{align}

    We prove this by contradiction. Consider the entangled state $\rho^{\psi}_{AB}=\ketbra{\psi}{\psi}\in\mathfrak{D}(\mathbb{C}^3\otimes\mathbb{C}^{3})$ as input to the two channel uses, where
    \begin{align}
        \ket{\psi}_{AB}=\frac{1}{\sqrt{2}}(\ket{01}_{AB}-\ket{10}_{AB})\in\mathbb{C}^3\otimes\mathbb{C}^{3}.
    \end{align}
    Since $\left(\rho^{\psi}_{AB}\right)^{\intercal}=\rho^{\psi}_{AB}$, the action of the Werner--Holevo channels on this state is given by
    \begin{align}
        \sigma_{AB}=\mathcal{N}_{WH}\otimes\mathcal{N}_{WH}(\ketbra{\psi}{\psi})=\frac{1}{4}\left(\mathbb{I}_A\otimes\mathbb{I}_B-\mathbb{I}_A\otimes \frac{P_B}{2}-\frac{P_A}{2}\otimes \mathbb{I}_B+\rho^{\psi}_{AB}\right),
    \end{align}
    where $P=\ketbra{0}{0}+\ketbra{1}{1}$ denotes the projector onto the local subspace $S=\mathrm{span}\{\ket{0},\ket{1}\}$. 
    
    Consider the projection of $\sigma_{AB}$ onto the subspace $S_A\otimes S_B$, which is given by
    \begin{align}
        P_A\otimes P_B\,(\sigma_{AB})\,P_A\otimes P_B&=\frac{1}{4}\left(P_A\otimes P_B-P_A\otimes \frac{P_B}{2}-\frac{P_A}{2}\otimes P_B+\rho^{\psi}_{AB}\right),\nonumber\\
        &=\frac{1}{4}\rho^{\psi}_{AB}=\frac{1}{4}\ketbra{\psi}{\psi}.\label{eq:entcomp}
    \end{align}
    Therefore, $\sigma_{AB}$ has a nonzero rank-one entangled component inside the subspace $S_A\otimes S_B$.

    Now assume that there exists a mixed-unitary channel $\Phi$ satisfying Eq.~\eqref{eq:twosiment} for the considered input state. We therefore have 
    \begin{align}
        \Phi(\rho_{A(B)})=\mathcal{N}_{WH}(\rho_{A(B)})=\frac{1}{2}\left(\mathbb{I}_{A(B)}-\frac{P_{A(B)}}{2}\right),
    \end{align}
    where $\rho_{A(B)}=\Tr_{B(A)}[\rho^{\psi}_{AB}]=\frac{1}{2}P_{A(B)}$ denotes the marginal states of the input. Then it follows that 
    \begin{align}
        \Phi(\ketbra{2}{2}_{A(B)})&=\Phi(\mathbb{I}_{A(B)}-P_{A(B)}),\nonumber\\
        &=\mathbb{I}_{A(B)}-\Phi(P_{A(B)}),\nonumber\\
        &=\frac{1}{2}P_{A(B)}.
    \end{align}
    Considering $\Phi(X)=\sum_x p_xU_xXU_x^{\dagger}$, we thus get
    \begin{align}
        \bra{2}\Phi(\ketbra{2}{2})\ket{2}&=\sum_{x}p_x\abs{\bra{2}U_x\ket{2}}^2=\frac{1}{2}\bra{2}P_{A(B)}\ket{2}=0\nonumber\\
    \implies(U_x)_{22}&=\bra{2}U_x\ket{2}=0\quad\forall\,x.
    \end{align}
    That is, every unitary $U_x$ in the mixed-unitary channel $\Phi$ have a block form
    \begin{equation}\label{eq:u_k}
        U_x=\begin{pmatrix}
            A_x&B_x\\
            C^{\dagger}_x&0
            \end{pmatrix},
    \end{equation}
    where $A_x$ is a $2\times2$ matrix acting on the subspace $S$, and $B_x$ and $C_x$ are $2\times1$ matrices. The unitarity of $U_x$ implies
    \begin{subequations}
    \begin{align}
        B_x^{\dagger}B_x&=1,\\
        A_x^\dagger B_x&=0,\\ A_xA_x^\dagger+B_xB_x^\dagger&=\mathbb{I}_2.
    \end{align}    
    \end{subequations}
    The first equality implies $B_x\neq0$ and the second equality implies that $B_x$ lies in the kernel of $A_x^{\dagger}$. Therefore, $\mathrm{Rank}(A_x)\leq 1$. Furthermore, since $\mathrm{Rank}(B_x)= 1$, the third equality implies $A_x\neq0$. Therefore,
    \begin{align}
        \mathrm{Rank}(A_x)=1,\quad\forall\,x.
    \end{align}

    Considering the projection of $\Phi\otimes \Phi(\rho^{\psi}_{AB})$ onto the subspace $S_A\otimes S_B$, we have
    \begin{align}
        P_A\otimes P_B\left(\sum_{x,y}p_xp_y\,(U_x\otimes U_y)\rho^{\psi}_{AB}(U^{\dagger}_x\otimes U^{\dagger}_y)\right)P_A\otimes P_B=\sum_{x,y}p_xp_y\,\ketbra{\phi_{x,y}}{\phi_{x,y}},
    \end{align}
    where 
    \begin{align}
        \ket{\phi_{x,y}}&=(P_A\otimes P_B)(U_x\otimes U_y)\ket{\psi},\nonumber\\
        &=(A_x\otimes A_y)\ket{\psi}.
    \end{align}
    Since $A_x$ and $A_y$ are rank-one matrices, we can write
    \begin{align}
        A_x=\ketbra{a_x}{b_x},\quad\text{and}\quad A_y=\ketbra{e_y}{f_y},
    \end{align}
    where $\ket{a_x}$ and $\ket{b_x}$ are unnormalized vectors in $S_A$, and $\ket{e_y}$ and $\ket{f_y}$ are unnormalized vectors in $S_B$, respectively. Thus, we have 
    \begin{align}
        \ket{\phi_{x,y}}=\braket{b_xf_y}{\psi} \ket{a_x}\otimes\ket{e_y}.
    \end{align}
    This implies that every nonzero vector $\ket{\phi_{x,y}}$ is a product vector, and 
    \begin{align}
        \omega_{AB}:=\sum_{x,y}p_xp_y\,\ketbra{\phi_{x,y}}{\phi_{x,y}}
    \end{align}
    is a separable positive operator.

    On the other hand, the assumed equality in Eq.~\eqref{eq:twosiment}, together with Eq.~\eqref{eq:entcomp} requires
    \begin{align}
        \omega_{AB}=\frac{1}{4}\rho^{\psi}_{AB}=\frac{1}{4}\ketbra{\psi}{\psi}.
    \end{align}
    Since $\ketbra{\psi}{\psi}$ is nonzero and entangled, this condition cannot be satisfied. Therefore, no mixed-unitary channel $\Phi$ satisfying Eq.~\eqref{eq:twosiment} exists for the considered entangled input.
\end{proof}

\subsubsection{More entanglement is not necessarily more advantageous}

While Theorem~\ref{theo:E2ent} tells that entanglement between two channel uses is sufficient to distinguish quantum channels that are unitarily equivalent to the Werner--Holevo channel, it is important to note that not all bipartite entangled states offer this advantage. More interestingly, as we show in the following, this advantage does not, in general, scale with the amount of entanglement.

Consider the case where the bipartite state used as input to the two channel uses is maximally entangled. Then the corresponding Stein exponent involves optimization over only the set of maximally entangled states in Eq.~\eqref{gen_stein}, i.e., 
\begin{align}
    E^2_{\mathrm{MES}}\left(\mathsf{MU}^{\mathrm{iid}}\middle\|\mathcal{N}\right)=\sup_{\ket{\Omega_V}\in\mathfrak{M}}\,\min_{\Phi\in\mathsf{MU}}\frac{1}{2}D\left(\Phi^{\otimes2}(\ketbra{\Omega_V}{\Omega_V})\middle\|\mathcal{N}^{\otimes2}(\ketbra{\Omega_V}{\Omega_V})\right),
\end{align}
where 
\begin{align}
    \mathfrak{M}=\left\{\ket{\Omega_V}=(\mathbb{I}\otimes V)\ket{\Omega}:V\in U(3)\right\},
\end{align}
denotes the set of maximally entangled two-qutrit states, with $\ket{\Omega}=\frac{1}{\sqrt{3}}\sum_{i=0}^2\ket{ii}$ and $U(3)$ denoting the set of unitary operators on $\mathbb{C}^3$.

\begin{proposition}
    Let $\mathcal{N}$ be a quantum channel on $\mathfrak{B}(\mathbb{C}^3)$ that is unitarily equivalent to the qutrit Werner--Holevo channel $\mathcal{N}_{WH}$. Then
    \begin{align}
        E^2_{\mathrm{MES}}\left(\mathsf{MU}^{\mathrm{iid}}\middle\|\mathcal{N}\right)=0.
    \end{align}
\end{proposition}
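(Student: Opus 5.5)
The plan is to show directly that for every maximally entangled input $\ket{\Omega_V}$ there is a mixed-unitary channel $\Phi$, depending on $V$, with $\Phi^{\otimes 2}(\ketbra{\Omega_V}{\Omega_V})=\mathcal{N}_{WH}^{\otimes2}(\ketbra{\Omega_V}{\Omega_V})$. Then the inner minimum in the definition of $E^2_{\mathrm{MES}}$ is zero for each input, exactly as in Lemma~\ref{usefulcond} restricted to the set $\mathfrak{M}$.

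\emph{Reduction to $\mathcal{N}_{WH}$.} Write $\mathcal{N}=\mathcal{U}\circ\mathcal{N}_{WH}\circ\mathcal{W}$ with unitary channels $\mathcal{U}(X)=UXU^\dagger$ and $\mathcal{W}(X)=WXW^\dagger$. The local unitary $W\otimes W$ maps $\mathfrak{M}$ into itself. So if $\Phi$ simulates $\mathcal{N}_{WH}^{\otimes2}$ on $(W\otimes W)\ket{\Omega_V}$, then $\mathcal{U}\circ\Phi\circ\mathcal{W}\in\mathsf{MU}$ simulates $\mathcal{N}^{\otimes 2}$ on $\ket{\Omega_V}$. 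This is the same argument as in Theorem~\ref{theo:monotonicity}, and it works here because the restricted probe class $\mathfrak{M}$ is closed under local unitaries. It therefore suffices to treat $\mathcal{N}_{WH}$ itself.

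\emph{Computing the target.} For $d=3$ we have $\mathcal{N}_{WH}(X)=\tfrac12(\Tr(X)\,\mathbb{I}-X^{\intercal})$. Let $\rho=\ketbra{\Omega_V}{\Omega_V}$; both of its marginals equal $\mathbb{I}/3$. Expanding the tensor product gives
\begin{align}
\mathcal{N}_{WH}^{\otimes2}(\rho)=\tfrac14\left(\mathbb{I}\otimes\mathbb{I}-\tfrac13\mathbb{I}\otimes\mathbb{I}-\tfrac13\mathbb{I}\otimes\mathbb{I}+\rho^{\intercal}\right)=\tfrac1{12}\,\mathbb{I}\otimes\mathbb{I}+\tfrac14\ketbra{\Omega_{\bar V}}{\Omega_{\bar V}}.
\end{align}
Here we used that the full transpose of a pure state is the projector onto its complex conjugate, and $\overline{(\mathbb{I}\otimes V)\ket{\Omega}}=(\mathbb{I}\otimes\bar V)\ket{\Omega}$.

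\emph{Constructing $\Phi$.} Take
\begin{align}
\Phi(X)=\tfrac12\,\bar V X V^{\intercal}+\tfrac12\,\Tr(X)\,\tfrac{\mathbb{I}}{3}.
\end{align}
This channel is mixed-unitary, since the completely depolarizing channel is the uniform average over the nine qutrit Weyl unitaries. Expanding $\Phi^{\otimes 2}(\rho)$ gives four terms:
\begin{itemize}
\item the unitary--unitary term $\tfrac14\,(\bar V\otimes\bar V)\rho(\bar V\otimes\bar V)^\dagger$;
\item two cross terms, each equal to $\tfrac14\cdot\tfrac{\mathbb{I}}{3}\otimes\tfrac{\mathbb{I}}{3}$, because each marginal of $\rho$ is maximally mixed and stays so under a unitary;
\item the depolarizing--depolarizing term $\tfrac14\cdot\mathbb{I}\otimes\mathbb{I}/9$.
\end{itemize}
The last three terms add up to $\tfrac{3}{4}\cdot\tfrac19\,\mathbb{I}\otimes\mathbb{I}=\tfrac1{12}\,\mathbb{I}\otimes\mathbb{I}$, which matches the first term of the target. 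For the first term, the transpose trick $(A\otimes\mathbb{I})\ket{\Omega}=(\mathbb{I}\otimes A^{\intercal})\ket{\Omega}$ gives
\begin{align}
(\bar V\otimes\bar V)(\mathbb{I}\otimes V)\ket{\Omega}=(\mathbb{I}\otimes \bar V V V^{\dagger})\ket{\Omega}=\ket{\Omega_{\bar V}}.
\end{align}
So the two sides agree exactly, the relative entropy vanishes for every $V$, and therefore $E^2_{\mathrm{MES}}=0$.

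The only step requiring care is finding a single unitary $U$ with $UVU^{\intercal}\propto\bar V$ for every $V$. The choice $U=\bar V$ works because $\bar V V (\bar V)^{\intercal}=\bar V V V^\dagger=\bar V$. Once this is in hand, the remaining work is the bookkeeping of the cross terms, which I expect to be routine.
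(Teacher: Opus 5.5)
Your proposal is correct and is essentially the paper's proof. Your channel $\Phi(X)=\tfrac12\bar V X V^{\intercal}+\tfrac12\Tr(X)\,\mathbb{I}/3$ is exactly the paper's $\Phi_V=\mathcal{D}_{\frac12}\circ\mathcal{V}^*$, and the target computation and the identity $(\bar V\otimes\bar V)\ket{\Omega_V}=\ket{\Omega_{\bar V}}$ are the same. Your explicit check that $\mathfrak{M}$ is closed under $W\otimes W$, which justifies reducing to $\mathcal{N}_{WH}$ for this restricted probe class, is a useful detail the paper leaves implicit in its appeal to unitary equivalence.
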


\begin{proof}
    Due to unitary equivalence, it suffices to prove this for the qutrit Werner--Holevo channel $\mathcal{N}_{WH}$. Let $\ket{\Omega_V}=(\mathbb{I}\otimes V)\ket{\Omega}$ be an arbitrary maximally entangled two-qutrit state. We show that there exists a mixed-unitary channel $\Phi_V$ for each $\ket{\Omega_V}$ such that
    \begin{align}
        \Phi_V^{\otimes 2}(\Omega_V)=\mathcal{N}^{\otimes2}_{WH}(\Omega_V),
    \end{align}
    where $\Omega_V\equiv\ketbra{\Omega_V}{\Omega_V}$. 
    
    The action of the Werner--Holevo channels on this state is given by
    \begin{align}
        \mathcal{N}_{WH}\otimes\mathcal{N}_{WH}(\Omega_V)&=\frac{1}{4}\left(\mathbb{I}\otimes\mathbb{I}-\mathbb{I}\otimes\frac{\mathbb{I}}{3}-\frac{\mathbb{I}}{3}\otimes\mathbb{I}+\Omega_V^{\intercal}\right),\nonumber\\
        &=\frac{1}{12}\mathbb{I}\otimes\mathbb{I}+\frac{1}{4}\Omega^{\intercal}_V.
    \end{align}

    Now consider the unitary channel $\mathcal{V}^*(X)=V^*XV^{\intercal}$. Since $\ket{\Omega}$ satisfies
    \begin{align}
        (A\otimes B)\ket{\Omega}=(\mathbb{I}\otimes BA^{\intercal})\ket{\Omega}
    \end{align}
    for all operators $A$ and $B$, the action of $V^*\otimes V^*$ on $\ket{\Omega_V}$ is given by
    \begin{align}
        (V^*\otimes V^*)\ket{\Omega_V}&=(V^*\otimes V^*V)\ket{\Omega},\nonumber\\
        &=(\mathbb{I}\otimes V^*VV^{\dagger})\ket{\Omega},\nonumber\\
        &=\ket{\Omega^*_V}.
    \end{align}
    Therefore, the action of the channel $\mathcal{V}^*\otimes\mathcal{V}^*$ on $\Omega_V$ is given by
    \begin{align}
        \mathcal{V}^*\otimes\mathcal{V}^*(\Omega_V)=\Omega_V^{\intercal}.
    \end{align}
    Note that since both the marginal states of $\Omega_V$ are $\frac{\mathbb{I}}{3}$ and $\mathcal{V}^*$ is a unitary channel, both the marginal states of $\Omega_V^{\intercal}$ are also $\frac{\mathbb{I}}{3}$.

    Now consider the qutrit depolarizing channel
    \begin{align}
        \mathcal{D}_{\frac{1}{2}}(X)=\frac{1}{2}\left(\Tr(X)\,\frac{\mathbb{I}}{3}+X\right),
    \end{align}
    which is a mixed-unitary channel. For any bipartite state $\rho$ with both marginals $\frac{\mathbb{I}}{3}$, the action of $\mathcal{D}_{\frac{1}{2}}\otimes \mathcal{D}_{\frac{1}{2}}$ on it is given by
    \begin{align}
        \mathcal{D}_{\frac{1}{2}}\otimes\mathcal{D}_{\frac{1}{2}}(\rho)=\frac{1}{12}\mathbb{I}\otimes\mathbb{I}+\frac{1}{4}\rho.
    \end{align}
    Define
    \begin{align}
        \Phi_V=\mathcal{D}_{\frac{1}{2}}\circ\mathcal{V}^*,
    \end{align}
    which is also a mixed-unitary channel. The action of $\Phi_V\otimes\Phi_V$ on $\Omega_V$ is then given by
    \begin{align}
        \Phi_V\otimes\Phi_V(\Omega_V)&=\mathcal{D}_{\frac{1}{2}}\otimes\mathcal{D}_{\frac{1}{2}}\left[\mathcal{V}^*\otimes\mathcal{V}^*(\Omega_V)\right],\nonumber\\
        &=\mathcal{D}_{\frac{1}{2}}\otimes\mathcal{D}_{\frac{1}{2}}(\Omega_V^{\intercal}), \nonumber\\
        &=\frac{1}{12}\mathbb{I}\otimes\mathbb{I}+\frac{1}{4}\Omega^{\intercal}_V=\mathcal{N}_{WH}\otimes\mathcal{N}_{WH}(\Omega_V).
    \end{align}
    
    Therefore, for every maximally entangled state $\ket{\Omega_V}$,
    \begin{align}
        \min_{\Phi\in\mathsf{MU}} D(\Phi^{\otimes2}(\Omega_V)\|\mathcal{N}^{\otimes2}_{WH}(\Omega_V))=0,
    \end{align}
    and hence 
    \begin{align}
        E^2_{\mathrm{MES}}\left(\mathsf{MU}^{\mathrm{iid}}\middle\|\mathcal{N}_{WH}\right)=0.
    \end{align}
This completes the proof.
\end{proof}

\subsection{Strict advantage of auxiliary memory}\label{subsec:memory_advantage}

We now address the question whether auxiliary memory provides an advantage even when the input probes are fully i.i.d.. Theorem~\ref{theo:unital_nogo} establishes that for every unital channel, the memoryless Stein exponent vanishes. Our next result shows that for every non-mixed-unitary channel, even if it is unital, the exponent becomes strictly positive with auxiliary memory. In fact, we get a straightforward lower bound in terms of the channel’s diamond-norm distance from the set of mixed-unitary channels.

We first recall the following channel-discrimination minimax identity \cite{yu2016bounds}.

\begin{lemma}[Theorem 1 in \cite{yu2016bounds}]\label{lemma:channel-minimax}
Let $\mathsf{S}_1$ and $\mathsf{S}_2$ be two nonempty convex compact sets of quantum channels on $\mathfrak{B}(\mathbb{C}^d)$. Then
\begin{align}
    \sup_{\rho_{RA}\in\mathfrak{D}(R\otimes\mathbb{C}^d)} \min_{\substack{\Phi_1\in\mathsf{S}_1\\\Phi_2\in\mathsf{S}_2}} \left\| \bigl(\mathrm{id}_R\otimes(\Phi_1-\Phi_2)\bigr)(\rho_{RA})\right\|_1 &= \min_{\substack{\Phi_1\in\mathsf{S}_1\\\Phi_2\in\mathsf{S}_2}} \max_{\rho_{RA}\in\mathfrak{D}(R\otimes\mathbb{C}^d)}\left\| \bigl(\mathrm{id}_R\otimes(\Phi_1-\Phi_2)\bigr)(\rho_{RA})\right\|_1\nonumber\\
    &=\min_{\substack{\Phi_1\in\mathsf{S}_1\\\Phi_2\in\mathsf{S}_2}} \|\Phi_1-\Phi_2\|_\diamond,
\end{align}
where $R\cong\mathbb{C}^d$.
\end{lemma}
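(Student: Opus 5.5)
The plan is to prove the two inequalities separately. The easy one is
\[
\sup_{\rho_{RA}}\min_{\Phi_1,\Phi_2}\big\|(\mathrm{id}_R\otimes(\Phi_1-\Phi_2))(\rho_{RA})\big\|_1\le\min_{\Phi_1,\Phi_2}\max_{\rho_{RA}}\big\|(\mathrm{id}_R\otimes(\Phi_1-\Phi_2))(\rho_{RA})\big\|_1 ,
\]
which holds because $\sup\min\le\min\sup$. The second equality in the statement is just the definition of the diamond norm. Restricting to $R\cong\mathbb{C}^d$ loses nothing, by purification and the data processing argument already recalled after Eq.~\eqref{eq:stablizedd}.

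The nontrivial direction is where the work lies. The obstacle is that $\rho\mapsto\|(\mathrm{id}\otimes\Delta)(\rho)\|_1$ is convex, not concave, in $\rho$, so Sion's theorem cannot be applied directly in the variables $(\rho,\Phi_1,\Phi_2)$. I would linearise it using the Choi-operator (Watrous-type SDP) description of the diamond norm.

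Let $\Delta=\Phi_1-\Phi_2$ and let $J(\Delta)=\sum_{ij}\ket{i}\!\bra{j}\otimes\Delta(\ket{i}\!\bra{j})$ be its Choi operator. Since $\Delta$ is Hermiticity preserving and trace annihilating, every output $(\mathrm{id}\otimes\Delta)(\rho)$ is traceless. For a traceless Hermitian $Y$ one has $\|Y\|_1=2\max_{0\le P\le \mathbb{I}}\Tr[PY]$. Writing a pure input as $\ket{\psi}=(\sqrt{\sigma}\otimes\mathbb{I})\ket{\Gamma}$ with $\ket{\Gamma}=\sum_i\ket{ii}$ and $\sigma$ a state on $R$, I would show
\[
\max_{\rho}\|(\mathrm{id}\otimes\Delta)(\rho)\|_1=2\max_{W\in\mathcal{W}}\Tr[J(\Delta)W],\qquad \mathcal{W}:=\{W:\ 0\le W\le\sigma^{\intercal}\otimes\mathbb{I}\ \text{for some state }\sigma\}.
\]
This uses the substitution $W=(\sqrt{\sigma}^{\,\intercal}\otimes\mathbb{I})P(\sqrt{\sigma}^{\,\intercal}\otimes\mathbb{I})$, with $0\le P\le\mathbb{I}$ supported on $\mathrm{supp}(\sigma^{\intercal})\otimes\mathbb{C}^d$ (the precise placement of transposes/conjugates in this substitution should be checked).

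The key point is that $\mathcal{W}$ is convex and compact, being the projection of the compact convex set $\{(W,\sigma)\}$. The objective $\Tr[J(\Phi_1-\Phi_2)W]$ is bilinear in $W$ and $(\Phi_1,\Phi_2)$, and $\mathsf{S}_1\times\mathsf{S}_2$ is convex and compact. Sion's minimax theorem therefore gives
\[
\min_{\Phi_1,\Phi_2}\max_{W\in\mathcal{W}}\Tr[J(\Phi_1-\Phi_2)W]=\max_{W\in\mathcal{W}}\min_{\Phi_1,\Phi_2}\Tr[J(\Phi_1-\Phi_2)W].
\]

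Finally, fix an optimal $W$, with its associated $\sigma$ and $P$, and reverse the substitution. Take $\rho_W=\ketbra{\psi}{\psi}$ with $\ket{\psi}=(\sqrt{\sigma}\otimes\mathbb{I})\ket{\Gamma}$. Then for every pair $(\Phi_1,\Phi_2)$,
\[
2\Tr[J(\Phi_1-\Phi_2)W]=2\Tr\big[P\,(\mathrm{id}\otimes(\Phi_1-\Phi_2))(\rho_W)\big]\le\|(\mathrm{id}\otimes(\Phi_1-\Phi_2))(\rho_W)\|_1 .
\]
Taking the minimum over $(\Phi_1,\Phi_2)$ and then the supremum over $\rho$ gives
\[
\min_{\Phi_1,\Phi_2}\|\Phi_1-\Phi_2\|_\diamond\le\sup_{\rho}\min_{\Phi_1,\Phi_2}\|(\mathrm{id}\otimes(\Phi_1-\Phi_2))(\rho)\|_1 ,
\]
which closes the chain. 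The step I expect to need the most care is the bookkeeping in the $W\leftrightarrow(\sigma,P)$ correspondence: transposes, rank-deficient $\sigma$ (handled by restricting $P$ to the support), and verifying that the maximisation over $\mathcal{W}$ really reproduces the diamond norm. The minimax step itself is routine once bilinearity and compactness are in place.
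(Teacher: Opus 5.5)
The paper does not prove this lemma; it imports it as Theorem~1 of Ref.~\cite{yu2016bounds}. Your argument therefore supplies what the paper outsources, and it is correct. Your linearisation is Watrous's semidefinite program for a difference of channels, $\tfrac12\|\Phi_1-\Phi_2\|_\diamond=\max\{\Tr[J(\Phi_1-\Phi_2)W]:0\le W\le\sigma\otimes\mathbb{I},\ \sigma\in\mathfrak{D}(R)\}$. The feasible set is compact and convex, and the objective is real (since $J(\Phi_1-\Phi_2)$ is Hermitian) and bilinear, so Sion's theorem applies. Reversing the substitution at the saddle point then closes the chain exactly as you describe.

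The points you flagged are settled as follows.

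\textbf{Transposes.} Take $J(\Delta)=(\mathrm{id}_R\otimes\Delta)(\ketbra{\Gamma}{\Gamma})$ and $\ket{\psi}=(\sqrt{\sigma}\otimes\mathbb{I})\ket{\Gamma}$. Then $(\mathrm{id}_R\otimes\Delta)(\ketbra{\psi}{\psi})=(\sqrt{\sigma}\otimes\mathbb{I})J(\Delta)(\sqrt{\sigma}\otimes\mathbb{I})$, so $W=(\sqrt{\sigma}\otimes\mathbb{I})P(\sqrt{\sigma}\otimes\mathbb{I})\le\sigma\otimes\mathbb{I}$ with no transposes at all. Here $\sigma$ is the transpose of the $A$-marginal, which is harmless because $\sigma\mapsto\sigma^{\intercal}$ is a bijection of $\mathfrak{D}(R)$ and leaves $\mathcal{W}$ unchanged.

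\textbf{Rank-deficient $\sigma$.} The constraint $0\le W\le\sigma\otimes\mathbb{I}$ forces $\operatorname{supp}W\subseteq\operatorname{supp}\sigma\otimes\mathbb{C}^d$, so $P$ built from the pseudo-inverse of $\sqrt{\sigma}$ works.

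\textbf{General pure inputs.} For $\|\Delta\|_\diamond\le 2\max_{W\in\mathcal{W}}\Tr[J(\Delta)W]$ you must still reduce to inputs of this form. The convex map $\rho\mapsto\|(\mathrm{id}_R\otimes\Delta)(\rho)\|_1$ attains its maximum at a pure state. A general pure input $(C\otimes\mathbb{I})\ket{\Gamma}$ has the form $(V\otimes\mathbb{I})(\sqrt{\sigma}\otimes\mathbb{I})\ket{\Gamma}$ with $C=V|C|$ and $\sigma=C^\dagger C$. The unitary $V$ on $R$ does not change the trace norm and can be absorbed into $P$.

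Compared with the bare citation, your route makes the paper self-contained and yields slightly more. The optimal $W^\ast$ gives an explicit pure probe $\rho_{W^\ast}$ at which the supremum on the left-hand side is attained. This is what justifies the $\max_{\rho_{RA}}$ written in the proof of Theorem~\ref{theo:memorypositive}.
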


\begin{theorem}\label{theo:memorypositive}
    For every non-mixed-unitary channel $\mathcal{N}$ on $\mathfrak{B}(\mathbb{C}^d)$
    \begin{align}
        \widetilde{E}^1\left(\mathsf{MU}^{\mathrm{iid}}\middle\|\mathcal{N}\right)\geq\frac{1}{2\ln2}\left[\min_{\Phi\in\mathsf{MU}}\left\|\Phi-\mathcal{N}\right\|_{\diamond}\right]^2>0.
    \end{align}
\end{theorem}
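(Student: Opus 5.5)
The plan is to start from the finite-letter formula for the memory-assisted fully i.i.d. exponent, Eq.~\eqref{eq:stein_mem_fully_iid}, which gives
\begin{align*}
\widetilde{E}^1\left(\mathsf{MU}^{\mathrm{iid}}\middle\|\mathcal{N}\right)=\sup_{\rho_{RA}}\min_{\Phi\in\mathsf{MU}} D\left((\mathrm{id}_R\otimes\Phi)[\rho_{RA}]\middle\|(\mathrm{id}_R\otimes\mathcal{N})[\rho_{RA}]\right),
\end{align*}
with $R\cong\mathbb{C}^d$. For every fixed $\rho_{RA}$ and every $\Phi\in\mathsf{MU}$, the quantum Pinsker inequality lower bounds the relative entropy by $\frac{1}{2\ln 2}\|(\mathrm{id}_R\otimes(\Phi-\mathcal{N}))(\rho_{RA})\|_1^2$. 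Since $x\mapsto x^2$ is monotone on $[0,\infty)$, minimizing over $\Phi$ commutes with squaring. This yields
\begin{align*}
\widetilde{E}^1\geq\frac{1}{2\ln2}\Big[\sup_{\rho_{RA}}\min_{\Phi\in\mathsf{MU}}\|(\mathrm{id}_R\otimes(\Phi-\mathcal{N}))(\rho_{RA})\|_1\Big]^2 .
\end{align*}
Here I also use that the supremum of squares of nonnegative quantities equals the square of the supremum.

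The key step is then to exchange $\sup$ and $\min$. For this I invoke Lemma~\ref{lemma:channel-minimax} with $\mathsf{S}_1=\mathsf{MU}$ and $\mathsf{S}_2=\{\mathcal{N}\}$. This requires verifying the hypotheses: $\{\mathcal{N}\}$ is trivially convex and compact, and $\mathsf{MU}$ is convex (mixtures of mixtures of unitaries are mixtures of unitaries) and compact. Compactness follows because $\mathsf{MU}$ is the convex hull of the compact set of unitary channels (the continuous image of $U(d)$) in a finite-dimensional space, and the convex hull of a compact set in finite dimension is compact by Carathéodory. The lemma then turns the bracket into $\min_{\Phi\in\mathsf{MU}}\|\Phi-\mathcal{N}\|_\diamond$, giving the stated lower bound.

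For strict positivity, note that the minimum over the compact set $\mathsf{MU}$ of the continuous function $\Phi\mapsto\|\Phi-\mathcal{N}\|_\diamond$ is attained at some $\Phi^\star$. If the minimum were zero, then $\mathcal{N}=\Phi^\star\in\mathsf{MU}$, which contradicts the assumption, so the bound is strictly positive.

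The main obstacle is the minimax exchange; it is handled entirely by Lemma~\ref{lemma:channel-minimax}, so the only real care needed is checking the convexity and compactness of $\mathsf{MU}$ and that Pinsker is applied to normalized states, which holds since both outputs are states.
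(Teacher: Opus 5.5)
Your proposal is correct and follows the paper's proof essentially step for step. Both arguments apply Pinsker's inequality inside the finite-letter formula \eqref{eq:stein_mem_fully_iid}, pull the square out of the $\sup$--$\min$ by monotonicity on $[0,\infty)$, invoke Lemma~\ref{lemma:channel-minimax} with $\mathsf{S}_1=\mathsf{MU}$ and $\mathsf{S}_2=\{\mathcal{N}\}$, and use compactness of $\mathsf{MU}$ for strict positivity; your Carathéodory-based check that $\mathsf{MU}$ is convex and compact is a detail the paper only asserts.
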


\begin{proof}
    Let $\rho_{RA}\in\mathfrak{D}(R\otimes \mathbb{C}^d)$ be an arbitrary bipartite probe state of which $A$ is used as the channel input and $R\cong\mathbb{C}^d$ is the auxiliary memory. Denoting the output as
    \begin{align*}
        \omega_{\mathcal{E},\rho}:=\mathrm{id}\otimes\mathcal{E}(\rho_{RA}),
    \end{align*}
    we have from the Pinsker's inequality 
    \begin{align}
        D\left(\omega_{\Phi,\rho}\middle\|\omega_{\mathcal{N},\rho}\right)\geq \frac{1}{2\ln2}\left\|\omega_{\Phi,\rho}-\omega_{\mathcal{N},\rho}\right\|^2_1\, ,
    \end{align}
    for every $\Phi\in\mathsf{MU}$ and $\mathcal{N}\notin\mathsf{MU}$.
    
    Using the characterization of the fully i.i.d., memory-assisted Stein exponent in Eq.~\eqref{eq:stein_mem_fully_iid}, we therefore get
    \begin{align}
        \widetilde{E}^1\left(\mathsf{MU}^{\mathrm{iid}}\middle\|\mathcal{N}\right)&\geq\frac{1}{2\ln2}\max_{\rho_{RA}\in\mathfrak{D}(R\otimes \mathbb{C}^d)}\min_{\Phi\in\mathsf{MU}}\left\|\omega_{\Phi,\rho}-\omega_{\mathcal{N},\rho}\right\|^2_1\nonumber\\
        &=\frac{1}{2\ln2}\left[\max_{\rho_{RA}\in\mathfrak{D}(R\otimes \mathbb{C}^d)}\min_{\Phi\in\mathsf{MU}}\left\|\omega_{\Phi,\rho}-\omega_{\mathcal{N},\rho}\right\|_1\right]^2.
    \end{align}
    The last line follows from the nonnegativity of the trace norm. Applying Lemma~\ref{lemma:channel-minimax} with $\mathsf{S}_1=\mathsf{MU}$ and $\mathsf{S}_2=\{\mathcal{N}\}$ (a singleton set), we get
    \begin{align}
        \widetilde{E}^1\left(\mathsf{MU}^{\mathrm{iid}}\middle\|\mathcal{N}\right)\geq\frac{1}{2\ln2}\left[\min_{\Phi\in\mathsf{MU}}\left\|\Phi-\mathcal{N}\right\|_\diamond\right]^2.
    \end{align}

    Finally, since the set $\mathsf{MU}$ is compact, the minimum diamond distance is achievable. If this minimum were zero, its minimizer would equal $\mathcal{N}$, contradicting $\mathcal{N}\notin\mathsf{MU}$. Therefore, the bound is strictly positive.
\end{proof}

Thus, every non-mixed-unitary channel is detectable with fully i.i.d. memory-assisted probes. Combined with Theorem \ref{theo:unital_nogo}, this establishes that auxiliary memory gives a strict advantage for every unital non-mixed-unitary channel, even with fully i.i.d. probes.

\subsubsection{Entanglement between the channel-input and the auxiliary memory is not necessary}

It is interesting to note here that the origin of the advantage of the auxiliary memory cannot be solely attributed to the entanglement between the channel-input and the auxiliary memory. Our next result demonstrates that separable but correlated reference–input probes can also yield a strictly positive Stein exponent.

\begin{proposition}
    Consider the family of bipartite states in $\mathfrak{D}(\mathbb{C}^d\otimes\mathbb{C}^d)$
    \begin{align}
        \rho_{RA}(p)=p\,\ketbra{\Omega_d}{\Omega_d}+(1-p)\frac{\mathbb{I}_{RA}}{d^2} \in \mathfrak{D}(\mathbb{C}^d\otimes\mathbb{C}^d),\quad 0<p\leq1,
    \end{align}
    where $\ket{\Omega_d}=\frac{1}{\sqrt{d}}\sum_{i=0}^{d-1}\ket{ii}$. For every channel $\mathcal{N}\notin\mathsf{MU}$, the probe state $\rho_{RA}(p)$ yields a strictly positive fixed-probe Stein exponent:
    \begin{align}
        \min_{\Phi\in\mathsf{MU}} D\left(\mathrm{id}_R\otimes\Phi(\rho_{RA}(p))\middle\|\mathrm{id}_R\otimes\mathcal{N}(\rho_{RA}(p))\right)> 0,\qquad\forall\,p\in(0,1].
    \end{align}
    Moreover, $\rho_{RA}(p)$ is separable for $p\in(0,\frac{1}{d+1}]$.
\end{proposition}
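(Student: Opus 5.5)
The proof splits into two independent parts: strict positivity of the fixed-probe exponent for every $p\in(0,1]$, and separability of $\rho_{RA}(p)$ for $p\le\frac{1}{d+1}$.

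\textbf{Positivity.} I would argue by contradiction. Suppose the minimum equals zero. The relative entropy is lower semicontinuous and $\mathsf{MU}$ is compact, as already used in Theorem~\ref{theo:stein_gen}, so the minimum is attained at some $\Phi\in\mathsf{MU}$. Faithfulness of the Umegaki relative entropy then gives
\[
(\mathrm{id}_R\otimes\Phi)(\rho_{RA}(p))=(\mathrm{id}_R\otimes\mathcal{N})(\rho_{RA}(p)).
\]
Write $J_\mathcal{E}:=(\mathrm{id}\otimes\mathcal{E})(\ketbra{\Omega_d}{\Omega_d})$ for the normalized Choi state. By linearity,
\[
(\mathrm{id}_R\otimes\mathcal{E})(\rho_{RA}(p))=p\,J_\mathcal{E}+(1-p)\,\frac{\mathbb{I}_R}{d}\otimes\mathcal{E}\!\left(\frac{\mathbb{I}_A}{d}\right).
\]
The second term depends on $\mathcal{E}$ only through $\mathcal{E}(\mathbb{I}/d)$, and $\Tr_R J_\mathcal{E}=\mathcal{E}(\mathbb{I}/d)$. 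Taking the partial trace over $R$ of both sides of the assumed equality therefore gives $\Phi(\mathbb{I}/d)=\mathcal{N}(\mathbb{I}/d)$, so the second terms cancel. Since $p>0$, we are left with $J_\Phi=J_\mathcal{N}$. By the Choi--Jamio{\l}kowski isomorphism this forces $\Phi=\mathcal{N}$, contradicting $\mathcal{N}\notin\mathsf{MU}$. Hence the minimum is strictly positive.

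I expect the only delicate step to be the attainment of the minimum. If the reader prefers to avoid lower semicontinuity, an alternative is to bound the divergence below by Pinsker's inequality. The map $\Phi\mapsto\|(\mathrm{id}\otimes(\Phi-\mathcal{N}))(\rho_{RA}(p))\|_1$ is continuous on the compact set $\mathsf{MU}$, so it attains its minimum; that minimum is nonzero by the same injectivity argument. This route even yields the explicit positive lower bound $\frac{1}{2\ln2}\min_\Phi\|\cdot\|_1^2$.

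\textbf{Separability.} The state $\rho_{RA}(p)$ is the isotropic state. For $p\le\frac{1}{d+1}$ its overlap with the maximally entangled state is
\[
F=p+\frac{1-p}{d^2}\le\frac1d.
\]
Isotropic states with $F\le 1/d$ are separable; this is the Horodecki criterion, which I would cite. Alternatively, one can give a direct decomposition. Let $\{\ket{\phi}\}$ be Haar-distributed. Twirling the product state $\ketbra{\phi}{\phi}\otimes\ketbra{\phi^*}{\phi^*}$ under $U\otimes U^*$ leaves it invariant and yields the isotropic state with $F=1/d$, because $|\langle\Omega_d|\phi\phi^*\rangle|^2=1/d$. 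Mixing this state with $\mathbb{I}/d^2$ then covers every $F$ between $1/d^2$ and $1/d$, and this range corresponds exactly to $p\in(0,\frac{1}{d+1}]$. Combining the two parts, a separable but correlated probe already attains a positive exponent.
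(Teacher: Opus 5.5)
Your argument is correct and follows the paper's proof essentially step for step: assume an exact match of outputs, trace out $R$ to get $\Phi(\mathbb{I}/d)=\mathcal{N}(\mathbb{I}/d)$, cancel the noise term to obtain $J(\Phi)=J(\mathcal{N})$, and cite the Horodecki isotropic-state criterion $F\le 1/d$ for separability. Your explicit justification that the minimum over $\mathsf{MU}$ is attained, which the paper leaves implicit, and your twirling decomposition for separability are useful additions, but they do not change the route.
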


\begin{proof}
    We prove this by contradiction. Suppose for some fixed value of $0<p\leq1$
    \begin{align}
        \min_{\Phi\in\mathsf{MU}} D\left(\mathrm{id}_R\otimes\Phi(\rho_{RA}(p))\middle\|\mathrm{id}_R\otimes\mathcal{N}(\rho_{RA}(p))\right)= 0.
    \end{align}
    That is, there exists a mixed-unitary channel $\Phi$ such that
    \begin{align}
        &\mathrm{id}_R\otimes\Phi(\rho_{RA}(p))=\mathrm{id}_R\otimes\mathcal{N}(\rho_{RA}(p)),\nonumber\\
        \text{or},\qquad &p\,J(\Phi)+(1-p)\frac{\mathbb{I}_R}{d}\otimes\Phi\left(\frac{\mathbb{I}_A}{d}\right)=p\,J(\mathcal{N})+(1-p)\frac{\mathbb{I}_R}{d}\otimes\mathcal{N}\left(\frac{\mathbb{I}_A}{d}\right),\label{eq:match}
    \end{align}
    where $J(\mathcal{X})$ denotes the normalised Choi-Jamiolkowski state of the channel $\mathcal{X}$. 
    
    Since $\Tr_R[\rho_{RA}(p)]=\frac{\mathbb{I}_A}{d}$, we must then have
    \begin{align}
        \Phi\left(\frac{\mathbb{I}_A}{d}\right)=\mathcal{N}\left(\frac{\mathbb{I}_A}{d}\right).
    \end{align}
    Then, Eq.~\eqref{eq:match} implies
    \begin{align}
        J(\Phi)=J(\mathcal{N}).
    \end{align}
    which is impossible as $\mathcal{N}\notin\mathsf{MU}$. Therefore, no such mixed-unitary channel exists, and our claim is established. The separability region of $\rho_{RA}(p)$ follows from \cite{PhysRevA.59.4206}.
\end{proof}

\subsubsection{Exact Stein exponent and optimal probes}

Thus far, we have focused on lower bounds for the Stein exponent. In general, determining its exact value and identifying optimal probes are difficult, even for a relatively simple discrimination strategy with fully i.i.d. probes with auxiliary memory. Nevertheless, for some specific channels, both can be determined. Our following result demonstrates this.

\begin{theorem}
    Let $d\geq3$ be odd and $\mathcal{N}$ be a quantum channel on $\mathfrak{B}(\mathbb{C}^d)$ that is unitarily equivalent to the $d$-dimensional Werner--Holevo channel $\mathcal{N}_{WH}$. Then
    \begin{align}
        \widetilde{E}^1\left(\mathsf{MU}^{\mathrm{iid}}\middle\|\mathcal{N}\right)=\infty,
    \end{align}
    and any full-Schmidt-rank pure state serves as an optimal probe.
\end{theorem}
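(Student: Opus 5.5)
The plan is to show that for every full-Schmidt-rank pure probe and every $\Phi\in\mathsf{MU}$, the support of the output under $\Phi$ is not contained in the support of the output under $\mathcal{N}_{WH}$. Then the relative entropy in Eq.~\eqref{eq:stein_mem_fully_iid} equals $+\infty$ for every $\Phi$, so the inner minimum is $+\infty$. This gives $\widetilde{E}^1=\infty$, and every such probe is trivially optimal.

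\emph{Reduction.} By the saturation part of Theorem~\ref{theo:monotonicity} for unitarily equivalent channels, it suffices to treat $\mathcal{N}=\mathcal{N}_{WH}$.

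\emph{From probes to Choi operators.} Write a full-Schmidt-rank pure state on $R\otimes A$, with $R\cong\mathbb{C}^d$, as
\begin{equation*}
\ket{\psi}=\sqrt{d}\,(X\otimes\mathbb{I})\ket{\Omega_d},
\end{equation*}
where $X$ is invertible. For any channel $\mathcal{E}$ this gives
\begin{equation*}
(\mathrm{id}_R\otimes\mathcal{E})(\psi)=d\,(X\otimes\mathbb{I})\,J(\mathcal{E})\,(X^\dagger\otimes\mathbb{I}).
\end{equation*}
Congruence by an invertible operator maps supports bijectively: $\operatorname{supp}(YAY^\dagger)=Y\operatorname{supp}(A)$. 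Hence support inclusion for the outputs is equivalent to $\operatorname{supp}J(\Phi)\subseteq\operatorname{supp}J(\mathcal{N}_{WH})$. By the definition of $D$, it therefore suffices to show that this inclusion fails for every $\Phi\in\mathsf{MU}$.

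\emph{Choi operator of $\mathcal{N}_{WH}$.} A direct computation gives
\begin{equation*}
J(\mathcal{N}_{WH})=\frac{1}{d(d-1)}\bigl(\mathbb{I}\otimes\mathbb{I}-F\bigr),
\end{equation*}
where $F$ is the swap operator. This is proportional to the projector onto the antisymmetric subspace $\wedge^2\mathbb{C}^d$.

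\emph{Choi operators of mixed-unitary channels.} For $\Phi=\sum_x p_x\,U_x(\cdot)U_x^\dagger$ with all $p_x>0$,
\begin{equation*}
J(\Phi)=\sum_x p_x\,\ketbra{u_x}{u_x},\qquad \ket{u_x}=(\mathbb{I}\otimes U_x)\ket{\Omega_d}.
\end{equation*}
Since the $J(\Phi)$ is a positive combination, $\operatorname{supp}J(\Phi)\subseteq\wedge^2\mathbb{C}^d$ holds if and only if every $\ket{u_x}$ is antisymmetric, i.e.\ $F\ket{u_x}=-\ket{u_x}$. Using $F\,(\mathbb{I}\otimes U)\ket{\Omega_d}=(\mathbb{I}\otimes U^{\intercal})\ket{\Omega_d}$, this condition is exactly $U_x^{\intercal}=-U_x$.

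\emph{Parity obstruction.} Taking determinants of $U^{\intercal}=-U$ gives
\begin{equation*}
\det U=\det(-U)=(-1)^d\det U .
\end{equation*}
For odd $d$ this forces $\det U=0$, which contradicts unitarity. So no antisymmetric unitary exists. Consequently, for every $\Phi\in\mathsf{MU}$, some $\ket{u_x}$ has a nonzero component in the symmetric subspace, and the support inclusion fails. It follows that $D\bigl((\mathrm{id}\otimes\Phi)(\psi)\,\big\|\,(\mathrm{id}\otimes\mathcal{N}_{WH})(\psi)\bigr)=\infty$ for all $\Phi\in\mathsf{MU}$, which proves both claims of the theorem.

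The main step to get right is the Choi computation together with the transpose identity for the swap, in particular keeping track of which tensor factor carries $U$ versus $U^{\intercal}$. The sign of $F$ on the Werner--Holevo Choi operator must come out so that its support is exactly the antisymmetric subspace. Everything else is routine support bookkeeping.
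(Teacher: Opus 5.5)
Your proposal is correct and follows essentially the same route as the paper. Both arguments reduce to $\mathcal{N}_{WH}$ by unitary equivalence, and both write the full-Schmidt-rank probe as an invertible local operator applied to $\ket{\Omega_d}$, so that support inclusion reduces to Choi operators. They then identify $\operatorname{supp}J(\mathcal{N}_{WH})$ with the antisymmetric subspace and use the determinant parity obstruction to $U^{\intercal}=-U$ for odd $d$. Your explicit swap identity $F(\mathbb{I}\otimes U)\ket{\Omega_d}=(\mathbb{I}\otimes U^{\intercal})\ket{\Omega_d}$ simply spells out the antisymmetry criterion that the paper states without derivation.
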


\begin{proof}
    We prove the result for $\mathcal{N}_{WH}$ and the general statement then follows from the unitary equivalence. Consider an arbitrary full-Schmidt-rank pure probe $\ket{\psi}_{RA}\in\mathbb{C}^d\otimes\mathbb{C}^d$. Since it has full Schmidt rank, there exists an invertible operator $T$ \cite{Watrous2011} such that
    \begin{align}
        \ket{\psi}_{RA}=(T\otimes\mathbb{I}_A)\ket{\Omega_d},
    \end{align}
    where $\ket{\Omega_d}=\frac{1}{\sqrt{d}}\sum_{i=0}^{d-1}\ket{i}_R\ket{i}_A$. Consequently, for any channel $\mathcal{E}$, we get
    \begin{align}
        \mathrm{id}_R\otimes\mathcal{E}(\ketbra{\psi}{\psi})=(T\otimes\mathbb{I}_A)[J(\mathcal{E})](T^{\dagger}\otimes\mathbb{I}_A).\label{localinvert}
    \end{align}

    Now consider the normalised Choi-Jamiolkowski operator of the Werner--Holevo channel
    \begin{align}
        J(\mathcal{N}_{WH})&=\frac{1}{d-1}\left[\frac{\mathbb{I}_R\otimes\mathbb{I}_A}{d}-\Omega^{\intercal_{A}}_d\right]\nonumber\\
        &=\frac{2}{d(d-1)}P_{-},
    \end{align}
    where $P_{-}=(\mathbb I_{RA}-F)/{2}$ is the projector onto the antisymmetric subspace $\mathcal{A}_d$ of $\mathbb{C}^d\otimes \mathbb{C}^d$ with $F=\sum_{i,j=0}^{d-1}\ketbra{i}{j}_R\otimes \ketbra {j}{i}_A$ being the swap operator between the subsystems $R$ and $A$. Therefore,
    \begin{align}
        \operatorname{supp}\left(J(\mathcal{N}_{WH})\right)=\mathcal{A}_d.
    \end{align}

    On the other hand, the normalised Choi-Jamiolkowski operator of a mixed-unitary channel $\Phi$ is given by
    \begin{align}
        J(\Phi)=\sum_x p_x(\mathbb{I}\otimes U_x)\ketbra{\Omega_d}{\Omega_d}(\mathbb{I}\otimes U^{\dagger}_x).
    \end{align}
    Therefore,
    \begin{align}
        \operatorname{supp}\left(J(\Phi)\right)=\operatorname{span}\{(\mathbb{I}\otimes U_x)\ket{\Omega_d}:p_x>0\}.
    \end{align}
    For $(\mathbb{I}\otimes U_x)\ket{\Omega_d}$ to be an element of $\mathcal{A}_d$, we require
    \begin{align}
        U_x^{\intercal}=-U_x,\quad\forall x.
    \end{align}
    This implies
    \begin{align}
        \operatorname{det}(U_x)=\operatorname{det}(U^{\intercal}_x)=\operatorname{det}(-U_x)=(-1)^d\operatorname{det}(U_x).
    \end{align}
    Since $d$ is odd, we get
    \begin{align}
        \operatorname{det}(U_x)&=-\operatorname{det}(U_x),\nonumber\\
        \implies\operatorname{det}(U_x)&=0,
    \end{align}
    which contradicts the unitarity of $U_x$. Therefore,
    \begin{align}
        (\mathbb{I}\otimes U_x)\ket{\Omega_d}\notin\mathcal{A}_d,
    \end{align}
    and hence
    \begin{align}
        \operatorname{supp}\left(J(\Phi)\right)\not\subseteq \operatorname{supp}\left(J(\mathcal{N}_{WH})\right),\quad\forall\Phi\in\mathsf{MU}.
    \end{align}

    Let $K:=T\otimes\mathbb{I}_A$. For every positive semidefinite operator $X$,
    \begin{align}
        \operatorname{supp}(KXK^{\dagger})=K\operatorname{supp}(X).
    \end{align}
    Since $K$ is invertible, let two positive semidefinite operators be $X$ and $Y$ which satisfy
    \begin{align}
        \operatorname{supp}(KXK^\dagger)\subseteq \operatorname{supp}(KYK^\dagger), 
    \end{align}
    which means
    \begin{align}
        \nonumber&K\operatorname{supp}(X)\subseteq K \operatorname{supp}(Y)\\\implies&\operatorname{supp}(X)\subseteq  \operatorname{supp}(Y),
    \end{align}
    and the converse is also true. Hence,
    \begin{align}
        \operatorname{supp}(KXK^\dagger)\subseteq \operatorname{supp}(KYK^\dagger) \iff \operatorname{supp}(X)\subseteq \operatorname{supp}(Y).
    \end{align}
    Therefore, setting 
$X=J(\Phi),$ and $
Y=J(\mathcal N_{\rm WH})$, via Eq.~\eqref{localinvert} we have
    \begin{align}
        \operatorname{supp}\left[\mathrm{id}_R\otimes \Phi(\ketbra{\psi}{\psi})\right]\not\subseteq \operatorname{supp}\left[\mathrm{id}_R\otimes \mathcal{N}_{WH}(\ketbra{\psi}{\psi})\right],
    \end{align}    
    for every $\Phi\in\mathsf{MU}$. It then follows from the definition of quantum relative entropy that
    \begin{align}
        &D\left(\mathrm{id}_R\otimes \Phi(\ketbra{\psi}{\psi})\middle\|\mathrm{id}_R\otimes \mathcal{N}_{WH}(\ketbra{\psi}{\psi})\right)=\infty,\quad\forall\Phi\in\mathsf{MU},\nonumber\\
    \implies &\min_{\Phi\in\mathsf{MU}}D\left(\mathrm{id}_R\otimes \Phi(\ketbra{\psi}{\psi})\middle\|\mathrm{id}_R\otimes \mathcal{N}_{WH}(\ketbra{\psi}{\psi})\right)=\infty.
    \end{align}
    Therefore, the Stein exponent is given by
    \begin{align}
        \widetilde{E}^1\left(\mathsf{MU}^{\mathrm{iid}}\middle\|\mathcal{N}_{WH}\right)=\infty,
    \end{align}
    with every full-Schmidt-rank pure state $\ket{\psi}_{RA}$ being a maximizer.
\end{proof}

\section{Conclusion}
\label{sec:conclu}

Quantum channel discrimination provides a fundamental framework for characterizing quantum devices, benchmarking their performance, and diagnosing noise that can limit quantum advantage. We studied the problem of determining whether an unknown quantum channel $\mathcal{E}$ is a specified non-mixed-unitary channel $\mathcal{N}$ or belongs to the set of mixed-unitary channels, given multiple uses of the channel. We formulated this as a composite quantum channel hypothesis testing task, with the set of mixed-unitary channels $\mathsf{MU}$ as the composite i.i.d. null hypothesis and the specified channel $\mathcal{N}\notin\mathsf{MU}$ as the simple i.i.d. alternative hypothesis. We focused on parallel channel discrimination strategies and investigated how various operational restrictions on the input probes affect the achievable Stein exponents.

In general composite quantum channel hypothesis testing, arbitrary multipartite probes are  typically used as channel inputs, leading to regularized Stein exponents, while fully i.i.d. probes yield finite-letter expressions by restricting the admissible inputs. As an intermediate between these two regimes, we therefore introduced
the block-i.i.d. probe setting  in which block size one recovers fully i.i.d. probes and in the limit of infinite block size approaches arbitrary probes; for every fixed block size, we derived finite-letter characterizations of both memoryless and memory-assisted Stein exponents.  We further classified strategies according to the allowed intra-block correlations, including product, fully separable, and entangled probes, yielding a multifaceted hierarchy. 
Applying this framework to non-mixed-unitarity testing, we established that the resulting exponents are monotonic under mixed-unitary pre- and post-processing and invariant under unitary equivalence. Moreover, we proved several strict separations within the hierarchy for unital non-mixed-unitary channels: (1) without auxiliary memory, fully i.i.d. probes yield a vanishing Stein exponent for every unital channel, even though the set of unital channels is strictly larger than the set of mixed unitary channels for dimensions \(d\geq 3\); (2) for the qutrit Werner–Holevo channel, this limitation is overcome with three appropriately chosen product probes, yielding a strictly positive exponent; and (3) for block size two, we proved that pure product qutrit probes and several broad classes of mixed-state pairs remain insufficient, with numerical and analytical evidence suggesting the same for arbitrary product probe pairs. We formulated this last observation as a conjecture. Proving this conjecture would imply that three is the minimum number of uncorrelated inputs required to detect the non-mixed-unitarity of the qutrit Werner-Holevo channel, and, therefore, is an interesting avenue for future research. Another open question that has not been answered in this work is whether classical correlation between two input probes can yield nonzero Stein exponent for the qutrit Werner-Holevo channel.

An entangled two-qutrit probe yields a strictly positive exponent for the Werner–Holevo channel, even though two uncorrelated probes appear insufficient. In contrast, every maximally entangled two-qutrit probe gives a vanishing exponent.  These findings highlight that while entanglement between channel uses can enhance discrimination, the advantage is sensitive to the structure of the probe and does not increase monotonically with its entanglement. Access to auxiliary memory changes the situation more fundamentally. We proved that fully i.i.d. memory-assisted probes yield a strictly positive Stein exponent for every non-mixed-unitary channel and  obtained a universal lower bound in terms of the channel's diamond-norm distance from the mixed-unitary set. Moreover, this benefit does not require entanglement between the channel input and the reference: a universal family of separable but correlated probes already suffices. 
For Werner–Holevo channels in odd dimensions, the separation becomes maximal in the asymptotic sense — the memory-assisted Stein exponent is infinite, and every full-Schmidt-rank pure state is an optimal probe. Our results reveal that even the parallel-strategy framework, presumably simplest of all existing strategies, already exhibits highly nontrivial structure in detecting non-mixed-unitarity, thereby motivating a systematic investigation of this problem within more general adaptive strategies. 

 \section*{acknowledgements}
The authors used OpenAI’s ChatGPT (GPT-5.6 Sol) for verifying few proofs of the work. The authors take full responsibility for every claim, calculation, and conclusion.

We acknowledge support from the project entitled ``Technology Vertical - Quantum Communication'' under the National Quantum Mission of the Department of Science and Technology (DST)  (Sanction Order No. DST/QTC/NQM/QComm/2024/2 (G)).

\appendix

\section{Alternate proof of Proposition~\ref{prop:useless_probes} for pure states}\label{appa}
Before presenting the main result let us recall a useful lemma.

\begin{lemma}[\cite{CHEFLES_2004}]\label{Gram}
Let$\{\ket{\psi_1},\cdots,\ket{\psi_n}\}$ and $\{\ket{\phi_1},\cdots,\ket{\phi_n}\}$ be two sets of pure states. Then there exists a unitary operator \(U\) such that
\[
U\ket{\psi_i}=\ket{\phi_i},
\qquad i=1,\cdots,n,
\]
\emph{if and only if} the two sets have identical Gram matrices, i.e.,
\begin{align}
    G^{\psi}=G^{\phi},
\end{align}
where $G^{\psi}_{ij}:=\braket{\psi_i}{\psi_j}$ and $G^{\phi}_{ij}:=\braket{\phi_i}{\phi_j}$.
\end{lemma}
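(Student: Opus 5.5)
The statement to prove is Lemma~\ref{Gram}: there is a unitary $U$ with $U\ket{\psi_i}=\ket{\phi_i}$ for all $i$ if and only if $G^{\psi}=G^{\phi}$.

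\textbf{Necessity.} This direction is immediate. If $U\ket{\psi_i}=\ket{\phi_i}$ for all $i$, then
\[
\braket{\phi_i}{\phi_j}=\bra{\psi_i}U^\dagger U\ket{\psi_j}=\braket{\psi_i}{\psi_j},
\]
so the two Gram matrices coincide.

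\textbf{Sufficiency, step 1: define a linear map on the span.} Assume $G^{\psi}=G^{\phi}$. Let $\mathcal{S}_\psi=\mathrm{span}\{\ket{\psi_i}\}$ and $\mathcal{S}_\phi=\mathrm{span}\{\ket{\phi_i}\}$. I would define $V:\mathcal{S}_\psi\to\mathcal{S}_\phi$ by
\[
V\Big(\sum_i c_i\ket{\psi_i}\Big)=\sum_i c_i\ket{\phi_i}.
\]

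\textbf{Step 2: show $V$ is well defined.} This is the one point needing care, since the $\ket{\psi_i}$ may be linearly dependent. Suppose $\sum_i c_i\ket{\psi_i}=0$. Then
\[
\Big\|\sum_i c_i\ket{\phi_i}\Big\|^2=\sum_{i,j}\bar c_i c_j G^{\phi}_{ij}=\sum_{i,j}\bar c_i c_j G^{\psi}_{ij}=\Big\|\sum_i c_i\ket{\psi_i}\Big\|^2=0 .
\]
So any linear relation among the $\ket{\psi_i}$ carries over to the $\ket{\phi_i}$. The same computation, applied to inner products of arbitrary combinations, shows that $V$ preserves inner products. Hence $V$ is a linear isometry from $\mathcal{S}_\psi$ onto $\mathcal{S}_\phi$, and in particular $\dim\mathcal{S}_\psi=\dim\mathcal{S}_\phi$.

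\textbf{Step 3: extend to a unitary.} Both vectors sets live in the same ambient space $\mathbb{C}^d$ (here $\mathbb{C}^3$). Equal dimensions of the spans give equal dimensions of the orthogonal complements $\mathcal{S}_\psi^\perp$ and $\mathcal{S}_\phi^\perp$. I would pick any unitary $W:\mathcal{S}_\psi^\perp\to\mathcal{S}_\phi^\perp$, for instance by mapping one orthonormal basis onto another. Then $U=V\oplus W$ is unitary on $\mathbb{C}^d$ and satisfies $U\ket{\psi_i}=\ket{\phi_i}$ for every $i$.

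The main obstacle is the well-definedness in step 2; the remaining steps are routine linear algebra.
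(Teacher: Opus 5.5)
Your proof is correct and complete. Necessity follows from $U^\dagger U=\mathbb{I}$. Sufficiency follows by building the isometry on the span, checking well-definedness via $\|\sum_i c_i\ket{\phi_i}\|^2=\sum_{i,j}\bar c_i c_j G_{ij}$, and extending it unitarily across the equal-dimensional orthogonal complements in the common ambient space. The paper gives no proof of its own and simply cites Chefles et al.; your argument is the standard one behind that reference.
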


\begin{proposition}
Let $\mathcal{N}$ be an $O(3)$-covariant channel. Then for every pair of pure states $\{\ket{\psi_1},\ket{\psi_2}\}\subset \mathbb{C}^3$, there exists a mixed-unitary channel $\Phi$ such that 
\[
\Phi(\ketbra{\psi_i}{\psi_i}) = \mathcal{N}(\ketbra{\psi_i}{\psi_i}),
\quad \mbox{for}~i=1,2.
\]
\end{proposition}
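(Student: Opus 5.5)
The proof reduces the statement to the three extreme points of the $O(3)$-covariant set and handles each using the Gram-matrix criterion of Lemma~\ref{Gram}. Write $\mathcal{N}=q_1\mathcal{N}_{WH}+q_2\Psi+q_3\,\mathrm{id}$. The channels $\Psi$ and $\mathrm{id}$ are already mixed-unitary. So it suffices to find a mixed-unitary channel $\Phi_{WH}$, depending on the pair, with $\Phi_{WH}(\ketbra{\psi_i}{\psi_i})=\mathcal{N}_{WH}(\ketbra{\psi_i}{\psi_i})$ for $i=1,2$. Then $\Phi=q_1\Phi_{WH}+q_2\Psi+q_3\,\mathrm{id}$ is a convex combination of mixed-unitary channels. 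It is therefore mixed-unitary and reproduces $\mathcal{N}$ on both pure states.

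For the Werner--Holevo part, note that $\mathcal{N}_{WH}(\ketbra{\psi}{\psi})=\frac12(\mathbb{I}_3-\ketbra{\psi^*}{\psi^*})$. This is the equal mixture of the projectors onto two orthonormal vectors spanning the orthogonal complement of $\ket{\psi^*}$.

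The plan is to take $\Phi_{WH}(X)=\frac12(UXU^\dagger+VXV^\dagger)$ with unitaries satisfying
\begin{align*}
U\ket{\psi_i}=\ket{\phi_i},\quad V\ket{\psi_i}=\ket{\chi_i},\qquad i=1,2,
\end{align*}
where, for each $i$, $\{\ket{\phi_i},\ket{\chi_i}\}$ is an orthonormal basis of $\{\ket{\psi_i^*}\}^\perp$. Then $\frac12(\ketbra{\phi_i}{\phi_i}+\ketbra{\chi_i}{\chi_i})=\frac12(\mathbb{I}_3-\ketbra{\psi_i^*}{\psi_i^*})$, as required. By Lemma~\ref{Gram}, such $U,V$ exist if and only if
\begin{align*}
\braket{\phi_1}{\phi_2}=\braket{\psi_1}{\psi_2}=\braket{\chi_1}{\chi_2}.
\end{align*}
The equal-norm conditions hold automatically.

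The natural construction uses cross products. For a unit vector $\ket{\psi}$, the vectors $\ket{\phi}$ and $\ket{\chi}=\overline{\phi\times\psi^*}$ are orthogonal to $\ket{\psi^*}$ and to each other. Concretely, I would choose $\ket{\phi_i}$ as a suitable unit vector in $\{\ket{\psi_i^*}\}^\perp$ and define $\ket{\chi_i}$ by this cross-product rule. The required inner products then become explicit bilinear expressions. A Binet--Cauchy-type identity, $(a\times b)\cdot(c\times d)=(a\cdot c)(b\cdot d)-(a\cdot d)(b\cdot c)$, should let the condition on $\braket{\chi_1}{\chi_2}$ be rewritten in terms of $\braket{\phi_1}{\phi_2}$, $\braket{\psi_1}{\psi_2}$ and $\braket{\phi_i}{\psi_j^*}$-type terms.

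A cleaner alternative avoids this computation. The case of pure states is already covered by item~3 of Proposition~\ref{prop:useless_probes}, since pure states have a degenerate eigenvalue. The appendix, however, aims for a direct Gram-matrix argument. I would therefore parametrize, using the freedom of a global unitary, $\ket{\psi_1}=\ket{0}$ and $\ket{\psi_2}=\cos\theta\ket{0}+\sin\theta\ket{1}$ with a phase absorbed suitably. The phase $e^{\iota\varphi}$ can be absorbed into $\ket{1}$, but that changes the transpose basis. Using $O(3)$-covariance of $\mathcal{N}_{WH}$ together with Theorem~\ref{theo:monotonicity}-style unitary conjugation, I would reduce to a real configuration as far as possible.

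In that frame, the conditions become equations for phases and angles of $\ket{\phi_2},\ket{\chi_2}$ inside the plane orthogonal to $\ket{\psi_2^*}$. Take $\ket{\phi_1}=\ket{1}$ and $\ket{\chi_1}=\ket{2}$. Then $\ket{\phi_2}$ and $\ket{\chi_2}$ are an orthonormal pair in $\{\ket{\psi_2^*}\}^\perp$ whose components along $\ket{1}$ and $\ket{2}$ must both equal $c=\braket{\psi_1}{\psi_2}$. Writing $\ket{\phi_2}=\alpha\ket{e_1}+\beta\ket{e_2}$ in an orthonormal basis of that plane, and $\ket{\chi_2}$ as its orthogonal complement, this is two complex equations in essentially three real parameters plus phases. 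A dimension and intermediate-value argument, using the fact that $|c|\le1$ is compatible with the geometry, should give a solution.

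The main obstacle is this last existence step. One must show that the plane $\{\ket{\psi_2^*}\}^\perp$ contains an orthonormal basis whose $\ket{1}$-coefficient of the first vector and $\ket{2}$-coefficient of the second are both prescribed to equal $c$. If the direct parametrization gets messy, I would fall back on the argument of item~3 of Proposition~\ref{prop:useless_probes} via Theorem~\ref{th:joint-UET} and Lemma~\ref{lemma:simul_averaging}. That route gives unitaries $U=SW$ and $V=SW^\dagger$ directly, and the $\Psi$ and $\mathrm{id}$ components are then attached by convexity as described above.
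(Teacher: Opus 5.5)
\textbf{The direct route fails at its key step.} Your reduction to $\mathcal{N}_{WH}$ by convexity is exactly the paper's setup. So is your ansatz $\Phi_{WH}=\frac12(U\cdot U^\dagger+V\cdot V^\dagger)$, with $\{U\ket{\psi_i},V\ket{\psi_i}\}$ an orthonormal basis of $S_i=\{\ket{\psi_i^*}\}^\perp$ and existence checked via Lemma~\ref{Gram}. The gap is the existence step you flag, and as you set it up it cannot be closed.

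First, a minor point: the reduction to a real frame is exact, not ``as far as possible''. Since $\mathcal{N}_{WH}(UXU^\dagger)=\bar U\,\mathcal{N}_{WH}(X)\,U^{\intercal}$ for every unitary $U$, the problem is invariant under $\ket{\psi_i}\mapsto U\ket{\psi_i}$. You may therefore take $\ket{\psi_1}=\ket{0}$ and $\ket{\psi_2}=c\ket{0}+s\ket{1}$ with $c=|\braket{\psi_1}{\psi_2}|$ and $s=\sqrt{1-c^2}$.

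In that frame $S_2=\operatorname{span}\{s\ket{0}-c\ket{1},\ket{2}\}$. Every unit vector $\ket{v}\in S_2$ satisfies $|\braket{1}{v}|\le c$, with equality only for $\ket{v}\propto s\ket{0}-c\ket{1}$. Your condition $\braket{1}{\phi_2}=c$ therefore forces $\ket{\phi_2}\propto s\ket{0}-c\ket{1}$. Then $\ket{\chi_2}$ must be a phase times $\ket{2}$, so $|\braket{2}{\chi_2}|=1\neq c$ whenever $0<c<1$. With $\ket{\phi_1}=\ket{1}$ and $\ket{\chi_1}=\ket{2}$ there is no solution at all, so no dimension count or intermediate-value argument can produce one. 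The cross-product route is likewise only sketched and does not establish existence.

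\textbf{The missing idea.} In your frame, $\ket{2}$ spans $S_1\cap S_2$, and the basis of $S_1$ must be tilted away from this common direction. The paper lets $\ket{\eta}$ span $S_1\cap S_2$. It picks $\ket{\xi_i}\in S_i$ orthogonal to $\ket{\eta}$, with phases fixed so that $\braket{\xi_1}{\xi_2}=\alpha=\braket{\psi_1}{\psi_2}$. It then takes $U_1\ket{\psi_1}=\sqrt{x}\ket{\xi_1}+\sqrt{1-x}\,e^{\iota\gamma}\ket{\eta}$ and $U_2\ket{\psi_1}=\sqrt{1-x}\ket{\xi_1}-\sqrt{x}\,e^{\iota\gamma}\ket{\eta}$, and analogously for $\ket{\psi_2}$ with weight $y$ and phase $\gamma+\arg\alpha$. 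The Gram conditions collapse to $x+y=1$ and $\sqrt{xy}=|\alpha|/(1+|\alpha|)$. This is solvable because $|\alpha|/(1+|\alpha|)\le\frac12$. For $\alpha\neq0$ it forces $x\in(0,1)$, consistent with the obstruction above.

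\textbf{The fallback is valid but is not an independent proof.} Your fallback is item~3 of Proposition~\ref{prop:useless_probes} via Theorem~\ref{th:joint-UET} and Lemma~\ref{lemma:simul_averaging}, followed by your convexity step. This is logically valid and not circular, since the main text proves $\chi=0$ for a degenerate $M_1$ without using the appendix. However, it simply invokes the invariant-theoretic machinery (Fillmore, Procesi/Artin, polar decomposition) that the appendix is meant to replace with an elementary explicit construction. As written, your proposal establishes the statement only by citing that result.
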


\begin{proof}
We first establish this claim for the Werner-Holevo channel $\mathcal{N}_{WH}$. We then extend the result to arbitrary $O(3)$-covariant non-mixed-unitary unital channels using the convex structure of the set of $O(3)$-covariant channels.

\noindent \textbf{For Werner-Holevo channel:} Let $\ket{\psi_1}$ and $\ket{\psi_2}$ be any two distinct qutrit pure states, and let $\alpha:=\braket{\psi_1}{\psi_2}$ denote their inner product. Since $\ket{\psi_1}$ and $\ket{\psi_2}$ are distinct, they are linearly independent, and hence $0\le \abs{\alpha}<1$.

Define the two-dimensional subspaces
\begin{align}
S_i:=\operatorname{supp}\!\left(\frac{1}{2}(\mathbb{I}_3-\ketbra{\psi_i^*}{\psi_i^*})\right)=(\operatorname{span}\{\ket{\psi_i^*}\})^\perp,\quad i=1,2.    
\end{align}
Since \(S_1\) and \(S_2\) are distinct two-dimensional subspaces of \(\mathbb{C}^3\), their intersection is one-dimensional:
\[
\dim(S_1\cap S_2)=1.
\]
Let \(\ket{\eta}\) be a unit vector spanning this intersection, \emph{i.e.},
\begin{align}
S_1\cap S_2=\operatorname{span}\{\ket{\eta}\}.
\end{align}
For a fixed pair \(\{\ket{\psi_1},\ket{\psi_2}\}\), the vector \(\ket{\eta}\) is uniquely determined up to a global phase. Next, we choose unit vectors \(\ket{\xi_i}\in S_i\) such that
\begin{align}
\braket{\xi_i}{\eta}=0,\quad i=1,2.
\end{align}
Then, for each \(i=1,2\), the set $\{\ket{\xi_i},\ket{\eta}\}$ forms an orthonormal basis of \(S_i\). Here also, for a fixed pair \(\{\ket{\psi_1},\ket{\psi_2}\}\), the vectors \(\ket{\xi_1}\) and \(\ket{\xi_2}\) are uniquely determined up to independent global phases.

Consequently,
\begin{align}\label{decom}
\frac{1}{2}\left(\mathbb{I}_3-\ketbra{\psi_i^*}{\psi_i^*}\right)=\frac{1}{2}\Bigl(\ketbra{\xi_i}{\xi_i}+\ketbra{\eta}{\eta}\Bigr),\qquad i=1,2.
\end{align}
Furthermore, Eq.~\eqref{decom} implies that
\begin{align*}
\abs{\braket{\xi_1}{\xi_2}}=\abs{\braket{\psi_1^*}{\psi_2^*}}=\abs{\braket{\psi_1}{\psi_2}}=\abs{\alpha}.
\end{align*}
Since \(\ket{\xi_1}\) and \(\ket{\xi_2}\) are defined only up to independent global phases, we may choose those phases such that
\begin{align}\label{inner}
\braket{\xi_1}{\xi_2}=\braket{\psi_1}{\psi_2}=\alpha.
\end{align}

To prove the existence of a mixed-unitary channel \(\Phi\) satisfying
\begin{align}
\Phi(\ketbra{\psi_i}{\psi_i}) &= \mathcal{N}_{WH}(\ketbra{\psi_i}{\psi_i})=\frac{1}{2}\Bigl(\mathbb{I}_3-\ketbra{\psi_i^*}{\psi_i^*}\Bigr),\qquad i=1,2,
\end{align}
we need to show that there exist unitary operators \(\{U_k\}\) and a probability distribution \(\{p_k\}\) such that
\begin{align}
U_k\ket{\psi_i} &= \ket{u_{i,k}}\in S_i, \qquad \forall\,k,\ i=1,2,\label{cond1}\\
\sum_k p_k\ketbra{u_{i,k}}{u_{i,k}} &\overset{\eqref{decom}}{=} \frac{1}{2}\Bigl(\ketbra{\xi_i}{\xi_i}+\ketbra{\eta}{\eta}\Bigr),\qquad i=1,2.\label{cond2}
\end{align}

By Lemma~\ref{Gram}, Eq.~\eqref{cond1} holds if and only if
\begin{align}
\braket{u_{1,k}}{u_{2,k}}=\braket{\psi_1}{\psi_2}=\alpha,\qquad \forall\,k.
\end{align}
Furthermore, any normalized vector \(\ket{u_{i,k}}\in S_i\) can be expressed in the basis $\{\ket{\xi_i},\ket{\eta}\}$ as
\begin{align}
\ket{u_{i,k}}=\sqrt{r_{i,k}}\ket{\xi_i}+\sqrt{1-r_{i,k}}\,e^{\iota \phi_{i,k}}\ket{\eta},\label{rep}
\end{align}
where
$r_{i,k}\in[0,1]$ and $\phi_{i,k}\in[0,2\pi)$, for all \(k\) and \(i=1,2\). Therefore, Eq.~\eqref{cond1} can equivalently be written as
\begin{align}
&\alpha = \braket{u_{1,k}}{u_{2,k}} \overset{\eqref{rep}}{=} \sqrt{r_{1,k}r_{2,k}}\braket{\xi_1}{\xi_2}+\sqrt{(1-r_{1,k})(1-r_{2,k})}\,e^{\iota(\phi_{2,k}-\phi_{1,k})},\qquad\forall k,\nonumber\\
\text{or,}~~&\alpha(1-\sqrt{r_{1,k}r_{2,k}})\overset{\eqref{inner}}{=}\sqrt{(1-r_{1,k})(1-r_{2,k})}\,e^{\iota(\phi_{2,k}-\phi_{1,k})},\qquad\forall k.\label{cond1'}
\end{align}
Eq.~\eqref{cond2} can equivalently be expressed in terms of
\(r_{i,k}\) and \(\phi_{i,k}\) as
\begin{align}
\left.
\begin{aligned}
&\sum_k p_k r_{i,k}=\frac{1}{2},\\
&\sum_k p_k \sqrt{r_{i,k}(1-r_{i,k})}\,e^{\iota \phi_{i,k}}=0
\end{aligned}
\right\}
\qquad \text{for } i=1,2.\label{cond2'}
\end{align}

We now show that for every $0\leq\abs{\alpha}<1$, there exist $\{r_{i,k},\phi_{i,k}\}$ and $\{p_k\}$ satisfying Eqs.~\eqref{cond1'} and \eqref{cond2'}. 

Let $k\in\{1,2\}$ and choose
\begin{subequations}
\begin{align}
&r_{1,1}=1-r_{1,2}=x,\\
&r_{2,1}=1-r_{2,2}=y,\\
&\phi_{i,2}=\phi_{i,1}+\pi,\quad i=1,2,\\
&\phi_{2,k}-\phi_{1,k}=\arg{(\alpha)},\quad k=1,2,\\
&\phi_{1,1}=\gamma,\\
&p_1=p_2=\frac{1}{2},
\end{align}    
\end{subequations}
where \(x,y\in[0,1]\) and \(\gamma\in[0,2\pi)\).

With these choices, Eq.~\eqref{cond2'} is automatically satisfied for all admissible values of $x$, $y$, and $\gamma$. Meanwhile, Eq.~\eqref{cond1'} reduces to
\begin{align}
    &\abs{\alpha}(1-\sqrt{xy})=\sqrt{(1-x)(1-y)},\quad(\text{for}~k=1),\nonumber\\
    \text{and}~~&\abs{\alpha}\left(1-\sqrt{(1-x)(1-y)}\right)=\sqrt{xy},\quad(\text{for}~k=2).\nonumber
\end{align}
Since $\abs{\alpha}<1$, these equations imply
\begin{align}
&\sqrt{xy}=\frac{|\alpha|}{1+|\alpha|},\label{hyp}\\
\text{and}~~&x+y=1.\label{lin}
\end{align}
For every $0\le \abs{\alpha}<1$, Eqs.~\eqref{hyp} and \eqref{lin} admit the solutions 
\begin{align}
    &x=\frac{1}{2}\left(1\pm\sqrt{1-\frac{4|\alpha|^2}{(1+|\alpha|)^2}}\right),\label{sol}\\
    \text{and}~~&y=1-x,
\end{align}
which lie in the interval $[0,1]$.

Therefore, for every pair of pure states $\{\ket{\psi_1},\ket{\psi_2}\}$, we have a mixed-unitary channel
\begin{align*}
    \Phi(X)=\frac{1}{2}\left(U_1XU_1^{\dagger}+U_2XU_2^{\dagger}\right),
\end{align*}
where the unitary operators \(U_1,U_2\) satisfy
\begin{align*}
    &U_1\ket{\psi_1}=\sqrt{x}\ket{\xi_1}+\sqrt{1-x}\,e^{\iota \gamma}\ket{\eta},\\
    &U_2\ket{\psi_1}=\sqrt{1-x}\ket{\xi_1}-\sqrt{x}\,e^{\iota \gamma}\ket{\eta},\\
    &U_1\ket{\psi_2}=\sqrt{1-x}\ket{\xi_2}+\sqrt{x}\,e^{\iota (\gamma+\arg{(\alpha)})}\ket{\eta},\\
    &U_2\ket{\psi_2}=\sqrt{x}\ket{\xi_2}-\sqrt{1-x}\,e^{\iota (\gamma+\arg{(\alpha)})}\ket{\eta},
\end{align*}
with $x$ given by Eq.~\eqref{sol} and $\gamma\in[0,2\pi)$, such that
\begin{align*}
\Phi(\ketbra{\psi_i}{\psi_i})=\frac{1}{2}\Bigl(\ketbra{\xi_i}{\xi_i}+\ketbra{\eta}{\eta}\Bigr)=\frac{1}{2}\left(\mathbb{I}_3-\ketbra{\psi_i^*}{\psi_i^*}\right),\qquad i=1,2.
\end{align*}

\noindent \textbf{For arbitrary $O(3)$-covariant channels:} Let $\mathcal{N}$ be an arbitrary $O(3)$-covariant non-mixed-unitary unital channel. Then $\mathcal{N}$ can be expressed as
\begin{align*}
\mathcal{N}=q_1 \mathcal{N}_{WH}+q_2 \Psi+q_3 \mathrm{id},
\end{align*}
where $\mathrm{id},\Psi\in\mathsf{MU}$, and
$\{q_i\}$ is a probability vector.

For a given pair of pure states $\{\ket{\psi_1},\ket{\psi_2}\}$, let $\Phi$ be a mixed-unitary channel satisfying $$\Phi(\ketbra{\psi_i}{\psi_i})=\mathcal{N}_{WH}(\ketbra{\psi_i}{\psi_i}),\qquad i=1,2.$$
Define the channel
\begin{align*}
\widetilde{\Phi}=q_1 \Phi+q_2 \Psi+q_3 \mathrm{id}.
\end{align*}
Since the set of mixed-unitary channels is convex, \(\widetilde{\Phi}\) is also mixed-unitary. Moreover,
\begin{align}
\widetilde{\Phi}(\ketbra{\psi_i}{\psi_i})=\mathcal{N}(\ketbra{\psi_i}{\psi_i}),\qquad i=1,2.
\end{align}

Therefore, for every pair of pure states $\{\ket{\psi_1},\ket{\psi_2}\}$, there exists a mixed-unitary channel $\widetilde{\Phi}$ that reproduces the action of $\mathcal{N}$ on both states. This completes the proof.
\end{proof}

\bibliographystyle{unsrt}
\bibliography{bib}
\end{document}